\tikzset{middlearrow/.style={
        decoration={markings,
            mark= at position 0.6 with {\arrow{#1}} ,
        },
        postaction={decorate}
    }
}
\theoremstyle{definition}
\newtheorem{definition}{Definition}
\newtheorem{example}[definition]{Example}
\newtheorem{construction}{Construction}
\theoremstyle{plain}
\newtheorem{theorem}{Theorem}
\newtheorem{proposition}[definition]{Proposition}
\newtheorem{lemma}[definition]{Lemma}
\newtheorem{remark}[definition]{Remark}
\newtheorem{corollary}[definition]{Corollary}
\title{Multilayer crisscross error and erasure correction}
\author{Umberto Mart{\'i}nez-Pe\~{n}as \thanks{umberto.martinez@uva.es}}
\affil{IMUVa-Mathematics Research Institute,\\University of Valladolid, Spain}
\date{}
\begin{document}

\maketitle

\begin{abstract}
In this work, multilayer crisscross error and erasures are considered, which affect entire rows and columns in the matrices of a list of matrices. To measure such errors and erasures, the multi-cover metric is introduced. Several bounds are derived, including a Singleton bound, and maximum multi-cover distance (MMCD) codes are defined as those attaining it. Duality, puncturing and shortening of linear MMCD codes are studied. It is shown that the dual of a linear MMCD code is not necessarily MMCD, and those satisfying this duality condition are defined as dually MMCD codes. Finally, some constructions of codes in the multi-cover metric are given, including dually MMCD codes, together with efficient decoding algorithms for them.
%
%Some of the main results of this work are also new for the classical cover metric and singlelayer crisscross error correction.

\textbf{Keywords:} Crisscross error correction, duality, extremal codes, sum-rank distance.
%
%\textbf{MSC:} 
\end{abstract}

\section{Introduction} \label{sec intro}

The cover metric was introduced independently in \cite{gabidulin-lattice, gabidulin-crisscross, roth} to measure the number of crisscross errors in memory chip arrays \cite{roth}. Later they proved to be useful in narrow-band power line communication (NB-PLC) smart grids \cite{kabore, yazbek}. Crisscross errors affect entire rows and columns of a matrix that encodes the stored information. Two types of codes attaining the Singleton bound for the cover metric were introduced in \cite{gabidulin-crisscross, roth}, one of them based on the rank metric and patented in \cite{crisscross-patent}. Since then, codes in the cover metric with other correcting features (e.g., probabilistic, list-decoding, etc.) were subsequently studied \cite{bitar-crisscross, kadhe-crisscross, liu-crisscross, roth-probabilistic, wachter-crisscross, welter-crisscross}.

In this work, we extend the cover metric to tuples of $ \ell $ matrices or arrays, where the length $ \ell $ is a fixed positive integer. We call the new metric the multi-cover metric. This metric is suitable to correct simultaneously a number of crisscross errors and erasures distributed over the $ \ell $ matrices, in any way and without knowledge at the decoding end. We call such errors and erasures multilayer crisscross errors and erasures. These patterns of errors and erasures occur naturally when one considers $ \ell $ independent realizations of channels or systems where crisscross errors and erasures occur. This is the case, for instance, when one considers correction accross multiple memory chip arrays simultaneously, or when one considers multiple NB-PLC smart grids. In such situations, considering a Cartesian product of codes attaining the Singleton bound for the cover metric or a concatenation with convolutional codes do not yield codes attaining the Singleton bound for the multi-cover metric. Hence, if one wants optimal codes (meaning those whose minimum multi-cover distance attains the Singleton bound), one needs to consider other methods (see the bounds and constructions in Sections \ref{sec bounds} and \ref{sec constructions}, respectively). See Fig. \ref{fig error pattern} below for a typical multilayer crisscross error pattern.

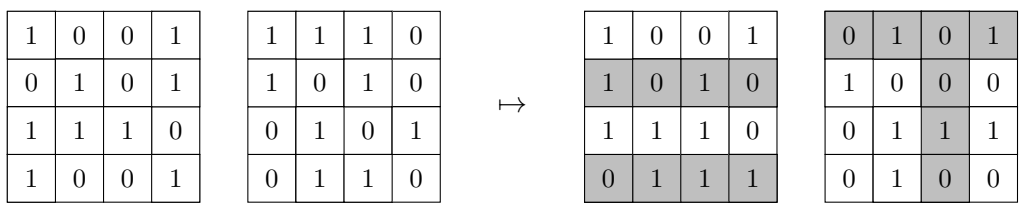
\begin{figure} [!h]
\hspace*{-1.8em}
\begin{center}
\begin{tabular}{c@{\extracolsep{1cm}}c}
\begin{tikzpicture}[
square/.style = {draw, rectangle, %regular polygon, regular polygon sides=4, 
                 minimum size=\m, outer sep=0, inner sep=0, font=\small,
                 },
                        ]
\def\m{18pt}
\def\w{4}
\def\h{4}
    \pgfmathsetmacro\uw{int(\w/2)}
    \pgfmathsetmacro\uh{int(\h/2)}

%%%%%%%%%% Matrix 1 %%%%%%%%%%

\def\i{10}
  \foreach \x in {1,...,4}
    \foreach \y in {1,...,\h}
       \node [square, fill=white]  (\x,\y) at (\x*\m + \i*\m,-\y*\m) {$ 0 $};
       
	\node [square, fill=white] () at (1*\m + \i*\m,-1*\m) {$ 1 $};	
	\node [square, fill=white] () at (1*\m + \i*\m,-3*\m) {$ 1 $};	
	\node [square, fill=white] () at (1*\m + \i*\m,-4*\m) {$ 1 $};	
	\node [square, fill=white] () at (2*\m + \i*\m,-2*\m) {$ 1 $};
	\node [square, fill=white] () at (2*\m + \i*\m,-3*\m) {$ 1 $};
	\node [square, fill=white] () at (3*\m + \i*\m,-3*\m) {$ 1 $};
	\node [square, fill=white] () at (4*\m + \i*\m,-4*\m) {$ 1 $};
	\node [square, fill=white] () at (4*\m + \i*\m,-1*\m) {$ 1 $};
	\node [square, fill=white] () at (4*\m + \i*\m,-2*\m) {$ 1 $};
	
%%%%%%%%%% Matrix 2 %%%%%%%%%%	
	
  \foreach \x in {6,...,9}
    \foreach \y in {1,...,\h}
       \node [square, fill=white]  (\x,\y) at (\x*\m + \i*\m,-\y*\m) {$ 0 $};

	\node [square, fill=white] () at (5*\m + 1*\m + \i*\m,-1*\m) {$ 1 $};	
	\node [square, fill=white] () at (5*\m + 1*\m + \i*\m,-2*\m) {$ 1 $};	
	\node [square, fill=white] () at (5*\m + 2*\m + \i*\m,-1*\m) {$ 1 $};	
	\node [square, fill=white] () at (5*\m + 2*\m + \i*\m,-3*\m) {$ 1 $};
	\node [square, fill=white] () at (5*\m + 2*\m + \i*\m,-4*\m) {$ 1 $};
	\node [square, fill=white] () at (5*\m + 3*\m + \i*\m,-1*\m) {$ 1 $};
	\node [square, fill=white] () at (5*\m + 3*\m + \i*\m,-2*\m) {$ 1 $};
	\node [square, fill=white] () at (5*\m + 3*\m + \i*\m,-4*\m) {$ 1 $};
	\node [square, fill=white] () at (5*\m + 4*\m + \i*\m,-3*\m) {$ 1 $};
	
%%%%%%%%%% Arrow %%%%%%%%%%	
	
\node [] () at (11*\m + \i*\m, -\h * \m / 2 - \m / 2) {$ \mapsto $};
	
%%%%%%%%%% Matrix 3 %%%%%%%%%%	
	
  \foreach \x in {13,...,16}
    \foreach \y in {1,...,\h}
       \node [square, fill=white]  (\x,\y) at (\x*\m + \i*\m,-\y*\m) {$ 0 $};
       
  \foreach \x in {13,...,16}
    \foreach \y in {2, 4}
       \node [square, fill=gray!50]  (\x,\y) at (\x*\m + \i*\m,-\y*\m) {$ 1 $};
       
	\node [square, fill=white] () at (12*\m + 1*\m + \i*\m,-1*\m) {$ 1 $};	
	\node [square, fill=white] () at (12*\m + 1*\m + \i*\m,-3*\m) {$ 1 $};	
	\node [square, fill=gray!50] () at (12*\m + 1*\m + \i*\m,-4*\m) {$ 0 $};	
	\node [square, fill=gray!50] () at (12*\m + 2*\m + \i*\m,-2*\m) {$ 0 $};
	\node [square, fill=white] () at (12*\m + 2*\m + \i*\m,-3*\m) {$ 1 $};
	\node [square, fill=white] () at (12*\m + 3*\m + \i*\m,-3*\m) {$ 1 $};
	\node [square, fill=gray!50] () at (12*\m + 4*\m + \i*\m,-4*\m) {$ 1 $};
	\node [square, fill=white] () at (12*\m + 4*\m + \i*\m,-1*\m) {$ 1 $};
	\node [square, fill=gray!50] () at (12*\m + 4*\m + \i*\m,-2*\m) {$ 0 $};
	
%%%%%%%%%% Matrix 4 %%%%%%%%%%	
	
  \foreach \x in {18,...,21}
    \foreach \y in {1,...,\h}
       \node [square, fill=white]  (\x,\y) at (\x*\m + \i*\m,-\y*\m) {$ 0 $};
       
  \foreach \x in {18,...,21}
    \foreach \y in {1}
       \node [square, fill=gray!50]  (\x,\y) at (\x*\m + \i*\m,-\y*\m) {$ 1 $};
       
  \foreach \x in {20}
    \foreach \y in {1,...,\h}
       \node [square, fill=gray!50]  (\x,\y) at (\x*\m + \i*\m,-\y*\m) {$ 1 $};
       
	\node [square, fill=gray!50] () at (12*\m + 5*\m + 1*\m + \i*\m,-1*\m) {$0 $};	
	\node [square, fill=white] () at (12*\m + 5*\m + 1*\m + \i*\m,-2*\m) {$ 1 $};	
	\node [square, fill=gray!50] () at (12*\m + 5*\m + 2*\m + \i*\m,-1*\m) {$ 1 $};	
	\node [square, fill=white] () at (12*\m + 5*\m + 2*\m + \i*\m,-3*\m) {$ 1 $};
	\node [square, fill=white] () at (12*\m + 5*\m + 2*\m + \i*\m,-4*\m) {$ 1 $};
	\node [square, fill=gray!50] () at (12*\m + 5*\m + 3*\m + \i*\m,-1*\m) {$ 0 $};
	\node [square, fill=gray!50] () at (12*\m + 5*\m + 3*\m + \i*\m,-2*\m) {$ 0 $};
	\node [square, fill=gray!50] () at (12*\m + 5*\m + 3*\m + \i*\m,-4*\m) {$ 0 $};
	\node [square, fill=white] () at (12*\m + 5*\m + 4*\m + \i*\m,-3*\m) {$ 1 $};
        
\end{tikzpicture}

\end{tabular}
\end{center}

\caption{Illustration of a pattern of crisscross errors throughout two $ 4 \times 4 $ binary matrices (which constitute one single codeword). The error pattern affects almost entirely (but not necessarily entirely) two rows in the first matrix and one column and one row in the second matrix. Thus in this example we are considering $ \ell = 2 $ realizations of a channel with crisscross errors, where a total $ 4 $ multilayer crisscross errors occured ($ 2 $ in the first realization and $ 2 $ in the second one). }
\label{fig error pattern}

\end{figure}

The organization and main contributions of this manuscript are as follows. In Section \ref{sec definitions}, we provide the main definitions, notations and some preliminary results. In Section \ref{sec bounds}, we provide several bounds on the parameters of codes in the multi-cover metric. These include a Singleton-like bound. Maximum multi-cover distance (MMCD) codes are then defined as those attaining it. In Section \ref{sec duality, puncturing and shortening}, we explore the duality, puncturing and shortening of linear MMCD codes. It is shown that the dual of a linear MMCD code is not MMCD in general, in contrast with other classical metrics. Dually MMCD codes are defined as those linear MMCD codes whose duals are also MMCD. We characterize dually MMCD codes in terms of information multi-covers, which generalize the notion of information sets. In Section \ref{sec constructions}, we provide several constructions of codes in the multi-cover metric, including dually MMCD codes, and provide decoders for them. Section \ref{sec conclusion} concludes the paper and provides several open problems.

\section{Definitions and basic properties} \label{sec definitions}

Denote $ \mathbb{N} = \{ 0,1,2, \ldots \} $ and fix $ q $ a prime power. We denote by $ \mathbb{F}_q $ the finite field with $ q $ elements. For positive integers $ m $ and $ n $, we denote by $ \mathbb{F}_q^{m \times n} $ the set of $ m \times n $ matrices with entries in $ \mathbb{F}_q $. Throughout this manuscript, we will fix positive integers $ \ell $, $ n_1, n_2, \ldots, n_\ell $, $ m_1, m_2, \ldots , m_\ell $, and we will consider codes as subsets of $ \prod_{i=1}^\ell \mathbb{F}_q^{m_i \times n_i} $. Linear codes will be $ \mathbb{F}_q $-linear subspaces of such a vector space.

Set $ \mathbf{m} = (m_1, m_2, \ldots, m_\ell) $ and $ \mathbf{n} = (n_1, n_2, \ldots, n_\ell) $. We denote by $ {\rm MC}(\mathbf{m},\mathbf{n}) = \{ (X_i,Y_i)_{i=1}^\ell \mid X_i \subseteq [m_i], Y_i \subseteq [n_i] \} $ the set of multi-covers in $ \prod_{i=1}^\ell \mathbb{F}_q^{m_i \times n_i} $. Finally, given $ X = (X_i, Y_i)_{i=1}^\ell \in {\rm MC}(\mathbf{m},\mathbf{n}) $, we define its size as $ |X| = \sum_{i=1}^\ell (|X_i|+|Y_i|) $ and its projection map $ \pi_X : \prod_{i=1}^\ell \mathbb{F}_q^{m_i \times n_i} \longrightarrow \prod_{i=1}^\ell \mathbb{F}_q^{(m_i - |X_i|) \times (n_i - |Y_i|)} $ by removing from $ C_i \in \mathbb{F}_q^{m_i \times n_i} $ the rows indexed by $ X_i $ and the columns indexed by $ Y_i $ in order to obtain $ \pi_X(C_1, C_2, \ldots, C_\ell) $. Given $ C \in \mathbb{F}_q^{m \times n} $ and $ (C_1, C_2, \ldots, C_\ell) \in \prod_{i=1}^\ell \mathbb{F}_q^{m_i \times n_i} $, we denote by $ C_{a,b} $ and $ C_{i,a,b} $ the entries in row $ a $ and column $ b $ of the matrices $ C $ and $ C_i $, respectively. Throughout the manuscript, we will also assume, without loss of generality, that $ n_i \leq m_i $, for $ i = 1,2, \ldots, \ell $, and $ m_1 \geq \ldots \geq m_\ell $. 

We extend the definition of the cover metric from \cite{gabidulin-lattice, roth} to the multilayer case, as follows. Note that the name ``term rank metric'' is also used for the cover metric, see \cite{gabidulin-survey}.

\begin{definition} \label{def multilayer metric}
Let $ C = (C_1, C_2, \ldots, C_\ell ) \in \prod_{i=1}^\ell \mathbb{F}_q^{m_i \times n_i} $, where $ C_i \in \mathbb{F}_q^{m_i \times n_i} $, for $ i = 1,2, \ldots, \ell $. We say that $ (X_i, Y_i)_{i=1}^\ell \in {\rm MC}(\mathbf{m},\mathbf{n}) $ is a multi-cover of $ C $ if $ (X_i,Y_i) $ is a cover of $ C_i $, which means that if $ C_{i,a,b} \neq 0 $, then $ a \in X_i $ or $ b \in Y_i $, for $ i = 1,2, \ldots, \ell $. We denote by $ {\rm MC}(C) $ the set of all multi-covers of $ C $ (note that $ {\rm MC}(C) \subseteq {\rm MC}(\mathbf{m},\mathbf{n}) $). The multi-cover weight of $ C $ is then defined as $ {\rm wt}_{MC} (C) = \min \{ |X| \mid X \in {\rm MC}(C) \} $. The multi-cover metric is defined as $ {\rm d}_{MC} : (\prod_{i=1}^\ell \mathbb{F}_q^{m_i \times n_i})^2 \longrightarrow \mathbb{N} $, where $ {\rm d}_{MC}(C,D) = {\rm wt}_{MC} (C-D) $, for $ C,D \in \prod_{i=1}^\ell \mathbb{F}_q^{m_i \times n_i} $. Given a code $ \mathcal{C} \subseteq \prod_{i=1}^\ell \mathbb{F}_q^{m_i \times n_i} $, we define its minimum multi-cover distance as $ {\rm d}_{MC}(\mathcal{C}) = \min \{ {\rm d}_{MC}(C,D) \mid C,D \in \mathcal{C}, C \neq D \} $. When considering the minimum distance of a code $ \mathcal{C} $, we implicitly assume that $ |\mathcal{C}| > 1 $.
\end{definition}

Throughout this manuscript, we will consider the column-wise Hamming weight and metric in $ \prod_{i=1}^\ell \mathbb{F}_q^{m_i \times n_i} $, and we will denote them, respectively, as $ {\rm wt}^C_H $ and $ {\rm d}^C_H $. 

Note that the cover metric \cite{gabidulin-lattice, roth} is recovered from the multi-cover metric when $ \ell = 1 $ (in particular, the multi-cover metric is indeed a metric as it is a sum of metrics). Similarly, the multi-cover metric and the Hamming metric coincide in $ \prod_{i=1}^\ell \mathbb{F}_q^{m_i \times n_i} $ if $ m_i = n_i = 1 $, for $ i = 1,2, \ldots, \ell $. In this way, the multi-cover metric interpolates between the cover metric and the Hamming metric. We also have the following trivial relation.

\begin{proposition} \label{prop MC upper bounded H}
For $ C \in \prod_{i=1}^\ell \mathbb{F}_q^{m_i \times n_i} $ and $ \mathcal{C} \subseteq \prod_{i=1}^\ell \mathbb{F}_q^{m_i \times n_i} $, we have
$$ {\rm wt}_{MC}(C) \leq {\rm wt}^C_H(C) \quad \textrm{and} \quad {\rm d}_{MC} (\mathcal{C}) \leq {\rm d}^C_H(\mathcal{C}).  $$
\end{proposition}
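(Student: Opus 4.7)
The plan is straightforward: exhibit an explicit multi-cover whose size equals the column-wise Hamming weight, which will force the minimum to be no larger.

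First I would address the weight inequality. Given $C = (C_1, \ldots, C_\ell) \in \prod_{i=1}^\ell \mathbb{F}_q^{m_i \times n_i}$, define, for each $i = 1, 2, \ldots, \ell$,
$$ X_i = \emptyset \quad \text{and} \quad Y_i = \{ b \in [n_i] \mid \text{the } b\text{-th column of } C_i \text{ is nonzero} \}. $$
Verify that $X := (X_i, Y_i)_{i=1}^\ell$ is a multi-cover of $C$: if $C_{i,a,b} \neq 0$, then in particular the $b$-th column of $C_i$ is nonzero, hence $b \in Y_i$. By Definition \ref{def multilayer metric}, $X \in {\rm MC}(C)$. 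By construction, $|X| = \sum_{i=1}^\ell |Y_i|$ equals the number of nonzero columns across all $C_i$, which is precisely ${\rm wt}^C_H(C)$. Therefore
$$ {\rm wt}_{MC}(C) = \min\{|X'| \mid X' \in {\rm MC}(C)\} \leq |X| = {\rm wt}^C_H(C). $$

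The distance inequality then follows immediately from the weight inequality applied to differences: for any $C, D \in \mathcal{C}$ with $C \neq D$,
$$ {\rm d}_{MC}(C,D) = {\rm wt}_{MC}(C-D) \leq {\rm wt}^C_H(C-D) = {\rm d}^C_H(C,D), $$
and taking the minimum over all such pairs yields ${\rm d}_{MC}(\mathcal{C}) \leq {\rm d}^C_H(\mathcal{C})$.

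There is no real obstacle here; the only point requiring a moment of care is making sure the chosen set $X$ genuinely covers every nonzero entry, which is immediate from the definition of $Y_i$. The argument could equally well have used rows instead of columns, so an analogous row-wise bound holds, but the statement as given only asserts the column-wise version.
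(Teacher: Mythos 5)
Your proof is correct and is exactly the argument the paper has in mind (the paper states this proposition without proof, calling it trivial): the multi-cover with $X_i = \varnothing$ and $Y_i$ the set of nonzero columns of $C_i$ witnesses the inequality, and the distance bound follows by applying this to differences of codewords.
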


As usual, note that if $ \mathcal{C} \subseteq \prod_{i=1}^\ell \mathbb{F}_q^{m_i \times n_i} $ is linear, then
$$ {\rm d}_{MC}(\mathcal{C}) = \min \{ {\rm wt}_{MC}(C) \mid C \in \mathcal{C} \setminus \{ 0 \} \} . $$

Next, we define decoders for multilayer crisscross errors and erasures. This defines implicitly the concept of such errors and erasures. 

\begin{definition} \label{def decoders for crisscross}
Fix positive integers $ \rho $ and $ t $, let $ \mathcal{C} \subseteq \prod_{i=1}^\ell \mathbb{F}_q^{m_i \times n_i} $ and let $ X = (X_i, Y_i)_{i=1}^\ell \in {\rm MC}(\mathbf{m},\mathbf{n}) $ be such that $ |X| = \rho $. A $ t $-error and $ \rho $-erasure-correcting decoder for the code $ \mathcal{C} $ and the erasure pattern $ X $ is a map $ D : \prod_{i=1}^\ell \mathbb{F}_q^{m_i \times n_i} \longrightarrow \prod_{i=1}^\ell \mathbb{F}_q^{(m_i - |X_i|) \times (n_i - |Y_i|)} $ such that $ D(\pi_X(C) + E) = C $, for all $ C \in \mathcal{C} $ and all $ E \in \prod_{i=1}^\ell \mathbb{F}_q^{(m_i - |X_i|) \times (n_i - |Y_i|)} $ with $ {\rm wt}_{MC}(E) \leq t $. In general, the decoder $ D $ will depend on $ \mathcal{C} $, $ X $ and $ t $ (which are known by the decoder), but we do not write this dependency for brevity.
\end{definition} 

We have the following characterization of the multilayer crisscross error and erasure-correction capability of a code. This result implies that the multi-cover metric is the right metric to determine the capability of a code to correct a given number of multilayer crisscross errors and erasures.

\begin{proposition}
Fix positive integers $ t $ and $ \rho $ and fix a code $ \mathcal{C} \subseteq \prod_{i=1}^\ell \mathbb{F}_q^{m_i \times n_i} $. There exists a $ t $-error and $ \rho $-erasure-correcting decoder for $ \mathcal{C} $ and for every $ X \in {\rm MC}(\mathbf{m}, \mathbf{n}) $ with $ |X| = \rho $ if, and only if, $ 2t + \rho < {\rm d}_{MC} (\mathcal{C}) $. 
\end{proposition}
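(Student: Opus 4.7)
\medskip

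\noindent \textbf{Proof plan.} The result is a multi-cover analogue of the classical statement ``unique decoding up to $2t+\rho < d$''. I would treat the two implications separately.

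For the \emph{sufficiency} direction ($2t+\rho < {\rm d}_{MC}(\mathcal{C}) \Rightarrow$ a decoder exists), I would fix $X = (X_i,Y_i)_{i=1}^\ell$ with $|X|=\rho$ and define $D$ to be a minimum multi-cover distance decoder on $\pi_X(\mathcal{C})$: given a received word $R \in \prod_{i=1}^\ell \mathbb{F}_q^{(m_i-|X_i|)\times(n_i-|Y_i|)}$, output any $C \in \mathcal{C}$ such that ${\rm d}_{MC}(\pi_X(C),R) \leq t$ (and anything otherwise). The well-definedness reduces to showing that such a $C$ is unique whenever it exists. If $C_1,C_2 \in \mathcal{C}$ both satisfy ${\rm d}_{MC}(\pi_X(C_j),R)\leq t$, the triangle inequality gives ${\rm d}_{MC}(\pi_X(C_1),\pi_X(C_2)) \leq 2t$, so $\pi_X(C_1-C_2)$ has a multi-cover $Z'=(X_i',Y_i')_{i=1}^\ell$ of size at most $2t$, referring to the punctured coordinates. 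The key step is a \emph{lifting lemma}: re-indexing $Z'$ back to rows/columns of the original matrices (those rows/columns outside $X_i$, $Y_i$) yields index sets $X_i^\ast, Y_i^\ast$, disjoint from $X_i, Y_i$ respectively, such that $(X_i \cup X_i^\ast, Y_i \cup Y_i^\ast)_{i=1}^\ell$ is a multi-cover of $C_1-C_2$ of size $\rho + |Z'| \leq \rho + 2t < {\rm d}_{MC}(\mathcal{C})$, forcing $C_1 = C_2$. For $C \in \mathcal{C}$ and any error $E$ with ${\rm wt}_{MC}(E)\leq t$, taking $R = \pi_X(C)+E$ shows $C$ itself satisfies the decoding criterion, so $D(\pi_X(C)+E)=C$, as required by Definition \ref{def decoders for crisscross}.

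For the \emph{necessity} direction, I would argue the contrapositive. Assume $2t+\rho \geq {\rm d}_{MC}(\mathcal{C})$ and pick $C, C' \in \mathcal{C}$ with $C \neq C'$ and ${\rm d}_{MC}(C,C') \leq 2t+\rho$. Let $Z$ be a minimum multi-cover of $C-C'$. Since $|Z| \leq 2t+\rho$, I can choose a sub-multi-cover $X \subseteq Z$ of size exactly $\min(\rho,|Z|)$, padding with arbitrary extra rows/columns if $|Z| < \rho$ to reach size $\rho$. Then the residual multi-cover witnesses ${\rm wt}_{MC}(\pi_X(C-C')) \leq |Z| - |X| \leq 2t$. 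The crucial ingredient here is a \emph{splitting lemma}: any matrix tuple $e$ with ${\rm wt}_{MC}(e) = w$ decomposes as $e = f_1 + f_2$ with ${\rm wt}_{MC}(f_j) \leq \lceil w/2 \rceil$ for $j=1,2$; this is proved by taking a minimum multi-cover of $e$, partitioning it into two roughly equal halves, and assigning each nonzero entry of $e$ to whichever half covers it. Applying this to $e = \pi_X(C'-C)$ produces $f_1,f_2$ of multi-cover weight at most $t$, and then the received word $\pi_X(C)+f_1 = \pi_X(C')-f_2$ is within multi-cover distance $t$ of both $C$ and $C'$, so no decoder $D$ can simultaneously return $C$ on input $\pi_X(C)+f_1$ and $C'$ on input $\pi_X(C')+(-f_2)$, contradicting Definition \ref{def decoders for crisscross}.

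The two steps I expect to be slightly delicate are (i) the lifting lemma, which is essentially bookkeeping but needs to preserve disjointness from the erasure indices so that sizes add, and (ii) the splitting lemma, which requires checking that each nonzero entry of $e$ is covered by at least one of the two halves and hence can be assigned consistently. Everything else is the classical triangle-inequality argument transported to the multi-cover setting via Definition \ref{def multilayer metric}.
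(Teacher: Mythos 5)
Your proof is correct and follows essentially the same route as the paper's: the same lifting inequality $ {\rm wt}_{MC}(C-D) \leq {\rm wt}_{MC}(\pi_X(C-D)) + |X| $ underlies your sufficiency direction (which the paper phrases contrapositively, exhibiting two confusable received words instead of building a minimum-distance decoder), and for necessity the paper likewise splits a multi-cover of $ C-D $ of size at most $ 2t+\rho $ into $ X \cup X' \cup X'' $ with $ |X'|=|X''|=t $ and constructs error words supported on $ X' $ and $ X'' $. Your splitting lemma (assigning each nonzero entry of $ \pi_X(C-C') $ to one of two halves of the residual cover) is just a slightly more explicit rendering of the paper's construction of $ E' $ and $ F' $; the substance is identical.
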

\begin{proof}
First, assume that there exists $ X \in {\rm MC}(\mathbf{m}, \mathbf{n}) $ such that $ |X| = \rho $ and there is no $ t $-error and $ \rho $-erasure-correcting decoder for $ \mathcal{C} $ and $ X $. Hence, there exist distinct $ C,D \in \mathcal{C} $ and $ E,F \in \prod_{i=1}^\ell \mathbb{F}_q^{(m_i - |X_i|) \times (n_i - |Y_i|)} $ such that $ \pi_X(C) + E = \pi_X(D) + F $, $ {\rm wt}_{MC}(E) \leq t $ and $ {\rm wt}_{MC}(F) \leq t $. We have that
$$ {\rm wt}_{MC}(C-D) \leq {\rm wt}_{MC} (\pi_X(C-D)) + |X| = {\rm wt}_{MC} (E-F) + \rho \leq 2t + \rho. $$ 
Hence $ 2t + \rho \geq {\rm d}_{MC} (\mathcal{C}) $.

Conversely, assume that $ 2t + \rho \geq {\rm d}_{MC} (\mathcal{C}) $. Then there exist distinct $ C,D \in \mathcal{C} $ such that $ {\rm wt}_{MC}(C-D) \leq 2t + \rho $. Therefore, there exist $ X, X^\prime, X^{\prime \prime} \in {\rm MC}(\mathbf{m},\mathbf{n}) $ such that $ |X| = \rho $, $ |X^\prime| = |X^{\prime \prime}| = t $ and $ X \cup X^\prime \cup X^{\prime \prime} \in {\rm MC}(C-D) $, where $ X \cup X^\prime \cup X^{\prime \prime} $ is defined in the obvious way. 

Let $ E^\prime, F^\prime \in \prod_{i=1}^\ell \mathbb{F}_q^{m_i \times n_i} $ be formed by the entries of $ -C $ and $ -D $ inside $ X^\prime $ and $ X^{\prime \prime} $ and zero outside, respectively. Finally, let $ E = \pi_X (E^\prime) $ and $ F = \pi_X(F^\prime) $. The reader may verify that $ \pi_X(C) + E = 0 = \pi_X(D) + F $, $ {\rm wt}_{MC}(E) \leq t $ and $ {\rm wt}_{MC}(F) \leq t $. Hence there is no $ t $-error and $ \rho $-erasure-correcting decoder for $ \mathcal{C} $ and $ X $.
\end{proof}

\section{Bounds on code parameters} \label{sec bounds}

In this section, we provide bounds on different multi-cover-metric parameters of a code. We start by providing a Singleton bound for the multi-cover metric. We note that it holds in the same way for the Hamming metric by columns, which constitutes a strengthening of \cite[Th. 3.2]{alberto-fundamental}. However, it may be proven as in that article, thus we omit the proof for brevity. The case of the multi-cover metric is trivial from the case of the Hamming metric by columns by Proposition \ref{prop MC upper bounded H}.

\begin{theorem} \label{th singleton}
Let $ \mathcal{C} \subseteq \prod_{i=1}^\ell \mathbb{F}_q^{m_i \times n_i} $ be a code, and set $ d = {\rm d}_H^C(\mathcal{C}) $ or $ d = {\rm d}_{MC}(\mathcal{C}) $. Let $ \delta $ and $ j $ be the unique integers such that $ d-1 = \sum_{i=1}^{j-1} n_i + \delta $ and $ 0 \leq \delta \leq n_j - 1 $. Then
\begin{equation}
|\mathcal{C}| \leq q^{\sum_{i=j}^\ell m_i n_i - m_j \delta}.
\label{eq singleton bound}
\end{equation}
\end{theorem}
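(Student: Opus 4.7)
The strategy is to follow the author's hint: prove the statement first for the column-wise Hamming metric and then deduce the multi-cover version from Proposition \ref{prop MC upper bounded H}. For the reduction, the key observation I would establish is that the exponent $E(d) = \sum_{i=j(d)}^{\ell} m_i n_i - m_{j(d)} \delta(d)$ on the right-hand side of (\ref{eq singleton bound}) is non-increasing in $d$: each unit increment of $d$ either increases $\delta$ by one (dropping $E$ by $m_j$) or, at the boundary $\delta = n_j - 1$, advances $j$ to $j+1$ and resets $\delta$ to $0$ (dropping $E$ again by $m_j$, since $m_j n_j - m_j(n_j-1) = m_j$). Because $m_1 \geq \cdots \geq m_\ell \geq 1$, this monotonicity means that if the bound is established for $d = \mathrm{d}^C_H(\mathcal{C})$, it automatically holds for the smaller or equal value $d = \mathrm{d}_{MC}(\mathcal{C})$.

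\textbf{Puncturing.} For the Hamming case, I would use the classical puncturing technique. Given $d = \mathrm{d}^C_H(\mathcal{C})$, write $d - 1 = \sum_{i=1}^{j-1} n_i + \delta$ with $0 \leq \delta \leq n_j - 1$; since $d - 1 < \sum_{i=1}^{\ell} n_i$ (there are at most $\sum_i n_i$ columns in total), we have $j \leq \ell$. Define the puncturing map
\[
\pi : \prod_{i=1}^\ell \mathbb{F}_q^{m_i \times n_i} \longrightarrow \mathbb{F}_q^{m_j \times (n_j - \delta)} \times \prod_{i=j+1}^\ell \mathbb{F}_q^{m_i \times n_i}
\]
that discards every column of the first $j-1$ components together with any fixed choice of $\delta$ columns from the $j$-th component, keeping the remaining components intact. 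Exactly $d-1$ columns are removed, so the codomain of $\pi$ has cardinality $q^{m_j(n_j - \delta) + \sum_{i=j+1}^\ell m_i n_i} = q^{\sum_{i=j}^\ell m_i n_i - m_j \delta}$.

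\textbf{Injectivity and conclusion.} It remains to show that $\pi$ is injective on $\mathcal{C}$. If $C, D \in \mathcal{C}$ satisfy $\pi(C) = \pi(D)$, then $C$ and $D$ coincide on every column outside the punctured set, so $\mathrm{d}^C_H(C, D) \leq d - 1 < \mathrm{d}^C_H(\mathcal{C})$, which by definition of minimum distance forces $C = D$. Hence $|\mathcal{C}| = |\pi(\mathcal{C})| \leq q^{\sum_{i=j}^\ell m_i n_i - m_j \delta}$, which proves (\ref{eq singleton bound}) in the Hamming case and, combined with the first paragraph, in the multi-cover case as well. There is no genuine obstacle: this is the standard puncturing proof of the Singleton bound, and the only careful bookkeeping lies in verifying the monotonicity of $E(d)$ used for the reduction and in confirming that the normalization $\delta \leq n_j - 1$ keeps $j$ within $\{1, \ldots, \ell\}$.
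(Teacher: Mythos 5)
Your proof is correct and follows exactly the route the paper sketches: the column-wise Hamming case via the standard puncturing/projection argument (as in the cited source for Theorem \ref{th singleton}) and the multi-cover case via Proposition \ref{prop MC upper bounded H}. The explicit check that the exponent $\sum_{i=j}^{\ell} m_i n_i - m_j \delta$ is non-increasing in $d$ is the one detail the paper leaves implicit when it calls this reduction trivial, and you supply it correctly.
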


As a consequence, we define maximum multi-cover distance (MMCD) codes as follows.

\begin{definition}
We say that a code $ \mathcal{C} \subseteq \prod_{i=1}^\ell \mathbb{F}_q^{m_i \times n_i} $ is maximum multi-cover distance (MMCD) if equality holds in (\ref{eq singleton bound}).
\end{definition}

In the case of equal numbers of rows, we obtain a more classical form of the Singleton bound from (\ref{eq singleton bound}), as shown in the following proposition. In fact, for equal rows, any bound that holds for the Hamming metric also holds for the multi-cover metric by Proposition \ref{prop MC upper bounded H}. We collect several of these bounds in the following proposition.

\begin{proposition}
Assume that $ m = m_1 = \ldots = m_\ell $ and set $ N = n_1 + n_2 + \cdots + n_\ell $. Let $ \mathcal{C} \subseteq \prod_{i=1}^\ell \mathbb{F}_q^{m \times n_i} $ be a code, and set $ d = {\rm d}_{MC}(\mathcal{C}) $. The following bounds hold.
\begin{enumerate}
\item
Singleton bound:
$$ |\mathcal{C}| \leq q^{m (N - d + 1)} . $$
\item
Hamming bound:
$$ |\mathcal{C}| \leq \left\lfloor \frac{q^{mN}}{ \sum_{j=0}^{\lfloor (d-1)/2 \rfloor} \binom{N}{j} (q^m - 1)^j } \right\rfloor . $$
\item
Plotkin bound: If $ d > (q^m-1)N/q^m $, then
$$ |\mathcal{C}| \leq \left\lfloor \frac{q^{m} d}{ q^m d - (q^m - 1)N } \right\rfloor . $$
\item
Elias bound: For any $ w = 0,1, \ldots , N(q^m-1)/q^m $,
$$ |\mathcal{C}| \leq \left\lfloor \frac{Nd (q^m-1)}{ q^mw^2 + N(d - 2w)(q^m-1) } \cdot \frac{q^{mN}}{ \sum_{j=0}^w \binom{N}{j} (q^m - 1)^j } \right\rfloor . $$
\end{enumerate}
\end{proposition}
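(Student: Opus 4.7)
The plan is to reduce each of the four items to the corresponding classical bound for an ordinary $q^m$-ary code of length $N$ in the Hamming metric, and then transfer the conclusion from the column-wise Hamming distance to the multi-cover distance. Identifying each tuple $(C_1, \ldots, C_\ell) \in \prod_{i=1}^\ell \mathbb{F}_q^{m \times n_i}$ with a vector of length $N = \sum_i n_i$ over the alphabet $\mathbb{F}_q^m$, by reading columns left to right, turns $(\prod_i \mathbb{F}_q^{m \times n_i}, {\rm d}_H^C)$ into the ordinary Hamming space $(\mathbb{F}_q^m)^N$ of total cardinality $q^{mN}$. Under this isometry, the classical Singleton, Hamming, Plotkin and Elias bounds applied to $\mathcal{C}$ yield exactly the four displayed inequalities with $d$ replaced by $ {\rm d}_H^C(\mathcal{C})$.

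To finish, I would invoke Proposition \ref{prop MC upper bounded H}, which gives $ {\rm d}_{MC}(\mathcal{C}) \leq {\rm d}_H^C(\mathcal{C}) $, combined with the observation that each of the four right-hand sides is a non-increasing function of the distance parameter $d$: the Singleton exponent $m(N-d+1)$ is strictly decreasing in $d$; the Hamming denominator sums $\binom{N}{j}(q^m-1)^j$ over $j \leq \lfloor (d-1)/2 \rfloor$ and so is non-decreasing in $d$; the Plotkin rational function $q^m d / (q^m d - (q^m-1) N)$ has negative derivative on its admissible range $d > (q^m-1)N/q^m$; and the only $d$-dependent factor of the Elias bound, namely $N d (q^m-1) / (q^m w^2 + N(d-2w)(q^m-1))$, is non-increasing in $d$ because its $d$-free summand $q^m w^2 - 2 w N (q^m - 1)$ is non-positive on the range $w \leq N(q^m-1)/q^m$ (a short calculation from $q^m w \leq N(q^m-1)$). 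Combining these two ingredients yields the four bounds with $d = {\rm d}_{MC}(\mathcal{C})$.

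The hypotheses attached to the Plotkin and Elias items propagate the right way: if they are assumed for $d = {\rm d}_{MC}(\mathcal{C})$, they hold a fortiori for the larger value $ {\rm d}_H^C(\mathcal{C})$, so the classical Hamming-metric bound legitimately applies and the monotonicity step is valid. I expect no substantial obstacle; the one calculation requiring real attention is the Elias monotonicity (in particular the sign of $q^m w^2 - 2 w N (q^m-1)$), while everything else is an immediate consequence of standard $q^m$-ary bounds and Proposition \ref{prop MC upper bounded H}. The proposition is essentially a formalization of the remark preceding its statement, that for equal row numbers any Hamming-metric bound transfers to the multi-cover metric via Proposition \ref{prop MC upper bounded H}.
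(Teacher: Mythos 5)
Your proposal is correct and is exactly the argument the paper intends: the proposition is stated without a formal proof precisely because, as the surrounding text notes, for equal row numbers each classical $q^m$-ary Hamming-metric bound on $(\mathbb{F}_q^m)^N$ transfers via Proposition \ref{prop MC upper bounded H}. Your explicit verification of the monotonicity in $d$ of each right-hand side (including the sign check $q^m w^2 - 2wN(q^m-1) \leq 0$ for the Elias factor) is the one detail the paper leaves implicit, and it checks out.
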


Next we explore bounds involving the size of a ball in the multi-cover metric. Define
$$ \mathcal{B}_r(C) = \left\lbrace \left. D \in \prod_{i=1}^\ell \mathbb{F}_q^{m_i \times n_i} \right| {\rm d}_{MC}(C,D) \leq r \right\rbrace , $$
where $ r \in \mathbb{N} $ and $ C \in \prod_{i=1}^\ell \mathbb{F}_q^{m_i \times n_i} $. We also denote $ B_r = |\mathcal{B}_r(0)| $, for $ r \in \mathbb{N} $. The task of finding exactly the size $ B_r $ seems challenging. In Subsection \ref{subsec size ball}, we will provide upper and lower bounds on its size that coincide asymptotically when $ q $ tends to infinity.

We start by a sphere-packing bound, which is straightforward and proven in the same way as for any metric.

\begin{theorem} \label{th sphere-packing bound}
For a code $ \mathcal{C} \subseteq \prod_{i=1}^\ell \mathbb{F}_q^{m_i \times n_i} $, setting $ r = \lfloor ({\rm d}_{MC}(\mathcal{C}) - 1)/2 \rfloor $, we have that
\begin{equation}
B_r \cdot |\mathcal{C}| \leq q^{\sum_{i=1}^\ell m_in_i}. 
\label{eq sphere-packing bound}
\end{equation}
\end{theorem}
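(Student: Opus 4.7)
The plan is to execute the classical sphere-packing argument, which goes through verbatim for any translation-invariant metric. Since $\mathrm{d}_{MC}$ is a metric (remarked earlier in the paper as it is a sum of metrics $\mathrm{d}_{MC}(C,D)=\mathrm{wt}_{MC}(C-D)$ over the $\ell$ layers, each being a cover metric), and since the metric arises from a weight on the additive group $\prod_{i=1}^\ell \mathbb{F}_q^{m_i\times n_i}$, the ball of radius $r$ around any point is a translate of the ball around $0$, so $|\mathcal{B}_r(C)| = B_r$ for every $C$.

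The core step is to show that the balls $\{\mathcal{B}_r(C)\}_{C\in\mathcal{C}}$ are pairwise disjoint. Suppose $C,D\in\mathcal{C}$ are distinct and $E\in\mathcal{B}_r(C)\cap\mathcal{B}_r(D)$. By the triangle inequality for $\mathrm{d}_{MC}$,
\[
\mathrm{d}_{MC}(\mathcal{C}) \leq \mathrm{d}_{MC}(C,D) \leq \mathrm{d}_{MC}(C,E)+\mathrm{d}_{MC}(E,D)\leq 2r \leq \mathrm{d}_{MC}(\mathcal{C})-1,
\]
using $r=\lfloor (\mathrm{d}_{MC}(\mathcal{C})-1)/2\rfloor$ for the last inequality, which is a contradiction.

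With disjointness in hand, the union $\bigsqcup_{C\in\mathcal{C}}\mathcal{B}_r(C)$ sits inside the ambient space $\prod_{i=1}^\ell\mathbb{F}_q^{m_i\times n_i}$ of cardinality $q^{\sum_{i=1}^\ell m_in_i}$, and counting gives
\[
B_r\cdot |\mathcal{C}| = \sum_{C\in\mathcal{C}}|\mathcal{B}_r(C)| = \left|\bigsqcup_{C\in\mathcal{C}}\mathcal{B}_r(C)\right| \leq q^{\sum_{i=1}^\ell m_in_i},
\]
which is exactly (\ref{eq sphere-packing bound}). There is no real obstacle here; the only ingredients needed are the triangle inequality and translation invariance of $\mathrm{d}_{MC}$, both of which are already established (or immediate) by the point in the paper where this theorem appears. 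The author's remark that the proof is ``straightforward and proven in the same way as for any metric'' anticipates exactly this outline.
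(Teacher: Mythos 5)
Your proof is correct and is exactly the standard sphere-packing argument that the paper invokes when it says the bound is ``straightforward and proven in the same way as for any metric'': disjointness of radius-$r$ balls via the triangle inequality, equal ball sizes via translation invariance of $\mathrm{wt}_{MC}$, and a counting step in the ambient space. Nothing further is needed.
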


As a consequence, we may define perfect codes for the multi-cover metric as follows.

\begin{definition}
We say that a code $ \mathcal{C} \subseteq \prod_{i=1}^\ell \mathbb{F}_q^{m_i \times n_i} $ is a perfect code for the multi-cover metric if equality holds in (\ref{eq sphere-packing bound}).
\end{definition}

We will next compute exactly the value $ B_1 $, which is
\begin{equation}
B_1 = 1 + \sum_{i=1}^\ell \left( n_i \left( q^{m_i} - 1 \right) + m_i \left( q^{n_i} - 1 \right) - m_in_i (q-1) \right).
\label{eq ball size r=1}
\end{equation}
We briefly prove (\ref{eq ball size r=1}). Fix one of the $ \ell $ matrix spaces $ \mathbb{F}_q^{m_i \times n_i} $. To count the number of matrices of cover weight $ 1 $, we may add the number of matrices with only one non-zero column and the number of matrices with only one non-zero row, and then subtract the size of the intersection of these two sets (i.e., the number of matrices with only one non-zero entry). These three numbers are, respectively, $ n_i \left( q^{m_i} - 1 \right) $, $ m_i \left( q^{n_i} - 1 \right) $ and $ m_in_i (q-1) $, hence (\ref{eq ball size r=1}) is proven.

Similarly to \cite[Th. 3.7]{alberto-fundamental}, we may obtain a projective sphere-packing bound by making use of projections. The proof is analogous thus omitted. 

\begin{theorem} \label{th proj sphere packing bound}
Let $ \mathcal{C} \subseteq \prod_{i=1}^\ell \mathbb{F}_q^{m_i \times n_i} $ with $ |\mathcal{C}| \geq 2 $, $ d = {\rm d}_{MC}(\mathcal{C}) $ and $ 3 \leq d \leq n_1 + n_2 + \cdots n_\ell $. Let $ j $ and $ \delta $ be the unique integers such that $ 1 \leq j \leq \ell - 1 $, $ 1 \leq \delta \leq n_{j+1} -1 $, and $ d-3 = \sum_{i=1}^j n_i + \delta $. Define $ n^\prime_{j+1} = n_{j+1} - \delta $ and $ n^\prime_i = n_i $ if $ i \neq j+1 $. Then
$$ |\mathcal{C}| \leq \left\lfloor \frac{q^{\sum_{i=j+1}^\ell m_in^\prime_i}}{1 + \sum_{i=j+1}^\ell \left( n^\prime_{j+1}(q^{m_j} - 1) + m_{j+1}(q^{n^\prime_{j+1}} - 1) - m_{j+1} n^\prime_{j+1}(q-1) \right)} \right\rfloor . $$
\end{theorem}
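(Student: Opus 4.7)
The plan is to reduce to the standard sphere-packing bound (Theorem \ref{th sphere-packing bound}) after puncturing $\mathcal{C}$ by a carefully chosen multi-cover $X$ of size $d-3$, so that the punctured code $\pi_X(\mathcal{C})$ has minimum multi-cover distance at least $3$ and hence correcting radius $1$.

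First, I would choose the puncturing multi-cover $X = (X_i, Y_i)_{i=1}^\ell \in {\rm MC}(\mathbf{m}, \mathbf{n})$ dictated by the decomposition $d - 3 = \sum_{i=1}^j n_i + \delta$: namely $X_i = \emptyset$ for all $i$, $Y_i = [n_i]$ for $1 \leq i \leq j$, $Y_{j+1}$ an arbitrary subset of $[n_{j+1}]$ of size $\delta$, and $Y_i = \emptyset$ for $i > j+1$. Then $|X| = \sum_{i=1}^j n_i + \delta = d-3$, and the codomain of $\pi_X$ collapses to $\prod_{i=j+1}^\ell \mathbb{F}_q^{m_i \times n_i'}$ (the first $j$ factors have zero columns and may be discarded).

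Second, I would establish the projection-weight inequality
$$ {\rm wt}_{MC}(C) \leq {\rm wt}_{MC}(\pi_X(C)) + |X| $$
for every $C \in \prod_{i=1}^\ell \mathbb{F}_q^{m_i \times n_i}$. This is a bookkeeping argument: given a minimum multi-cover of $\pi_X(C)$, re-index its retained row and column labels back into the ambient indexing and take its union with $X$; the result is a multi-cover of $C$ of the stated size. This is exactly the subadditive-style inequality already used in the proof of the error/erasure-correction criterion earlier in the paper. Applied to pairwise differences $C-D$ of distinct codewords of $\mathcal{C}$, it yields ${\rm wt}_{MC}(\pi_X(C-D)) \geq d - (d-3) = 3$, so $\pi_X$ is injective on $\mathcal{C}$ and $\pi_X(\mathcal{C})$ has minimum multi-cover distance at least $3$.

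Third, I would apply Theorem \ref{th sphere-packing bound} to $\pi_X(\mathcal{C}) \subseteq \prod_{i=j+1}^\ell \mathbb{F}_q^{m_i \times n_i'}$ with correcting radius $r=1$. The ambient cardinality is $q^{\sum_{i=j+1}^\ell m_i n_i'}$, and the ball of radius $1$ in this reduced ambient space has size obtained from formula (\ref{eq ball size r=1}) with parameters $(m_{j+1}, \ldots, m_\ell)$ and $(n_{j+1}', \ldots, n_\ell')$. Combining the sphere-packing inequality with $|\pi_X(\mathcal{C})| = |\mathcal{C}|$ and taking the floor (as $|\mathcal{C}|$ is an integer) yields the stated bound. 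The only genuinely delicate step is the projection-weight inequality, since it requires careful handling of the index translation for the partially punctured $(j+1)$-th matrix block; the remaining steps are routine packaging.
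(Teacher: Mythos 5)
Your proof is correct and is exactly the projection-plus-sphere-packing argument the paper has in mind: it omits the proof as ``analogous'' to the cited reference \cite[Th.~3.7]{alberto-fundamental}, and the intended route is precisely to puncture on a multi-cover of size $d-3$, use the subadditivity $ {\rm wt}_{MC}(C) \leq {\rm wt}_{MC}(\pi_X(C)) + |X| $ to get injectivity and minimum distance at least $3$, and then apply Theorem \ref{th sphere-packing bound} with $r=1$ in the reduced ambient space. Note only that your computation yields the denominator $ 1 + \sum_{i=j+1}^\ell \left( n^\prime_{i}(q^{m_i} - 1) + m_{i}(q^{n^\prime_{i}} - 1) - m_{i} n^\prime_{i}(q-1) \right) $, with the summation index actually occurring in the summand, which is evidently what the displayed formula in the theorem intends.
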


Combining Theorems \ref{th singleton} and \ref{th proj sphere packing bound}, we may obtain a bound on $ \ell $ for MMCD codes in the case of equal rows and equal columns.

\begin{theorem} \label{th bound on ell for MMCD}
Assume that $ m = m_1 = \ldots = m_\ell $ and $ n = n_1 = \ldots = n_\ell $ (thus $ n \leq m $). Assume that there exists an MMCD code $ \mathcal{C} \subseteq (\mathbb{F}_q^{m \times n})^\ell $ with $ d = {\rm d}_{MC}(\mathcal{C}) $. Let $ 0 \leq \delta \leq n-1 $ be the remainder of the Euclidean division of $ d-3 $ by $ n $. Then
\begin{equation}
\ell \leq \left\lfloor \frac{q^{2m} - 1 - m(q^{n-\delta} - 1) - (n-\delta)(q^m-1) + m(n-\delta)(q-1)}{m(q^n-1) + n (q^m-1) - mn(q-1)} \right\rfloor + \left\lfloor \frac{d-3}{n} \right\rfloor + 1.
\label{eq bound on ell 1}
\end{equation}
Now assume that $ m=n $. If $ q \geq 4 $ and $ n \geq 2 $, or if $ q = 3 $ and $ n \geq 3 $, or if $ q = 2 $ and $ n \geq 4 $, then the upper bound (\ref{eq bound on ell 1}) is tighter than (and thus implies) the bound 
\begin{equation}
\ell \leq \left\lfloor \frac{2q^n}{3n} \right\rfloor + \left\lfloor \frac{d-3}{n} \right\rfloor + 1.
\label{eq bound on ell 2}
\end{equation}
\end{theorem}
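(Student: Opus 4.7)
The plan is to combine the Singleton bound (Theorem \ref{th singleton}), forced to equality by the MMCD hypothesis, with the projective sphere-packing bound (Theorem \ref{th proj sphere packing bound}). In the equal-parameters case $m_i=m$, $n_i=n$, the exponent $\sum_{i=j}^{\ell}m_in_i-m_j\delta'$ of Theorem \ref{th singleton} telescopes to $m(\ell n-d+1)$ after using $d-1=(j-1)n+\delta'$, so MMCD gives $|\mathcal{C}|=q^{m(\ell n-d+1)}$. Writing the Euclidean division $d-3=jn+\delta$ with $0\le\delta\le n-1$, the same specialization turns the numerator exponent in Theorem \ref{th proj sphere packing bound} into $m(\ell n-jn-\delta)=m(\ell n-d+3)$. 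Cancelling $q^{m(\ell n-d+1)}$ on both sides forces the denominator $D$ in Theorem \ref{th proj sphere packing bound} to satisfy $D\le q^{2m}$. Isolating the unique summand of $D$ that depends on $\ell$, namely $(\ell-j-1)\bigl[n(q^m-1)+m(q^n-1)-mn(q-1)\bigr]$, and dividing through, I recover exactly (\ref{eq bound on ell 1}) after recognising $j+1=\lfloor(d-3)/n\rfloor+1$.

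For the second bound, I set $m=n$ and show that the fraction inside the first floor of (\ref{eq bound on ell 1}) is at most $2q^n/(3n)$; monotonicity of $\lfloor\cdot\rfloor$ then produces (\ref{eq bound on ell 2}). Parameterising the numerator by $k=n-\delta\in\{1,\ldots,n\}$, it reads $q^{2n}-1-nq^k+n-k(q^n-1)+nk(q-1)$, whose derivative in $k$ is $-nq^k\ln q-(q^n-1)+n(q-1)$. This is negative because $q^n-1\ge n(q-1)$ for $q\ge 2$, $n\ge 1$ (the geometric-series identity). Hence the numerator is maximised at $k=1$, with value $q^n(q^n-1)$, and the claim $q^n(q^n-1)/\bigl[n\bigl(2(q^n-1)-n(q-1)\bigr)\bigr]\le 2q^n/(3n)$ is equivalent, after cross-multiplication, to $q^n-1\ge 2n(q-1)$, i.e.\ $\sum_{i=0}^{n-1}q^i\ge 2n$. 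This inequality is monotone in both $q$ and $n$, so it suffices to verify the three minimal regimes $(q,n)\in\{(4,2),(3,3),(2,4)\}$, which is immediate.

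The main obstacle is bookkeeping rather than depth. The Euclidean-division convention $0\le\delta\le n-1$ used above does not exactly match the range $1\le\delta\le n_{j+1}-1$ appearing in Theorem \ref{th proj sphere packing bound}, so the boundary case $\delta=0$ requires rewriting $d-3=(j-1)n+n$ before invoking the projective bound; this does not change the resulting formula since with $n'_{j+1}=n$ the $(j{+}1)$-st summand coincides with the generic summand used for $i>j+1$. Secondly, the monotonicity of the numerator in $k$ becomes only non-strict when $q^n-1=n(q-1)$, which forces $n=1$ (excluded by the hypotheses), so strict decrease holds throughout all considered regimes and the comparison of floors is clean.
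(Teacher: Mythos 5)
Your proposal is correct, and the first half follows the paper's own route exactly: force equality in Theorem \ref{th singleton}, substitute into Theorem \ref{th proj sphere packing bound} with numerator exponent $m(\ell n-d+3)$, cancel, and isolate the $\ell$-dependent summand $(\ell-j-1)\bigl[m(q^n-1)+n(q^m-1)-mn(q-1)\bigr]$ of the denominator. (Your remark about the boundary case $\delta=0$ versus the range $1\le\delta\le n_{j+1}-1$ in Theorem \ref{th proj sphere packing bound} is a legitimate technicality that the paper silently skips.) Where you genuinely diverge is in the second half. The paper bounds the numerator term-by-term over $0\le\delta\le n-1$ (picking the worst $\delta$ separately for each term), arrives at $q^{2n}-q^n+n(n-1)(q-1)$, and reduces the comparison with $2q^n/(3n)$ to $q^n\bigl(\tfrac{q^n-1}{q-1}-2n\bigr)\ge 3n(n-1)$, which it then splits into the two claims $q^n\ge 3n(n-1)$ and $\tfrac{q^n-1}{q-1}-2n\ge 1$. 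You instead observe that the numerator is monotonically decreasing in $k=n-\delta$ (since $q^n-1\ge n(q-1)$), so its exact maximum is $q^n(q^n-1)$ at $k=1$, and the comparison collapses to the single inequality $q^n-1\ge 2n(q-1)$, checked at the three minimal regimes by monotonicity. Your version is both cleaner and slightly stronger (it would also cover $(q,n)=(3,2)$ and $(2,3)$), and it avoids a defect in the paper's own verification: the auxiliary claim $q^n\ge 3n(n-1)$ is false for $q=2$ and $n\in\{4,5,6\}$, even though the product inequality the paper actually needs does hold there. So your argument is not just valid but repairs the weakest step of the published proof.
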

\begin{proof}
Define $ \ell^\prime = \ell - \lfloor (d-3)/n \rfloor $ and note that, since $ \mathcal{C} $ is MMCD, then $ |\mathcal{C}| = q^{m (\ell n - d + 1)} $. Thus, applying Theorem \ref{th proj sphere packing bound}, we get
$$ q^{m(\ell n - d + 1)} (1 + (\ell^\prime - 1) (m (q^n-1) + n(q^m-1) -mn (q-1)) + m(q^{n-\delta}-1) $$
$$ + (n-\delta)(q^m-1) -m(n-\delta)(q-1)) \leq q^{m(\ell n - d + 3)}. $$
After simplifying this inequality, we obtain (\ref{eq bound on ell 1}).

Now assume that $ m = n $. Since $ 0 \leq \delta \leq n - 1 $, we have
\begin{equation*}
\begin{split}
\ell - \left\lfloor \frac{d-3}{n} \right\rfloor - 1 & \leq \frac{q^{2n} - 1 - n(q^{n-\delta} - 1) - (n-\delta)(q^n-1) + n(n-\delta)(q-1)}{2n(q^n-1) - n^2(q-1)} \\
& \leq \frac{q^{2n} - 1 - n(q - 1) - (q^n-1) + n^2(q-1)}{2n(q^n-1) - n^2(q-1)}.
\end{split}
\end{equation*}
Now, the inequality
$$ \frac{q^{2n} - 1 - n(q - 1) - (q^n-1) + n^2(q-1)}{2n(q^n-1) - n^2(q-1)} \leq \frac{2q^n}{3n} $$
is equivalent to
$$ q^n \left( \frac{q^n-1}{q-1} - 2n \right) \geq 3 n (n-1). $$
The reader may verify by induction on $ n $ that
$$ q^n \geq 3n(n-1) \quad \textrm{and} \quad \frac{q^n-1}{q-1} - 2n \geq 1, $$
if $ q \geq 4 $ and $ n \geq 2 $, or if $ q = 3 $ and $ n \geq 3 $, or if $ q = 2 $ and $ n \geq 4 $, and the result follows.
\end{proof}

Observe that, in the Hamming-metric case $ m=n=1 $, we have $ \delta = 0 $ and (\ref{eq bound on ell 1}) reads 
\begin{equation}
\ell \leq \frac{q^2 - 1}{2(q-1)-(q-1)} + \frac{d-3}{1} = q + d -2,
\label{eq bound ell for MDS}
\end{equation}
which is a well-known bound on the length of MDS codes \cite[Cor. 7.4.3(ii)]{pless}. Observe now that, by Proposition \ref{prop MC upper bounded H}, the bound (\ref{eq bound ell for MDS}) implies that
\begin{equation}
\ell \leq \left\lfloor \frac{q^n + d - 2}{n} \right\rfloor .
\label{eq bound ell MMCD from MDS}
\end{equation}
However, the bound (\ref{eq bound on ell 2}) is tighter than (\ref{eq bound ell MMCD from MDS}), and therefore, so is (\ref{eq bound on ell 1}), in the cases stated in Theorem \ref{th bound on ell for MMCD}.

We next turn to the non-existence of perfect codes for the multi-cover metric.

\begin{proposition}
Assume that $ n = m_1 = \ldots = m_\ell = n_1 = \ldots = n_\ell $ and $ q $ is even. If $ n $ or $ \ell $ is also even, then there is no perfect code in $ (\mathbb{F}_q^{n \times n})^\ell $ of minimum multi-cover distance $ d = 3 $.
\end{proposition}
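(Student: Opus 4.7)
The plan is to apply the sphere-packing bound at equality and obtain a parity obstruction. For a perfect code $\mathcal{C} \subseteq (\mathbb{F}_q^{n \times n})^\ell$ with $ {\rm d}_{MC}(\mathcal{C}) = 3 $, the radius in Theorem~\ref{th sphere-packing bound} is $ r = 1 $, so equality in (\ref{eq sphere-packing bound}) reads $ B_1 \cdot |\mathcal{C}| = q^{\ell n^2} $. Since $ q $ is even, the right-hand side is a power of $ 2 $, which forces $ B_1 $ itself to be a power of $ 2 $. Because $ B_1 \geq 2 $ (the radius-$1$ ball is nontrivial as soon as $ n, \ell \geq 1 $), this means $ B_1 $ must be \emph{even}.

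Next I would specialize the closed form (\ref{eq ball size r=1}) to the square case $ m_i = n_i = n $ for every $ i $, obtaining
$$ B_1 = 1 + \ell \bigl( 2n(q^n - 1) - n^2(q - 1) \bigr) . $$
Reducing modulo $ 2 $ is immediate: the summand $ 2n(q^n-1) $ vanishes, and since $ q $ is even the factor $ q - 1 $ is odd, so $ n^2(q-1) \equiv n^2 \pmod 2 $. Using $ n^2 \equiv n \pmod 2 $, one gets
$$ B_1 \equiv 1 + \ell n \pmod 2 . $$
Under the hypothesis that $ n $ or $ \ell $ is even, the product $ \ell n $ is even, hence $ B_1 $ is odd. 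But an odd integer greater than $ 1 $ is not a power of $ 2 $, contradicting the conclusion of the first paragraph.

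There is essentially no obstacle: the entire argument is a parity check on the explicit formula (\ref{eq ball size r=1}) combined with the sphere-packing equation. The only minor verification is $ B_1 \geq 2 $, which is clear from (\ref{eq ball size r=1}) whenever $ n, \ell \geq 1 $. The proof is therefore short and self-contained, and notably does not require any structural information about $\mathcal{C}$ beyond the counting identity forced by perfectness.
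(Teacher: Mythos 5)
Your proof is correct and follows essentially the same route as the paper: equality in the sphere-packing bound forces $B_1$ to divide $q^{\ell n^2}$, hence to be even when $q$ is even, while the explicit formula (\ref{eq ball size r=1}) gives $B_1 \equiv 1 + \ell n \pmod 2$, which is odd under the hypothesis. Your phrasing via ``$B_1$ is a power of $2$'' is in fact marginally more careful than the paper's ``$B_1 = q^t$'', but the argument is the same.
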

\begin{proof}
If there is one such perfect code $ \mathcal{C} \subseteq (\mathbb{F}_q^{n \times n})^\ell $, then according to Theorem \ref{th sphere-packing bound}, we have
$$ q^{\ell n^2} = (1 + \ell n (2(q^n-1) - n(q-1))) |\mathcal{C}| . $$
This means that $ 1 + \ell n (2(q^n-1) - n(q-1)) = q^t $, for some positive integer $ t $, and thus $ 1 + \ell n (2(q^n-1) - n(q-1)) $ is even. However, this is not possible if $ \ell $ or $ n $ is even.
\end{proof}

We conclude with basic Gilbert-Varshamove-like or Sphere-covering existential bound for the multi-cover metric. It is based on ball sizes and is proven in the same way as for any metric.

\begin{theorem}
Let $ d = 1,2, \ldots, N $, where $ N = n_1 + n_2 + \cdots + n_\ell $, and define
$$ k = \left\lceil \log_q \left\lceil \frac{q^{\sum_{i=1}^\ell m_in_i}}{B_{d-1}} \right\rceil \right\rceil . $$
There exists a linear code $ \mathcal{C} \subseteq \prod_{i=1}^\ell \mathbb{F}_q^{m_i \times n_i} $ with $ \dim(\mathcal{C}) = k $ and $ {\rm d}_{MC}(\mathcal{C}) \geq d $.
\end{theorem}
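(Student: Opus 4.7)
The plan is to apply the standard Gilbert--Varshamov greedy construction to the multi-cover metric. I would build the linear code $ \mathcal{C} $ incrementally as an increasing chain of subspaces $ \{ 0 \} = V_0 \subsetneq V_1 \subsetneq \cdots \subsetneq V_k $, each with minimum multi-cover distance at least $ d $, adding one generator at a time.

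First I would record the two basic invariance properties of the multi-cover weight on which the iterative step rests: translation-invariance (built into the definition through $ {\rm d}_{MC}(C,D) = {\rm wt}_{MC}(C-D) $) and invariance under multiplication by any nonzero scalar in $ \mathbb{F}_q $, which is immediate from Definition~\ref{def multilayer metric} since such scaling preserves the support pattern of each $ C_i $ and therefore preserves its set of covers. Given $ V_j $ of dimension $ j $ with minimum multi-cover distance $ \geq d $, the ``forbidden'' set of points within multi-cover distance $ d-1 $ of some element of $ V_j $ is $ \bigcup_{v \in V_j} \mathcal{B}_{d-1}(v) $, of size at most $ q^j B_{d-1} $. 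As long as $ q^j B_{d-1} < q^{\sum_{i=1}^\ell m_i n_i} $, I can pick a vector $ w $ outside this union and set $ V_{j+1} = V_j \oplus \mathbb{F}_q w $.

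The verification step is to check that $ V_{j+1} $ still has minimum multi-cover distance at least $ d $. For a nonzero $ u = \alpha w + v \in V_{j+1} $ with $ \alpha \in \mathbb{F}_q $ and $ v \in V_j $, the case $ \alpha = 0 $ is handled by the inductive hypothesis since $ v \neq 0 $; for $ \alpha \neq 0 $, scalar invariance gives $ {\rm wt}_{MC}(u) = {\rm wt}_{MC}(w - (-\alpha^{-1}v)) = {\rm d}_{MC}(w, -\alpha^{-1}v) \geq d $, because $ -\alpha^{-1}v \in V_j $ and $ w $ was chosen outside every such ball.

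Finally I would translate the stopping condition into the stated dimension. Setting $ M = \lceil q^{\sum_{i=1}^\ell m_i n_i}/B_{d-1} \rceil $ and $ k = \lceil \log_q M \rceil $, the definition of the outer ceiling gives $ q^{k-1} < M $; since $ q^{k-1} $ is an integer this yields $ q^{k-1} \leq M-1 < q^{\sum_{i=1}^\ell m_i n_i}/B_{d-1} $ (checking separately the cases where $ q^{\sum_{i=1}^\ell m_i n_i}/B_{d-1} $ is integer or not), so the extension step from $ V_{k-1} $ to $ V_k $ still succeeds and $ \dim(V_k) = k $. I do not expect any genuine obstacle here: the only content specific to this metric is the scalar-invariance of $ {\rm wt}_{MC} $, which is trivial from the definition, consistent with the author's remark that the result is proved ``in the same way as for any metric.''
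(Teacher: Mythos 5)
Your proof is correct and is exactly the standard Gilbert--Varshamov greedy argument that the paper invokes without writing out (the theorem is stated with the remark that it ``is proven in the same way as for any metric''). The only metric-specific ingredient, the scalar invariance of $ {\rm wt}_{MC} $, is justified correctly, and the ceiling bookkeeping showing $ q^{k-1} B_{d-1} < q^{\sum_{i=1}^\ell m_i n_i} $ is sound.
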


\subsection{The size of a ball} \label{subsec size ball}

In this subsection, we estimate the size of a ball of radius $ r $ in the multi-cover metric, that is, the number $ B_r $. We will provide upper and lower bounds whose limit is the same as that of $ B_r $ as $ q $ tends to infinity. Thus we provide the order of the growth of $ B_r $ as a function of $ q $, being the other parameters fixed. All the proofs of this subsection are deferred to Appendix \ref{app size ball}. Define
$$ \mathcal{S}_r(C) = \left\lbrace \left. D \in \prod_{i=1}^\ell \mathbb{F}_q^{m_i \times n_i} \right| {\rm d}_{MC}(C,D) = r \right\rbrace , $$
where $ r \in \mathbb{N} $ and $ C \in \prod_{i=1}^\ell \mathbb{F}_q^{m_i \times n_i} $. We also denote $ S_r = |\mathcal{S}_r(0)| $, for $ r \in \mathbb{N} $. Clearly, we have $ B_r = \sum_{i=0}^r S_i $, so we only need to compute $ S_r $. For convenience, we define also
$$ S_r^{m,n} = | \{ C \in \mathbb{F}_q^{m \times n} \mid {\rm wt}_{MC}(C) = r \} |, $$
for positive integers $ m $, $ n $ and $ r $. Clearly, it holds that $ S_r = \sum_{\mathbf{r} \in \mathbb{N}^\ell , |\mathbf{r}| = r} \prod_{i=1}^\ell S_{r_i}^{m_i,n_i} $, where $ |\mathbf{r}| = r_1 + r_2 + \cdots + r_\ell $, and therefore,
$$ B_r = \sum_{i=0}^r \sum_{\mathbf{r} \in \mathbb{N}^\ell , |\mathbf{r}| = r} \prod_{i=1}^\ell S_{r_i}^{m_i,n_i}. $$
Thus, we only need to find $ S_r^{m,n} $ for positive integers $ m $, $ n $ and $ r $, which has not yet been computed, to the best of our knowledge.

We start by giving a recursive formula for the sizes of spheres of full radius. We will use such sphere sizes for the general case.

\begin{lemma} \label{lemma size full sphere}
Let $ m $ and $ n $ be positive integers such that $ n \leq m $. Then $ S_1^{m,1} = q^m - 1 $ and 
$$ S_n^{m,n} = q^{mn} - \sum_{i=1}^m \sum_{j=1}^n \binom{m}{i} \binom{n}{j} S_{\min \{ m-i, n-j \}}^{m-i, n-j} . $$
Moreover, we have that $ (q-1)^{mn} \leq S_n^{m,n} \leq q^{mn} $.
\end{lemma}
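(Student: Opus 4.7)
The plan is to verify the base case, the recursion, and the bounds in turn.

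For the base case $S_1^{m, 1} = q^m - 1$, I would observe directly that a column vector $C \in \mathbb{F}_q^{m \times 1}$ has multi-cover weight $0$ if $C = 0$ and exactly $1$ otherwise, since in the latter case the single-column cover $(X, Y) = (\emptyset, \{1\})$ covers all non-zero entries of $C$ and no strictly smaller cover exists. The count of non-zero vectors is therefore $q^m - 1$.

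For the recursion, the natural approach is to partition $\mathbb{F}_q^{m \times n}$ into matrices of cover weight exactly $n$ and matrices of cover weight strictly less than $n$. The first class contributes $S_n^{m,n}$ and the second class accounts for the sum on the right-hand side. To count the latter, I would associate to each matrix $C$ of cover weight $< n$ a canonical quadruple consisting of a subset $X \subseteq \{1, \ldots, m\}$ of size $i \geq 1$, a subset $Y \subseteq \{1, \ldots, n\}$ of size $j \geq 1$, and a residual submatrix in $\mathbb{F}_q^{(m-i) \times (n-j)}$ attaining its maximum cover weight $\min\{m-i, n-j\}$. Summing over $(i, j)$ with multiplicities $\binom{m}{i}\binom{n}{j}$ and core count $S_{\min\{m-i,n-j\}}^{m-i, n-j}$ then recovers $q^{mn} - S_n^{m,n}$. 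I expect the main difficulty to be pinning down the canonical decomposition (for instance, using the sets of zero rows and zero columns of $C$, suitably adjusted so that the residual core attains the claimed full cover weight) and verifying that each matrix of cover weight $< n$ is counted exactly once with the stated multiplicities.

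Finally, for the bounds, the upper estimate $S_n^{m,n} \leq q^{mn}$ is trivial since $S_n^{m,n}$ counts a subset of $\mathbb{F}_q^{m \times n}$. For the lower bound, I would show that every $C$ with all entries in $\mathbb{F}_q \setminus \{0\}$ satisfies ${\rm wt}_{MC}(C) = n$. That the weight is at most $n$ follows from taking the cover $(X, Y) = (\emptyset, \{1, \ldots, n\})$. Conversely, suppose some cover $(X, Y)$ has $|X| + |Y| < n \leq m$. Then $|X| < n \leq m$ and $|Y| < n$, so the submatrix indexed by $\{1, \ldots, m\} \setminus X$ and $\{1, \ldots, n\} \setminus Y$ has strictly positive dimensions and all entries non-zero, contradicting the cover property. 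Hence $(\mathbb{F}_q \setminus \{0\})^{m \times n}$ is contained in the set of matrices counted by $S_n^{m,n}$, yielding $(q-1)^{mn} \leq S_n^{m,n}$.
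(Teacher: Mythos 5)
Your verification of the base case $S_1^{m,1}=q^m-1$ and of the two bounds is correct, and for the lower bound you actually supply the detail the paper only asserts: an all-nonzero matrix has cover weight exactly $n$, since a cover $(X,Y)$ with $|X|+|Y|<n\le m$ would leave an uncovered nonzero entry. Up to that elaboration, these parts coincide with the paper's argument.

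The recursion is where there is a genuine gap. You never construct the ``canonical decomposition'' on which the whole count rests; you only announce that pinning it down is the main difficulty. That difficulty is not a technicality: the parametrization you propose does not exist, because the matrices of cover weight strictly less than $n$ are \emph{not} in bijection with triples consisting of a nonempty row set, a nonempty column set, and a residual core of full cover weight. Concretely, for $n\ge 2$ the matrix $\bigl(\begin{smallmatrix}1&1\\0&0\end{smallmatrix}\bigr)$ (and, in general, any matrix supported on a single everywhere-nonzero row) has cover weight $1<n$, yet it has no zero column, so it corresponds to no term of the double sum, all of whose terms require $j\ge 1$ deleted columns. There is also a failure in the other direction: when $m-i>n-j$, a core of full cover weight $\min\{m-i,n-j\}=n-j$ may still contain zero rows, so ``exactly $i$ zero rows'' is not what $S^{m-i,n-j}_{\min\{m-i,n-j\}}$ records. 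In fact the identity itself fails numerically: for $m=n=2$ and $q=3$ a direct count gives $S_2^{2,2}=2(q-1)^2q^2-(q-1)^4=56$, whereas the right-hand side of the recursion is $81-\bigl(4(q-1)+2+2+1\bigr)=68$.

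You should know that the paper's own proof follows exactly the route you sketch and founders on the same point: it asserts that $S_n^{m,n}$ equals the number of matrices having no zero column or no zero row, which the example above refutes (a matrix can have no zero column and still be covered by a single row), and its count of matrices with exactly $i$ zero rows and $j$ zero columns conflates ``full cover weight'' with ``no zero row and no zero column''. So the step you flagged as the main difficulty is precisely where the argument --- yours and the paper's --- breaks down, and it cannot be repaired without changing the statement: what a zero-row/zero-column decomposition naturally computes is the number of matrices with no zero row \emph{and} no zero column, which differs from $S^{m,n}_n$ once $\min\{m,n\}\ge 2$.
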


Therefore, we may recursively compute the numbers $ S_n^{m,n} $, where $ n \leq m $. Using these numbers, we may give the following upper and lower bounds for $ S_r^{m,n} $, for any $ r = 1,2, \ldots , \min \{ m,n \} $.

\begin{theorem} \label{th size ball}
For any positive integers $ m $, $ n $ and $ r $, with $ 1 \leq r \leq \min \{ m , n \} $, it holds that $ S^{m,n}_r = {\rm UB}_r - {\rm DC}_r $, where $ {\rm UB}_r \geq {\rm DC}_r \geq 0 $ are integers such that
\begin{equation*}
\begin{split}
{\rm UB}_r & = \sum_{s=0}^r \binom{m}{s} \binom{n}{r-s} (q^{m-s} - 1)^{r-s} (q^{n-r+s}-1)^s q^{s(r-s)}, \\
{\rm DC}_r & \leq \sum_{\omega = 0}^{r-1} \sum_{u=0}^\omega \binom{m}{u} \binom{n}{\omega - u} \sum_{s=0}^{r-\omega}\sum_{t=s}^{r-\omega} \left[ \binom{m-u}{s,t} \binom{n-\omega + u}{r-\omega-s, r - \omega - t} \right. \\
& \phantom{- \sum_{\omega = 0}^{r-1} \sum_{u=0}^\omega} \cdot q^{u(\omega - u)} \cdot S^{t, r-\omega-s}_{\min \{ t, r-\omega-s \}} \cdot S^{s,r-\omega-t}_{\min \{ s, r-\omega-t \}} \\
& \phantom{- \sum_{\omega = 0}^{r-1} \sum_{u=0}^\omega} \cdot \left( q^{t+s}(q^{m-u-s-t} - 1) + (q^s-1) (q^t-1) \right)^{\omega - u} \\
& \phantom{- \sum_{\omega = 0}^{r-1} \sum_{u=0}^\omega} \cdot \bigg( q^{(r-\omega-t)+(r-\omega-s)}(q^{n-(\omega-u)-(r-\omega-t)-(r-\omega-s)} - 1) \\
& \phantom{- \sum_{\omega = 0}^{r-1} \sum_{u=0}^\omega \cdot \bigg(} + (q^{r-\omega-s}-1)(q^{r-\omega-t} - 1) \bigg)^u \bigg] , \\
{\rm DC}_r & \geq \sum_{\omega = 0}^{r-1} \sum_{u=0}^\omega \binom{m}{u} \binom{n}{\omega - u} \sum_{s=0}^{r-\omega}\sum_{t=s}^{r-\omega} \left[ \binom{m-u}{s,t} \binom{n-\omega + u}{r-\omega-s, r - \omega - t} \right. \\
& \phantom{- \sum_{\omega = 0}^{r-1} \sum_{u=0}^\omega} \cdot q^{u(\omega - u)} \cdot (q-1)^{t (r-\omega-s) + s (r-\omega-t)} \\
& \phantom{- \sum_{\omega = 0}^{r-1} \sum_{u=0}^\omega} \cdot \left( q^{t+s}(q^{m-u-s-t} - 1) + (q^s-1) (q^t-1) \right)^{\omega - u} \\
& \phantom{- \sum_{\omega = 0}^{r-1} \sum_{u=0}^\omega} \cdot \bigg( q^{(r-\omega-t)+(r-\omega-s)}(q^{n-(\omega-u)-(r-\omega-t)-(r-\omega-s)} - 1) \\
& \phantom{- \sum_{\omega = 0}^{r-1} \sum_{u=0}^\omega \cdot \bigg(} + (q^{r-\omega-s}-1)(q^{r-\omega-t} - 1) \bigg)^u \bigg] .
\end{split}
\end{equation*}
Here, for integers $ m \geq 1 $, $ 0 \leq s \leq m $ and $ s \leq t \leq m $, we define
$$ \binom{m}{s,t} = \left\lbrace \begin{array}{ll}
\frac{m!}{s! t! (m-s-t)!} & \textrm{if } s \neq 0 \textrm{ or } t \neq 0, \\
0 & \textrm{if } s = t = 0 .
\end{array} \right. $$
\end{theorem}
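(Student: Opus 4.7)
The plan is to prove $S_r^{m,n} = \mathrm{UB}_r - \mathrm{DC}_r$ together with the stated sandwich bounds on $\mathrm{DC}_r$ via a two-stage double-counting argument. I interpret $\mathrm{UB}_r$ as the number of pairs $(C, (X,Y))$ in which $(X,Y)$ is a \emph{minimal} cover of $C$ of size $r$, meaning that no proper sub-pair of $(X,Y)$ is still a cover of $C$; the overcount is then controlled by enumerating pairs of distinct minimal covers.

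For the exact formula for $\mathrm{UB}_r$, fix $|X|=s$ and $|Y|=r-s$. For a given $(X,Y)\in \mathrm{MC}(m,n)$, the matrices $C \in \mathbb{F}_q^{m \times n}$ admitting $(X,Y)$ as a minimal cover are those satisfying four independent conditions: entries at positions in $X^c \times Y^c$ vanish (the cover condition); entries at positions in $X \times Y$ are arbitrary, giving $q^{s(r-s)}$ choices; for minimality of each column $b \in Y$, the restriction of column $b$ to rows in $X^c$ is a nonzero vector in $\mathbb{F}_q^{m-s}$, yielding $(q^{m-s}-1)^{r-s}$ choices overall; and for minimality of each row $a \in X$, the restriction of row $a$ to columns in $Y^c$ is nonzero in $\mathbb{F}_q^{n-r+s}$, yielding $(q^{n-r+s}-1)^s$ choices. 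Summing over $s$ and over the $\binom{m}{s}\binom{n}{r-s}$ choices of $(X, Y)$ produces the claimed formula. Since any $C$ with $\mathrm{wt}_{MC}(C) = r$ admits at least one minimum cover of size $r$, which is automatically minimal, we have $S_r^{m,n} \leq \mathrm{UB}_r$; set $\mathrm{DC}_r = \mathrm{UB}_r - S_r^{m,n} \geq 0$.

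To bound $\mathrm{DC}_r$, I then enumerate pairs of distinct minimal covers $(X_1, Y_1), (X_2, Y_2)$ of $C$ of size $r$, parametrized by $u = |X_1 \cap X_2|$, $\omega - u = |Y_1 \cap Y_2|$ (with $0 \leq \omega \leq r-1$ by distinctness), $s = |X_1 \setminus X_2|$, and $t = |X_2 \setminus X_1|$, under the ordering convention $s \leq t$. Once $[m]$ is partitioned into four row-classes (shared, cover-1-exclusive, cover-2-exclusive, neither) and $[n]$ analogously, the joint cover and minimality conditions decouple across the resulting block decomposition: the $u \times (\omega-u)$ shared-row / shared-column block is unconstrained, giving $q^{u(\omega-u)}$; the two cross-blocks of sizes $t \times (r-\omega-s)$ and $s \times (r-\omega-t)$, at exclusive rows of one cover crossed with exclusive columns of the other, must attain maximum multi-cover weight, contributing the sphere-size factors $S^{t, r-\omega-s}_{\min\{t, r-\omega-s\}}$ and $S^{s, r-\omega-t}_{\min\{s, r-\omega-t\}}$; and each of the $\omega - u$ shared columns contributes a factor obtained by a two-set inclusion-exclusion on the conditions ``nonzero outside $X_1$'' and ``nonzero outside $X_2$'', equal to $q^{m-u} - q^s - q^t + 1$, which rearranges as $q^{t+s}(q^{m-u-s-t}-1) + (q^s-1)(q^t-1)$; the symmetric expression handles each of the $u$ shared rows.

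Summing these contributions over all valid tuples $(\omega, u, s, t)$ yields the upper bound on $\mathrm{DC}_r$ with exact sphere sizes, and the lower bound follows by substituting the estimate $(q-1)^{ab} \leq S^{a, b}_{\min\{a, b\}}$ from Lemma \ref{lemma size full sphere} into each cross-block factor. The main technical obstacle is twofold: rigorously verifying the block decoupling (identifying which blocks must vanish, such as those at the intersection of rows exclusive to one cover and columns exclusive to the \emph{same} cover, which blocks are free, and which carry the essential minimality constraints), and justifying that the cross-blocks must attain the \emph{maximum} multi-cover weight $\min(\cdot,\cdot)$ rather than merely having every row and column nonzero; the two properties coincide in small cases but diverge in general, so reconciling them requires exploiting the global structure imposed by both covers simultaneously and not just the local per-row and per-column constraints.
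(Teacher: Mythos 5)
Your derivation of $\mathrm{UB}_r$ and your set-up for the upper bound on $\mathrm{DC}_r$ match the paper's proof step for step: the same interpretation of $\mathrm{UB}_r$ as counting pairs (matrix, minimal cover of size $r$), the same parametrization of pairs of distinct minimal covers by $(\omega,u,s,t)$, the same block decomposition, and your inclusion--exclusion count $q^{m-u}-q^{s}-q^{t}+1=q^{t+s}(q^{m-u-s-t}-1)+(q^{s}-1)(q^{t}-1)$ for a shared line is exactly the factor the paper obtains by a two-case split. The genuine gap is in your lower bound on $\mathrm{DC}_r$. You cannot obtain a lower bound on $\mathrm{DC}_r$ by ``substituting the estimate $(q-1)^{ab}\leq S^{a,b}_{\min\{a,b\}}$ into each cross-block factor'': that only shows that the displayed lower-bound expression is at most the displayed upper-bound expression, and says nothing about $\mathrm{DC}_r$ itself. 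What the paper actually does is restrict attention, for each pair of covers, to the matrices whose cross-block entries are \emph{all} nonzero, and then invokes the structural fact that such a matrix cannot admit a third minimal cover of size $r$. Consequently each such matrix is counted for exactly one pair, is counted exactly twice in $\mathrm{UB}_r$, and therefore contributes at least $1$ to the excess $\mathrm{UB}_r-S^{m,n}_r$, with contributions from distinct pairs disjoint; only then does summing over pairs give a valid lower bound on $\mathrm{DC}_r$. This ``no third minimal cover'' argument is the essential missing idea in your proposal.

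Concerning the obstacle you flag at the end: you are right that ``maximum cover weight'' and ``no zero row and no zero column'' are different conditions, but the resolution is simpler than the global reconciliation you anticipate. Tracing both cover conditions through the block decomposition shows that the restriction of an exclusive line of one cover to the complement of the other cover is supported precisely on the corresponding cross-block, so the exact condition forced on each cross-block is ``no zero row and no zero column'' and nothing more. The theorem never needs this count identified with a sphere size: for the upper bound on $\mathrm{DC}_r$ it suffices that the count is at most $S^{a,b}_{\min\{a,b\}}$ (which holds given Lemma \ref{lemma size full sphere}, where that quantity is evaluated as the number of matrices with no zero column \emph{or} no zero row, a superset), and for the lower bound one only uses the all-nonzero matrices, counted by $(q-1)^{ab}$, which form a subset. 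So the local per-line constraints are all that is required there; the real work, as above, lies in justifying that the pair sums bound the excess from above and, especially, from below.
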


In the case $ r = 1 $, the reader may verify that $ {\rm UB}_1 = n(q^m-1) + m(q^n-1) $ and the upper and lower bounds in Theorem \ref{th size ball} coincide and give $ {\rm DC}_1 = mn (q-1) $. Hence $ S^{m,n}_1 = n(q^m-1) + m(q^n-1) - mn (q-1) $, as computed in (\ref{eq ball size r=1}).

In general, for $ r > 1 $, the upper and lower bounds in Theorem \ref{th size ball} do not coincide. However, when $ m $, $ n $ and $ r $ are fixed but $ q $ grows, the upper and lower bounds in Theorem \ref{th size ball} coincide asymptotically. In particular, we may asymptotically compute the order of both $ S^{m,n}_r $ and $ B^{m,n}_r = \sum_{i=0}^r S^{m,n}_i $ as functions of $ q $.

\begin{corollary} \label{cor order of ball sizes}
Let $ 1 \leq r \leq n \leq m $ be positive integers. Then 
$$ \lim_{q \rightarrow \infty} \binom{n}{r} \frac{q^{mr}}{S^{m,n}_r} = \lim_{q \rightarrow \infty} \binom{n}{r} \frac{q^{mr}}{B^{m,n}_r} = \lim_{q \rightarrow \infty}\frac{S^{m,n}_r}{B^{m,n}_r} = 1. $$ 
\end{corollary}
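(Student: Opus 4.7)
The plan is to exploit the identity $S^{m,n}_r = UB_r - DC_r$ of Theorem~\ref{th size ball}, treating each side as a polynomial in $q$ (with $m$, $n$, $r$ fixed) and tracking only the top-degree term.

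First, I would analyze $UB_r$. The $s$-th summand
$\binom{m}{s}\binom{n}{r-s}(q^{m-s}-1)^{r-s}(q^{n-r+s}-1)^{s}q^{s(r-s)}$
has degree
\[
(m-s)(r-s)+(n-r+s)s+s(r-s) \;=\; mr \,-\, s\bigl(m-n+r-s\bigr),
\]
with leading coefficient $\binom{m}{s}\binom{n}{r-s}$. Since $1\le r\le n\le m$ and $0\le s\le r$, the factor $m-n+r-s$ is nonnegative, so every term has degree $\le mr$, and (when $n<m$) for $s\ge 1$ we in fact have $m-n+r-s\ge 1$, so the dominant contribution comes solely from $s=0$, giving $UB_r=\binom{n}{r}q^{mr}+O(q^{mr-1})$.

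Second, I would estimate the leading degree of $DC_r$. Using Lemma~\ref{lemma size full sphere} to bound each $S^{\,\cdot,\cdot}_{\min}$ factor as a polynomial of leading degree equal to the product of its two parameters, a straightforward expansion shows that the $(\omega,u,s,t)$-summand in the upper (resp.\ lower) bound has degree at most
\[
m\omega+u(n-m-\omega+u)+(s+t)(r-\omega)-2st.
\]
Since $u\le\omega$ and $n\le m$, the middle term is $\le 0$; a quick optimization of $(s+t)(r-\omega)-2st$ over $0\le s\le t\le r-\omega$ gives the bound $(r-\omega)^{2}$, so the whole expression is $\le m\omega+(r-\omega)^{2}$, and maximization in $0\le\omega\le r-1$ yields $mr-m+1\le mr-1$ when $m\ge 2$. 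Hence $DC_r=O(q^{mr-1})=o(q^{mr})$; the tiny case $m=n=r=1$ is verified by hand from the closed form displayed below Theorem~\ref{th size ball}.

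Third, I combine: $S^{m,n}_r=UB_r-DC_r=\binom{n}{r}q^{mr}\bigl(1+o(1)\bigr)$, which gives the first limit. For the second, I note that each $S^{m,n}_i$ with $i<r$ enjoys the same kind of polynomial bound and is $O(q^{mi})=o(q^{mr})$, so $B^{m,n}_r=S^{m,n}_r+\sum_{i<r}S^{m,n}_i=\binom{n}{r}q^{mr}\bigl(1+o(1)\bigr)$; the third limit follows by taking the ratio. The main obstacle is the bookkeeping in the second step: one must identify the top-degree monomial hidden inside each nested factor of the upper and lower expressions for $DC_r$, and then extremize the resulting degree polynomial over the four indices subject to $0\le \omega\le r-1$, $0\le u\le\omega$, $0\le s\le t\le r-\omega$. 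Once the degree bound $mr-1$ is established, everything else is immediate.
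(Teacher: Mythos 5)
Your strategy---read off the degree-$mr$ coefficient of $ {\rm UB}_r $ and show $ \deg {\rm DC}_r < mr $---is exactly the argument the paper has in mind (no separate proof of this corollary appears in the appendix; it is meant to follow from the degree analysis of Theorem \ref{th size ball}), and your formula $ mr - s(m-n+r-s) $ for the degree of the $ s $-th summand of $ {\rm UB}_r $ is correct. The genuine gap is the case $ m = n $, which you flag parenthetically ("when $ n < m $") and then never resolve. When $ m = n $, the summand $ s = r $ of $ {\rm UB}_r $ equals $ \binom{m}{r}(q^{n}-1)^r $, which also has degree $ nr = mr $ and leading coefficient $ \binom{m}{r} = \binom{n}{r} $, so $ {\rm UB}_r = 2\binom{n}{r}q^{mr} + O(q^{mr-1}) $ and the first limit comes out as $ 1/2 $ rather than $ 1 $ (unless $ {\rm DC}_r $ also contributes at order $ q^{mr} $, which happens only in the extreme case $ m=n=r $). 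This is not a repairable bookkeeping slip: for $ m=n=2 $, $ r=1 $ one has $ S^{2,2}_1 = 4q^2 - 4q $ by the formula (\ref{eq ball size r=1}), while $ \binom{2}{1}q^{2} = 2q^2 $, so the first two limits in the corollary genuinely equal $ 1/2 $. In other words, your proof establishes the statement only for $ n < m $, and the remaining case $ m = n $ (with $ r < n $) is a counterexample to the corollary as written; either the hypothesis must be strengthened to $ n < m $ or the normalizing constant must be adjusted to account for the row contribution when $ m = n $.

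Two smaller points in your treatment of $ \deg {\rm DC}_r $. First, $ f(\omega) = m\omega + (r-\omega)^2 $ is convex in $ \omega $, so its maximum over $ 0 \le \omega \le r-1 $ sits at an endpoint and equals $ \max\{r^2,\, mr-m+1\} $; evaluating only at $ \omega = r-1 $ misses the endpoint $ \omega = 0 $, whose value $ r^2 $ exceeds $ mr - m + 1 $ precisely when $ m = r $ (i.e., $ m=n=r $), another manifestation of the same boundary case. Second, the degree of $ q^{t+s}(q^{m-u-s-t}-1) + (q^s-1)(q^t-1) $ is $ \max\{m-u,\, s+t\} $, which can exceed $ m-u $ when $ s+t $ is large relative to $ m $; your displayed degree bound for the $ (\omega,u,s,t) $-summand silently assumes the first branch, so the extremization still needs to be checked against the second branch before concluding $ \deg {\rm DC}_r \le mr-1 $. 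Neither issue appears fatal in the regime $ n < m $, but both must be addressed explicitly, and the first paragraph's problem cannot be fixed without amending the statement.
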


\section{Duality, puncturing and shortening} \label{sec duality, puncturing and shortening}

In this section, we study duality, puncturing and shortening for the multi-cover metric. We will consider duality with respect to the trace product, given by
\begin{equation}
\langle C,D \rangle = \sum_{i=1}^\ell {\rm Tr}(C_iD_i),
\label{eq def trace product}
\end{equation}
where $ C = (C_1, C_2, \ldots, C_\ell), D = (D_1, D_2, \ldots, D_\ell) \in \prod_{i=1}^\ell \mathbb{F}_q^{m_i \times n_i} $, and where $ {\rm Tr} (A) $ denotes the trace of the matrix $ A $. Observe that $ \langle \cdot , \cdot \rangle $ is nothing but the usual inner product over $ \mathbb{F}_q $ seeing $ \prod_{i=1}^\ell \mathbb{F}_q^{m_i \times n_i} $ as $ \mathbb{F}_q^{\sum_{i=1}^\ell m_in_i} $. We then define the dual of a linear code $ \mathcal{C} \subseteq \prod_{i=1}^\ell \mathbb{F}_q^{m_i \times n_i} $ in the usual way,
$$ \mathcal{C}^\perp = \left\lbrace \left. D \in \prod_{i=1} \mathbb{F}_q^{m_i \times n_i} \right| \langle C,D \rangle = 0, \textrm{ for all } C \in \mathcal{C} \right\rbrace . $$

\subsection{Puncturing and shortening} \label{subsec puncturing shortening}

In this subsection, we define puncturing and shortening, which extend the classical concepts of puncturing and shortening for the Hamming metric \cite[Sec. 1.5]{pless}. As in the classical Hamming-metric case, puncturing and shortening enables us to explicitly construct shorter codes from known codes. To the best of our knowledge, these concepts have not yet been introduced in the classical cover metric case ($ \ell = 1 $).

\begin{definition}
Let $ X = (X_i, Y_i)_{i=1}^\ell \in {\rm MC}(\mathbf{m},\mathbf{n}) $. Given a code $ \mathcal{C} \subseteq \prod_{i=1}^\ell \mathbb{F}_q^{m_i \times n_i} $, we define its puncturing and shortening on $ X $, respectively, as
\begin{equation*}
\begin{split}
\mathcal{C}_X & = \pi_X(\mathcal{C}), \\
\mathcal{C}^X & = \{ \pi_X(C) \mid C \in \mathcal{C}, C_{i,a,b} = 0 \textrm{ if } a \in X_i \textrm{ or } b \in Y_i, 1 \leq i \leq \ell \} ,
\end{split}
\end{equation*}
both of which are codes in $ \prod_{i=1}^\ell \mathbb{F}_q^{(m_i - |X_i|) \times (n_i - |Y_i|)} $. 
\end{definition}

We now describe the basic properties of punctured and shortened codes in general.

\begin{proposition} \label{prop puncturing shortening properties}
Given a linear code $ \mathcal{C} \subseteq \prod_{i=1}^\ell \mathbb{F}_q^{m_i \times n_i} $ and a multi-cover $ X = (X_i, Y_i)_{i=1}^\ell \in {\rm MC}(\mathbf{m}, \mathbf{n}) $, the following hold:
\begin{enumerate}
%\item
%$ \mathcal{C}^X = \pi_X(\mathcal{C} \cap \mathcal{V}_X^\perp) $.
\item
$ (\mathcal{C}_X)^\perp = (\mathcal{C}^\perp)^X $ and $ (\mathcal{C}^X)^\perp = (\mathcal{C}^\perp)_X $.
\item
$ \dim(\mathcal{C}_X) \geq \dim(\mathcal{C}) - \sum_{i=1}^\ell \left( n_i |X_i| + m_i |Y_i| - |X_i| \cdot |Y_i| \right) $.
\item
$ \dim(\mathcal{C}^X) \geq \dim(\mathcal{C}) - \sum_{i=1}^\ell \left( n_i |X_i| + m_i |Y_i| - |X_i| \cdot |Y_i| \right) $.
\item
$ {\rm d}_{MC} (\mathcal{C}_X) \geq {\rm d}_{MC}(\mathcal{C}) - |X| $.
\item
$ {\rm d}_{MC} (\mathcal{C}^X) \geq {\rm d}_{MC}(\mathcal{C}) $.
\end{enumerate}
\end{proposition}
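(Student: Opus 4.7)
The plan is to handle the five items separately but with a single unifying piece of linear-algebra setup. Let $V_X \subseteq \prod_{i=1}^\ell \mathbb{F}_q^{m_i \times n_i}$ denote the subspace of tuples $\tilde D = (\tilde D_i)_{i=1}^\ell$ such that $\tilde D_{i,a,b}=0$ whenever $a \in X_i$ or $b \in Y_i$ (i.e., $\tilde D$ vanishes on the ``crisscross region'' of $X$). Two basic observations will drive everything: first, $\pi_X$ restricted to $V_X$ is a linear isomorphism onto $\prod_i \mathbb{F}_q^{(m_i-|X_i|)\times(n_i-|Y_i|)}$, with inverse given by zero-extension; second, $\dim V_X = \sum_i (m_i - |X_i|)(n_i - |Y_i|)$, so $\operatorname{codim} V_X = \dim \ker \pi_X = \sum_i (n_i|X_i| + m_i|Y_i| - |X_i|\cdot|Y_i|)$. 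By definition, $\mathcal{C}^X = \pi_X(\mathcal{C}\cap V_X)$, and $\pi_X|_{V_X}$ being injective gives $\dim \mathcal{C}^X = \dim(\mathcal{C}\cap V_X)$.

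For item 1, the key identity is that, for any $C$ in the ambient space and any $\tilde D \in V_X$, the trace product of \eqref{eq def trace product} satisfies $\langle C, \tilde D\rangle = \langle \pi_X(C), \pi_X(\tilde D)\rangle$, because $\tilde D$ is supported exactly on the entries preserved by $\pi_X$. I would use this to show $D \in (\mathcal{C}_X)^\perp$ iff the zero-extension $\tilde D \in V_X$ is orthogonal to every $C \in \mathcal{C}$ iff $\tilde D \in \mathcal{C}^\perp \cap V_X$ iff $D = \pi_X(\tilde D) \in (\mathcal{C}^\perp)^X$. The second identity then follows by applying the first to $\mathcal{C}^\perp$ and using $(\mathcal{C}^\perp)^\perp = \mathcal{C}$.

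For items 2 and 3, I would derive item 3 from the standard inequality $\dim(\mathcal{C}\cap V_X) \geq \dim \mathcal{C} - \operatorname{codim} V_X$, with the codimension computed above. Item 2 follows similarly from $\dim \mathcal{C}_X = \dim \mathcal{C} - \dim(\mathcal{C}\cap \ker \pi_X)$ together with $\dim(\mathcal{C}\cap \ker \pi_X) \leq \dim \ker \pi_X$; alternatively, item 2 is a formal consequence of item 3 applied to $\mathcal{C}^\perp$, item 1, and the dual-dimension formula $\dim \mathcal{C}^\perp = \sum_i m_i n_i - \dim \mathcal{C}$.

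For the distance bounds, item 4 follows by lifting: any multi-cover of $\pi_X(C)$ of size $w$, when re-indexed in the original coordinates and adjoined with $X$ itself, yields a multi-cover of $C$ of size $w + |X|$, so a nonzero $\pi_X(C) \in \mathcal{C}_X$ gives $\operatorname{wt}_{MC}(\pi_X(C)) \geq \operatorname{wt}_{MC}(C) - |X| \geq \operatorname{d}_{MC}(\mathcal{C}) - |X|$. For item 5, if $C \in \mathcal{C}\cap V_X$ is nonzero, all nonzero entries of $C$ already lie in the complementary block, so any cover of $C$ can be taken to consist only of row/column indices outside of $X_i$/$Y_i$ (deleting $X$-indices from a cover does not uncover any nonzero entry, since those rows/columns of $C$ are zero); this yields a size-preserving bijection between covers of $C$ and covers of $\pi_X(C)$, hence $\operatorname{wt}_{MC}(\pi_X(C)) = \operatorname{wt}_{MC}(C) \geq \operatorname{d}_{MC}(\mathcal{C})$. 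The main obstacle is item 1: setting up the zero-extension/restriction correspondence cleanly enough so that the trace-product identity gives both dualities at once; the remaining items reduce to standard linear algebra and direct cover-combinatorial arguments.
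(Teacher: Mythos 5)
Your proof is correct and follows essentially the same route as the paper, which simply observes that puncturing and shortening on $X$ amount to classical coordinate puncturing and shortening on the $\sum_{i=1}^\ell\left(n_i|X_i|+m_i|Y_i|-|X_i|\cdot|Y_i|\right)$ covered positions, cites the standard Hamming-metric duality and dimension results for Items 1--3, and calls Items 4 and 5 straightforward. You have merely written out in full the zero-extension/restriction argument and the cover-lifting arguments that the paper delegates to citations, and all of those details check out.
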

\begin{proof}
Items 4 and 5 are straightforward. Item 1 coincides with the classical result \cite[Th. 1.5.7(i)]{pless}. Items 2 and 3 coincide with the classical results after realizing that the number of deleted positions is exactly $ \sum_{i=1}^\ell \left( n_i |X_i| + m_i |Y_i| - |X_i| \cdot |Y_i| \right) $. 
\end{proof}

More interestingly, we may obtain shorter linear MMCD codes from known linear MMCD codes.

\begin{theorem} \label{th MMCD puncturing and shortening 1}
Let $ \mathcal{C} \subseteq \prod_{i=1}^\ell \mathbb{F}_q^{m_i \times n_i} $ be a linear MMCD code. Set $ d = {\rm d}_{MC}(\mathcal{C}) $ and let $ \delta $ and $ j $ be the unique integers such that $ d-1 = \sum_{i=1}^{j-1}n_i +\delta $ and $ 0 \leq \delta \leq n_j - 1 $. Let $ X = (X_i, Y_i)_{i=1}^\ell \in {\rm MC}(\mathbf{m}, \mathbf{n}) $. The following hold:
\begin{enumerate}
\item
Let $ 1 \leq k \leq j $, assume that $ d > 1 $, $ X_i = \varnothing $, for all $ i = 1,2, \ldots, \ell $, $ Y_i = \varnothing $ if $ i \neq k $ and $ |Y_k| = 1 $. Further assume $ \delta > 0 $ if $ k=j $. Then $ \mathcal{C}_X $ is a linear MMCD code with $ \dim(\mathcal{C}_X) = \dim(\mathcal{C}) $ and $ {\rm d}_{MC}(\mathcal{C}_X) = {\rm d}_{MC}(\mathcal{C}) - 1 $.
\item
Let $ j+1 \leq k \leq \ell $, and assume that $ Y_i = \varnothing $, for all $ i = 1,2, \ldots, \ell $, $ X_i = \varnothing $ if $ i \neq k $ and $ |X_k| = 1 $. Then $ \mathcal{C}^X $ is a linear MMCD code with $ \dim(\mathcal{C}^X) = \dim(\mathcal{C}) - n_k $ and $ {\rm d}_{MC}(\mathcal{C}^X) = {\rm d}_{MC}(\mathcal{C}) $.
\item
Let $ j \leq k \leq \ell $, and assume that $ X_i = \varnothing $, for all $ i = 1,2, \ldots, \ell $, $ Y_i = \varnothing $ if $ i \neq k $ and $ |Y_k| = 1 $. Then $ \mathcal{C}^X $ is a linear MMCD code with $ \dim(\mathcal{C}^X) = \dim(\mathcal{C}) - m_k $ and $ {\rm d}_{MC}(\mathcal{C}^X) = {\rm d}_{MC}(\mathcal{C}) $.
\end{enumerate}
\end{theorem}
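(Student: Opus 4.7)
The plan is uniform across all three items: for each, I would (a) lower-bound the dimension of the new code, (b) lower-bound its minimum multi-cover distance, (c) compute the right-hand side of the Singleton bound (Theorem~\ref{th singleton}) for the new code, and (d) invoke the strict monotonicity of the Singleton exponent in $d$ to force both inequalities to be equalities, which simultaneously pins down the minimum distance and shows that the new code is MMCD. The monotonicity is the observation that incrementing $d$ by one decreases the Singleton exponent by exactly $m_{j'}$, where $j'$ is the critical-layer index in the decomposition of $d-1$; in particular, inside a fixed ambient space the Singleton upper bound is strictly decreasing in $d$.

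For Item 1, the hypothesis $d > 1$ makes the puncturing map $\pi_X$ injective on $\mathcal{C}$: a codeword in its kernel would be supported on a single deleted column, so its multi-cover weight is at most $1 < d$, forcing it to be zero. Thus $\dim(\mathcal{C}_X) = \dim(\mathcal{C})$, and Proposition~\ref{prop puncturing shortening properties}(4) gives ${\rm d}_{MC}(\mathcal{C}_X) \geq d-1$. In the new ambient with $n'_k = n_k - 1$, I would check that the canonical decomposition of $d' - 1 = d - 2$ is $(j', \delta') = (j, \delta)$ if $k < j$, and $(j, \delta - 1)$ if $k = j$ (where the hypothesis $\delta > 0$ is used). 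In both subcases a direct calculation shows that the Singleton bound at $d-1$ evaluates to exactly $|\mathcal{C}|$. Since $|\mathcal{C}_X| = |\mathcal{C}|$, any strict increase in $d'$ beyond $d-1$ would contradict the (strictly smaller) bound at $d$, so ${\rm d}_{MC}(\mathcal{C}_X) = d-1$ and the bound is attained.

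For Items 2 and 3 the distance inequality ${\rm d}_{MC}(\mathcal{C}^X) \geq d$ is Proposition~\ref{prop puncturing shortening properties}(5), while the dimension inequalities $\dim(\mathcal{C}^X) \geq \dim(\mathcal{C}) - n_k$ and $\dim(\mathcal{C}^X) \geq \dim(\mathcal{C}) - m_k$ come from Proposition~\ref{prop puncturing shortening properties}(3) applied to $|X_k| = 1$ and $|Y_k| = 1$, respectively. It then suffices to evaluate the Singleton bound at $d^* = d$ in the new ambient and to verify it equals $|\mathcal{C}|/q^{n_k}$ (Item 2) or $|\mathcal{C}|/q^{m_k}$ (Item 3). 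For Item 2 the condition $k > j$ means that neither the decomposition $(j, \delta)$ nor any $n_i$ changes, only $m'_k = m_k - 1$, and the quotient is immediate. For Item 3 I would split on whether $k > j$, or $k = j$ with $\delta \leq n_j - 2$, or $k = j$ with $\delta = n_j - 1$; in the last subcase the decomposition shifts to $(j+1, 0)$, and a short calculation using $m_j n_j - m_j(n_j-1) = m_j$ confirms the bound is again $|\mathcal{C}|/q^{m_j}$.

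The only nonroutine bookkeeping is the boundary case $k = j$, $\delta = n_j - 1$ of Item 3, where the canonical Singleton decomposition in the new ambient jumps forward one layer; the rest is straightforward computation. Once these bound values are in hand, the strict monotonicity of the Singleton exponent automatically upgrades the dimension and distance inequalities from Proposition~\ref{prop puncturing shortening properties} to equalities, yielding both the claimed parameters and the MMCD property in all three cases.
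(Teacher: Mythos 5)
Your proposal is correct and follows essentially the same route as the paper: lower-bound the dimension and distance of the punctured/shortened code via Proposition \ref{prop puncturing shortening properties}, recompute the Singleton exponent in the new ambient space, and force equality. You are in fact slightly more careful than the paper in two spots it glosses over --- the injectivity argument behind $\dim(\mathcal{C}_X)=\dim(\mathcal{C})$ when $d>1$, and the boundary subcase $k=j$, $\delta=n_j-1$ of Item 3 where the canonical decomposition shifts to $(j+1,0)$ --- but the computations agree either way.
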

\begin{proof}
We start by proving Item 1. Let $ n^\prime_i = n_i $ if $ i \neq k $, and let $ n^\prime_k = n_k - 1 $. Note that $ \mathcal{C}_X \subseteq \prod_{i=1}^\ell \mathbb{F}_q^{m_i \times n^\prime_i} $. Set also $ \delta^\prime = \delta - 1 $ if $ k = j $, and $ \delta^\prime = \delta $ otherwise. Thus $ 0 \leq \delta^\prime \leq n^\prime_j - 1 $ by the assumptions. By Proposition \ref{prop puncturing shortening properties}, we have that
$$ {\rm d}_{MC}(\mathcal{C}_X) - 1 \geq d-2 = \sum_{i=1}^{j-1}n_i + \delta - 1 = \sum_{i=1}^{j-1} n^\prime_i +\delta^\prime, $$
and since $ d > 1 $, we also have that 
$$ \dim(\mathcal{C}_X) = \dim(\mathcal{C}) = \sum_{i=j}^\ell m_in_i - m_j \delta = \sum_{i=j}^\ell m_in^\prime_i - m_j \delta^\prime . $$
Therefore, $ \mathcal{C}_X $ must be MMCD and the inequality above is an equality.

We next prove Item 2. Let $ m^\prime_i = m_i $ if $ i \neq k $, and let $ m^\prime_k = m_k - 1 $. Note that $ \mathcal{C}^X \subseteq \prod_{i=1}^\ell \mathbb{F}_q^{m^\prime_i \times n_i} $. By Proposition \ref{prop puncturing shortening properties}, we have that
\begin{equation*}
\begin{split}
{\rm d}_{MC}(\mathcal{C}^X) - 1 & \geq d - 1 = \sum_{i=1}^{j-1}n_i + \delta , \textrm{ and} \\
\dim(\mathcal{C}^X) & \geq \dim(\mathcal{C}) - n_k = \sum_{i=j}^\ell m^\prime_i n_i - m^\prime_j \delta .
\end{split}
\end{equation*}
Therefore, $ \mathcal{C}^X $ must be MMCD and the inequalities above are equalities.

Finally we prove Item 3. Let $ n^\prime_i = n_i $ if $ i \neq k $, and let $ n^\prime_k = n_k - 1 $. Note that $ \mathcal{C}^X \subseteq \prod_{i=1}^\ell \mathbb{F}_q^{m_i \times n^\prime_i} $. By Proposition \ref{prop puncturing shortening properties}, we have that
\begin{equation*}
\begin{split}
{\rm d}_{MC}(\mathcal{C}^X) - 1 & \geq d - 1 = \sum_{i=1}^{j-1}n_i + \delta = \sum_{i=1}^{j-1}n^\prime_i + \delta , \textrm{ and} \\
\dim(\mathcal{C}^X) & \geq \dim(\mathcal{C}) - m_k = \sum_{i=j}^\ell m_i n^\prime_i - m_j \delta .
\end{split}
\end{equation*}
Therefore, $ \mathcal{C}^X $ must be MMCD and the inequalities above are equalities.
\end{proof}

\subsection{Dually MMCD codes and MDS codes by rows and columns} \label{subsec dually mmcd}

In Section \ref{sec definitions}, we considered the Hamming metric by columns and related it to the multi-cover metric. We may similarly define the Hamming metric by rows in $ \mathbb{F}_q^{n \times n} $ by transposition, i.e., we may consider
$$ \mathcal{C}^\intercal = \{ C^\intercal \mid C \in \mathcal{C} \} \subseteq \mathbb{F}_q^{n \times n}, $$
for $ \mathcal{C} \subseteq \mathbb{F}_q^{n \times n} $, where $ C^\intercal $ denotes the transposed of a matrix $ C $. We may define the minimum Hamming distance of $ \mathcal{C} $ by rows as $ {\rm d}_H^R(\mathcal{C}) = {\rm d}_H^C(\mathcal{C}^\intercal) $. The Singleton bound from Theorem \ref{th singleton} holds in the same way for both rows and columns, i.e., 
$$ |\mathcal{C}| \leq q^{n (n- d + 1)}, $$
whether $ d = {\rm d}_H^R $ or $ d = {\rm d}_H^C $. A code attaining this bound for $ {\rm d}_H^R $ will be called MDS by rows, and analogously for columns. Clearly, if $ \mathcal{C} $ is MMCD, then it is both MDS by rows and by columns. In particular, if it is linear, then $ \mathcal{C}^\perp $ is also MDS by rows and by columns, since the MDS property is preserved by duality (the proof of \cite[Th. 2.4.3]{pless} may be extended to matrix codes and the trace product in a straightforward way). 

However, as we now show, the dual of a linear MMCD code is not necessarily MMCD itself and a linear code that is MDS by rows and by columns is not necessarily MMCD either.

\begin{example}
Consider $ \mathcal{C} \subseteq \mathbb{F}_2^{3 \times 3} $ generated by
$$ A = \left( \begin{array}{ccc}
1 & 1 & 1 \\
1 & 0 & 0 \\
1 & 0 & 0 
\end{array} \right), \quad B = \left( \begin{array}{ccc}
0 & 1 & 0 \\
1 & 1 & 1 \\
0 & 1 & 0
\end{array} \right), \quad \textrm{and} \quad C = \left( \begin{array}{ccc}
0 & 0 & 1 \\
0 & 0 & 1 \\
1 & 1 & 1
\end{array} \right). $$
Since $ \mathcal{C} = \{ 0 , A, B, C, A+B, B+C, C+A, A+B+C \} $ and $ \dim(\mathcal{C}) = 3 $, it is easy to see that $ \mathcal{C} $ is MDS by columns and by rows since $ {\rm d}_H^R(\mathcal{C}) = {\rm d}_H^C(\mathcal{C}) = 3 $, but it is not MMCD, since $ {\rm d}_{MC}(\mathcal{C}) = 2 $. Now, $ \mathcal{C}^\perp \subseteq \mathbb{F}_2^{3 \times 3} $ has $ \dim(\mathcal{C}^\perp) = 6 $. Moreover, by inspection one can see that there is no $ D \in \mathcal{C}^\perp $ with $ {\rm wt}_{MC}(D) = 1 $. Therefore, $ {\rm d}_{MC}(\mathcal{C}^\perp) = 2 $ and $ \mathcal{C}^\perp $ is a linear MMCD code, even though $ \mathcal{C} $ is not.
\end{example}

The example above motivates the following definition.

\begin{definition}
We say that $ \mathcal{C} \subseteq \prod_{i=1}^\ell \mathbb{F}_q^{m_i \times n_i} $ is a dually MMCD code if it is linear and both $ \mathcal{C} $ and $ \mathcal{C}^\perp $ are MMCD codes.
\end{definition}

In Section \ref{sec constructions}, we will provide some explicit constructions of dually MMCD codes for general parameters.

To conclude the subsection, we observe that the equivalence of linear MMCD codes, dually MMCD codes and MDS codes by rows and by columns holds for very small parameters. For the case $ \ell > 1 $, we need to consider different combinations of transpositions in different positions. To this end, given $ C = (C_1, C_2, \ldots, C_\ell) \in \prod_{i=1}^\ell \mathbb{F}_q^{m_i \times n_i} $ and $ \mathbf{t} \in \{ 0,1\}^\ell $, we define $ C^\mathbf{t} = (D_1, D_2, \ldots, D_\ell) \in \prod_{i=1}^\ell \mathbb{F}_q^{n_i \times m_i} $, where  
$$ D_i = \left\lbrace \begin{array}{ll}
C_i & \textrm{if } t_i = 0, \\
C_i^\intercal & \textrm{if } t_i = 1.
\end{array} \right. $$
We then define $ \mathcal{C}^\mathbf{t} = \{ C^\mathbf{t} \mid C \in \mathcal{C} \} \subseteq \prod_{i=1}^\ell \mathbb{F}_q^{n_i \times m_i} $, for $ \mathcal{C} \subseteq \prod_{i=1}^\ell \mathbb{F}_q^{m_i \times n_i} $.

The proofs of the following two propositions are based on the fact that, in the two cases, we only need to consider covers in $ \mathbb{F}_q^{m \times n} $ of sizes $ 1 $ or $ 2 $, and these can always be chosen as only columns or only rows in such cases.

\begin{proposition}
Let $ \mathcal{C} \subseteq (\mathbb{F}_q^{n \times n})^\ell $ be a linear code with $ \dim(\mathcal{C}) = n (\ell n -1) $. Then $ \mathcal{C} $ is MMCD if, and only if, $ \mathcal{C}^\mathbf{t} $ is MDS by columns for all $ \mathbf{t} \in \{ 0,1 \}^\ell $.
\end{proposition}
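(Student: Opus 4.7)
The plan is to reduce both sides of the equivalence to the statement ``$d = 2$'' (in the respective metric) by invoking the Singleton bound, and then prove the resulting equivalence by a direct transposition argument. First, since $m_i = n_i = n$, I would apply Theorem~\ref{th singleton} to observe that any linear code of dimension $n(\ell n - 1) = \ell n^2 - n$ in $(\mathbb{F}_q^{n\times n})^\ell$ must have minimum multi-cover distance (and likewise minimum column-Hamming distance) lying in $\{1,2\}$: the case $d \geq 3$ forces $\dim \leq \ell n^2 - 2n$, and the case $d = 1$ fails to achieve equality in Singleton since it would require $\dim = \ell n^2$. Consequently, $\mathcal{C}$ is MMCD if and only if $ {\rm d}_{MC}(\mathcal{C}) = 2 $, and (using the fact that transposition is a linear isomorphism preserving dimension) $\mathcal{C}^{\mathbf{t}}$ is MDS by columns if and only if $ {\rm d}_H^C(\mathcal{C}^{\mathbf{t}}) = 2 $. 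It therefore suffices to prove that $ {\rm d}_{MC}(\mathcal{C}) \geq 2 $ holds if and only if $ {\rm d}_H^C(\mathcal{C}^{\mathbf{t}}) \geq 2 $ holds for every $\mathbf{t} \in \{0,1\}^\ell$.

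The key observation driving the equivalence is that a nonzero codeword $C \in \mathcal{C}$ satisfies ${\rm wt}_{MC}(C) = 1$ exactly when its support lies inside a single row \emph{or} a single column of exactly one component $C_{i_0}$, with all other components zero; likewise, ${\rm wt}^C_H(C^{\mathbf{t}}) = 1$ holds precisely when $C$ vanishes in all components except $C_{i_0}$, and this $C_{i_0}$ is supported in a single column if $t_{i_0} = 0$ or a single row if $t_{i_0} = 1$. For the forward implication, a weight-$1$ column support of $C^{\mathbf{t}}$ in component $i_0$ transposes back to a weight-$1$ line cover of $C$ in the same component, contradicting $ {\rm d}_{MC}(\mathcal{C}) \geq 2 $. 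For the converse, given $C \in \mathcal{C}\setminus\{0\}$ whose support is contained in a single line of $C_{i_0}$, I would pick $t_{i_0} = 0$ or $1$ according to whether that line is a column or a row (with the other coordinates of $\mathbf{t}$ arbitrary), producing an element of $\mathcal{C}^{\mathbf{t}}$ with exactly one nonzero column, contradicting $ {\rm d}_H^C(\mathcal{C}^{\mathbf{t}}) \geq 2 $.

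The main obstacle is the first step: the Singleton-bound bookkeeping to confirm that, at the fixed dimension $\ell n^2 - n$, attaining the bound forces the minimum distance to be exactly $2$ in both metrics (ruling out the degenerate case $d = 1$). Once this reduction is in place, the core equivalence is nearly a tautology, because a weight-$1$ multi-cover admits only the two possibilities (a row or a column of a single component), and an appropriate choice of $t_{i_0}$ converts either one into a single nonzero column of $C^{\mathbf{t}}$.
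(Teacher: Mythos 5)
Your proposal is correct and takes essentially the same route as the paper, which (only sketching the argument) notes that at this dimension one need only consider multi-covers of size $1$, and that these are single rows or single columns, hence convertible to a single nonzero column of some $\mathcal{C}^{\mathbf{t}}$ by choosing the transposition pattern appropriately. Your Singleton-bound bookkeeping reducing both sides to ``minimum distance equals $2$'' is exactly the implicit first step of the paper's argument, just spelled out in more detail.
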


\begin{proposition}
Let $ \mathcal{C} \subseteq (\mathbb{F}_q^{2 \times 2})^\ell $ be a linear code. The following are equivalent:
\begin{enumerate}
\item
$ \mathcal{C}^\mathbf{t} $ is MDS by columns for all $ \mathbf{t} \in \{ 0,1 \}^\ell $.
\item
$ \mathcal{C} $ is MMCD.
\item
$ \mathcal{C} $ is dually MMCD.
\end{enumerate}
\end{proposition}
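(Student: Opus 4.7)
The plan is to exploit the hint given in the paper: for a $2\times 2$ matrix $C_i$, a minimum cover can always be chosen to consist of only rows or only columns. Concretely, I would first establish by a short case analysis on $|\operatorname{supp}(C_i)|$ that ${\rm wt}_{MC}(C_i)=\min\{{\rm wt}_H^C(C_i),{\rm wt}_H^R(C_i)\}$ for every $C_i\in\mathbb{F}_q^{2\times 2}$ (the only non-trivial cases being cover weight $1$, where the support lies in a single row or single column, and cover weight $2$, where both row and column Hamming weights are forced to equal $2$). Passing to $\ell$-tuples and choosing componentwise the orientation (row or column) that gives the smaller count, this yields the crucial identity
\[
{\rm wt}_{MC}(C)=\min_{\mathbf{t}\in\{0,1\}^\ell}{\rm wt}_H^C(C^{\mathbf{t}}),\qquad\text{hence}\qquad {\rm d}_{MC}(\mathcal{C})=\min_{\mathbf{t}\in\{0,1\}^\ell}{\rm d}_H^C(\mathcal{C}^{\mathbf{t}}).
\]

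With this identity in hand, the equivalence (1)$\Leftrightarrow$(2) is a bookkeeping exercise with the Singleton bound of Theorem \ref{th singleton}, specialized to $m=n=2$ and $N=2\ell$, which reads $|\mathcal{D}|\leq q^{2(2\ell-d+1)}$ both for the multi-cover distance of $\mathcal{D}=\mathcal{C}$ and for the column-Hamming distance of $\mathcal{D}=\mathcal{C}^{\mathbf{t}}$. Because $|\mathcal{C}^{\mathbf{t}}|=|\mathcal{C}|$ and $d^{\mathbf{t}}:={\rm d}_H^C(\mathcal{C}^{\mathbf{t}})\geq d:={\rm d}_{MC}(\mathcal{C})$, the MMCD bound is the tightest among them; equality in it forces equality in each Singleton bound for $\mathcal{C}^{\mathbf{t}}$, giving (2)$\Rightarrow$(1). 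Conversely, if every $\mathcal{C}^{\mathbf{t}}$ is MDS by columns then all $d^{\mathbf{t}}$ coincide, their common value equals $d$, and the equality for any one $\mathbf{t}$ is exactly the MMCD equality for $\mathcal{C}$, giving (1)$\Rightarrow$(2).

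Finally, for (2)$\Leftrightarrow$(3), the implication (3)$\Rightarrow$(2) is by definition. For (2)$\Rightarrow$(3), I would verify from $\langle C,D\rangle=\langle C^{\mathbf{t}},D^{\mathbf{t}}\rangle$ (which follows from ${\rm Tr}(C_iD_i)={\rm Tr}(C_i^\intercal D_i^\intercal)$) that $(\mathcal{C}^\perp)^{\mathbf{t}}=(\mathcal{C}^{\mathbf{t}})^\perp$ for all $\mathbf{t}$. Then starting from $\mathcal{C}$ MMCD, the already proved (2)$\Rightarrow$(1) gives that every $\mathcal{C}^{\mathbf{t}}$ is MDS by columns; the classical preservation of the MDS property under duality for the Hamming metric (extended to matrix codes via the trace product, as remarked in the excerpt) gives that $(\mathcal{C}^{\mathbf{t}})^\perp=(\mathcal{C}^\perp)^{\mathbf{t}}$ is MDS by columns for every $\mathbf{t}$; and (1)$\Rightarrow$(2) applied to $\mathcal{C}^\perp$ yields that $\mathcal{C}^\perp$ is MMCD. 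The main obstacle is the case analysis of Step 1 and being careful that the orientation choice $\mathbf{t}(C)$ depends on the codeword $C$, while the Singleton-bound comparison in Step 3 must be uniform in $\mathbf{t}$; the fact that all the bounds coincide once MMCD holds is exactly what makes the chain collapse.
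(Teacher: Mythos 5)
Your proof is correct and follows exactly the route the paper indicates: the paper omits the details, noting only that in the $ 2 \times 2 $ case every minimal cover can be chosen to consist of only rows or only columns, which is precisely your Step 1 identity $ {\rm wt}_{MC}(C) = \min_{\mathbf{t} \in \{0,1\}^\ell} {\rm wt}_H^C(C^{\mathbf{t}}) $. The remaining steps (the Singleton-bound bookkeeping, using $ |\mathcal{C}^{\mathbf{t}}| = |\mathcal{C}| $ to force all the $ d^{\mathbf{t}} $ to coincide, and the compatibility $ (\mathcal{C}^\perp)^{\mathbf{t}} = (\mathcal{C}^{\mathbf{t}})^\perp $ of transposition with the trace product for the duality part) are all sound.
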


Recall that the multi-cover metric in $ (\mathbb{F}_q^{1 \times 1})^\ell $ is simply the classical Hamming metric in $ \mathbb{F}_q^\ell $, hence the previous proposition also holds but is trivial in this case.

\subsection{Information multi-covers} \label{subsec information cover sets}

In this subsection, we explore the notion of information sets for the multi-cover metric. We will present two types, information multi-covers and complementary information multi-covers. The former characterize when the dual code is MMCD and the latter characterize when the primary code is MMCD. Both types of multi-covers extend information sets in the Hamming-metric case \cite[p. 4]{pless} and our characterizations recover in that case the characterizations of classical MDS codes based on information sets \cite[Th. 2.4.3]{pless}.

We start with a notion of support space similar to that from the Hamming-metric case \cite[Sec. II]{forney}. 

\begin{definition}
We define the support space of $ X \in {\rm MC}(\mathbf{m},\mathbf{n}) $ as the vector subspace
$$ \mathcal{V}_X = \left\lbrace \left. C \in \prod_{i=1}^\ell \mathbb{F}_q^{m_i \times n_i} \right| X \in {\rm MC}(C) \right\rbrace. $$
\end{definition}

Observe that, if $ X = (X_i,Y_i)_{i=1}^\ell \in {\rm MC}(\mathbf{m},\mathbf{n}) $, then
$$ \dim(\mathcal{V}_X) = \sum_{i=1}^\ell (n_i|X_i| + m_i|Y_i| - |X_i|\cdot |Y_i|) . $$
Other basic properties hold, as in the Hamming-metric case. However, duals of support spaces will not be themselves support spaces in general. In fact, we may easily characterize when duals of support spaces are again support spaces. The proof is straightforward.

\begin{proposition}
Given $ X \in {\rm MC}(\mathbf{m},\mathbf{n}) $, there exist $ X^\prime \in {\rm MC}(\mathbf{m},\mathbf{n}) $ with $ \mathcal{V}_X^\perp = \mathcal{V}_{X^\prime} $ if, and only if, for all $ i = 1,2, \ldots, \ell $, we have $ X_i = \varnothing $ or $ Y_i = \varnothing $.
\end{proposition}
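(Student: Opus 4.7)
The plan is to reduce the statement factor-by-factor and then settle a small combinatorial identity about cross-shaped subsets of $[m] \times [n]$. Since the multi-cover condition is imposed independently on the $\ell$ factors, $\mathcal{V}_X$ decomposes as a direct product $\mathcal{V}_X = \prod_{i=1}^\ell V_{X_i, Y_i}$, where $V_{X_i, Y_i} \subseteq \mathbb{F}_q^{m_i \times n_i}$ is the subspace of matrices whose support lies in the cross $S_i = (X_i \times [n_i]) \cup ([m_i] \times Y_i)$. Because the trace product is the orthogonal direct sum of the Frobenius inner products on the factors, the dual also splits, $\mathcal{V}_X^\perp = \prod_{i=1}^\ell V_{X_i, Y_i}^\perp$, and a support space of the full product is exactly a product of one-factor support spaces. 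So the existence of $X'$ reduces to the analogous one-factor question for each $i$.

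For a single factor the situation is transparent. Since the Frobenius pairing matches entries position-by-position, $V_{X,Y}^\perp$ is precisely the set of matrices supported on the complementary rectangle $X^c \times Y^c$. The claim then becomes the purely set-theoretic question: for which $X \subseteq [m]$, $Y \subseteq [n]$ is the rectangle $X^c \times Y^c$ itself expressible as a cross $(X' \times [n]) \cup ([m] \times Y')$? The ``if'' direction is immediate: if $X = \varnothing$, one picks $(X', Y') = (\varnothing, Y^c)$; if $Y = \varnothing$, one picks $(X', Y') = (X^c, \varnothing)$, and in either case $V_{X,Y}^\perp = V_{X',Y'}$.

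For the ``only if'' direction, I argue by contradiction. Suppose $X \neq \varnothing$ and $Y \neq \varnothing$, and that $X^c \times Y^c = (X' \times [n]) \cup ([m] \times Y')$ for some $(X', Y')$. If any $a \in X'$ existed, the entire row $\{a\} \times [n]$ would be forced into $X^c \times Y^c$, yielding $[n] \subseteq Y^c$ and so $Y = \varnothing$, a contradiction; hence $X' = \varnothing$, and symmetrically $Y' = \varnothing$. Then the right-hand side is empty, forcing $X^c \times Y^c = \varnothing$, which leaves only the degenerate boundary cases $X = [m]$ or $Y = [n]$ where $V_{X,Y}$ is already the full matrix space and its dual is the trivial support space $V_{\varnothing,\varnothing}$. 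The main obstacle is simply to dispatch these boundary cases cleanly; the combinatorial heart — that a cross is by definition a union of complete rows and complete columns, whereas a non-degenerate rectangle $X^c \times Y^c$ contains neither when $X,Y$ are both non-empty — makes the rest routine.
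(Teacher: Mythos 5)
Your reduction to a single factor, the identification of $ \mathcal{V}_{(X,Y)}^\perp $ (for one factor) with the coordinate subspace supported on the complementary rectangle $ X^c \times Y^c $, and the translation of the problem into the set-theoretic question of when a rectangle is a cross are exactly the natural argument here; the paper itself omits the proof as ``straightforward'', so there is nothing more specific to compare against, and your ``if'' direction is complete.

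The gap is in your last two sentences. Your contradiction argument correctly shows that if $ X \neq \varnothing \neq Y $ and $ X^c \times Y^c $ is a cross, then necessarily $ X = [m] $ or $ Y = [n] $; but these boundary cases cannot be ``dispatched cleanly'', because they are genuine counterexamples to the only-if direction as literally stated. Take $ \ell = 1 $, $ X = [m] $, $ Y = \{1\} $: then $ \mathcal{V}_{(X,Y)} = \mathbb{F}_q^{m \times n} $ (a full set of rows already covers every entry), so $ \mathcal{V}_{(X,Y)}^\perp = \{0\} = \mathcal{V}_{(\varnothing,\varnothing)} $ \emph{is} a support space, yet neither $ X $ nor $ Y $ is empty. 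So either the proposition must be read up to the harmless normalization ``if $ X_i = [m_i] $ replace $ Y_i $ by $ \varnothing $, and if $ Y_i = [n_i] $ replace $ X_i $ by $ \varnothing $'' --- which leaves $ \mathcal{V}_X $ unchanged and under which your argument closes --- or the only-if direction fails for such degenerate covers. Declaring the remaining step ``routine'' conceals the fact that it is not a proof obligation you can discharge but a (minor) imprecision in the statement that you need to surface and resolve explicitly.
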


We next define information multi-covers and complementary information multi-covers. Recall that, for $ X = (X_i,Y_i)_{i=1}^\ell \in {\rm MC}(\mathbf{m},\mathbf{n}) $, we have defined the projection map
\begin{equation}
\pi_X : \prod_{i=1}^\ell \mathbb{F}_q^{m_i \times n_i} \longrightarrow \prod_{i=1}^\ell \mathbb{F}_q^{(m_i - |X_i|) \times (n_i - |Y_i|)}
\label{eq def projection map}
\end{equation}
by removing all the entries covered by $ X $ (i.e., all rows indexed by $ X_i $ and all columns indexed by $ Y_i $, for $ i = 1,2, \ldots, \ell $). We then define complementary information multi-covers as follows.

\begin{definition}
Given a code $ \mathcal{C} \subseteq \prod_{i=1}^\ell \mathbb{F}_q^{m_i \times n_i} $, we say that $ X = (X_i,Y_i)_{i=1}^\ell \in {\rm MC}(\mathbf{m},\mathbf{n}) $ is a complementary information multi-cover of $ \mathcal{C} $ if one of the following equivalent conditions hold:
\begin{enumerate}
\item
The restriction of the projection map in (\ref{eq def projection map}) to the code $ \mathcal{C} $, that is, $ \pi_X : \mathcal{C} \longrightarrow \prod_{i=1}^\ell \mathbb{F}_q^{(m_i - |X_i|) \times (n_i - |Y_i|)} $, is injective.
\item
$ |\mathcal{C}_X| = |\mathcal{C}| $.
\item
(If $ \mathcal{C} $ is linear) $ \dim(\mathcal{C}_X) = \dim(\mathcal{C}) $.
\end{enumerate}
\end{definition}

In other words, when removing the entries indexed by $ X $ from a codeword in $ \mathcal{C} $, we may still reconstruct the whole codeword. Since $ X $ is formed by the indices of the erased coordinates (i.e., the indices not containing the information that we have), the term \textit{complementary} information cover is justified.

We next observe the monotonicity of complementary information multi-covers, which will allow us to only consider maximal complementary information multi-covers in the subsequent results.

\begin{lemma}
If $ \mathcal{C} \subseteq \prod_{i=1}^\ell \mathbb{F}_q^{m_i \times n_i} $ is a code and $ X = (X_i,Y_i)_{i=1}^\ell \in {\rm MC}(\mathbf{m},\mathbf{n}) $ is a complementary information multi-cover of $ \mathcal{C} $, then so is any $ X^\prime = (X^\prime_i,Y^\prime_i)_{i=1}^\ell \in {\rm MC}(\mathbf{m},\mathbf{n}) $ such that $ X^\prime_i \subseteq X_i $ and $ Y^\prime_i \subseteq Y_i $, for $ i = 1,2, \ldots, \ell $.
\end{lemma}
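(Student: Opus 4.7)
The plan is to exploit the compositional structure of the projection maps. Since $X'_i \subseteq X_i$ and $Y'_i \subseteq Y_i$ for all $i = 1, 2, \ldots, \ell$, the projection $\pi_X$ factors through $\pi_{X'}$: more precisely, one has $\pi_X = \tau \circ \pi_{X'}$, where
$$ \tau : \prod_{i=1}^\ell \mathbb{F}_q^{(m_i - |X'_i|) \times (n_i - |Y'_i|)} \longrightarrow \prod_{i=1}^\ell \mathbb{F}_q^{(m_i - |X_i|) \times (n_i - |Y_i|)} $$
is the further projection that deletes the rows and columns corresponding to the (re-indexed) images of $X_i \setminus X'_i$ and $Y_i \setminus Y'_i$ inside the punctured matrices.

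Given this factorization, the monotonicity is immediate. Suppose $C, D \in \mathcal{C}$ satisfy $\pi_{X'}(C) = \pi_{X'}(D)$. Applying $\tau$ to both sides yields $\pi_X(C) = \pi_X(D)$; since $X$ is a complementary information multi-cover of $\mathcal{C}$, the restriction $\pi_X|_\mathcal{C}$ is injective, forcing $C = D$. Hence $\pi_{X'}|_\mathcal{C}$ is also injective, which is the first (and equivalent to all the other) conditions in the definition of complementary information multi-cover. Thus $X'$ qualifies.

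The only delicate point, and it is entirely bookkeeping, is to justify the factorization $\pi_X = \tau \circ \pi_{X'}$. This reduces to the elementary observation that deleting a row-set $A$ from a matrix can always be accomplished in two stages: first delete a subset $A' \subseteq A$, then delete the (re-indexed) remnants of $A \setminus A'$ from the resulting matrix; the same applies to column-sets. I therefore expect no mathematical obstacle, and the proof should be only a few lines long.
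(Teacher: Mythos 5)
Your proof is correct: the factorization $\pi_X = \tau \circ \pi_{X'}$ is valid since deleting the rows and columns indexed by $X$ can be done in two stages, and injectivity of $\pi_{X'}|_{\mathcal{C}}$ then follows immediately from injectivity of $\pi_X|_{\mathcal{C}}$. The paper states this lemma without proof as an easy observation, and your argument is exactly the intended one.
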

 
As in the classical Hamming-metric case \cite[Th. 1.4.15]{pless}, it is easy to characterize the minimum multi-cover distance in terms of support spaces and complementary information multi-covers. The proof is straightforward.

\begin{lemma} \label{lemma charact min dist from supports}
Given a code $ \mathcal{C} \subseteq \prod_{i=1}^\ell \mathbb{F}_q^{m_i \times n_i} $ and a positive integer $ d $, the following are equivalent:
\begin{enumerate}
\item
$ {\rm d}_{MC}(\mathcal{C}) \geq d $.
\item
$ |\mathcal{C} \cap \mathcal{V}_X | = 1 $, for all $ X \in {\rm MC}(\mathbf{m}, \mathbf{n}) $ such that $ |X| = d-1 $. 
%\item
%(If $ \mathcal{C} $ is linear) Any $ X \in {\rm MC}(\mathbf{m},\mathbf{n}) $ such that $ |X| \leq d-1 $ is a complementary information multi-cover of $ \mathcal{C} $.
\item
Any $ X \in {\rm MC}(\mathbf{m},\mathbf{n}) $ such that $ |X| = d-1 $ is a complementary information multi-cover of $ \mathcal{C} $.
\end{enumerate}
\end{lemma}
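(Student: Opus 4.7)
The plan is to prove the cyclic chain of implications $(1) \Rightarrow (2) \Rightarrow (3) \Rightarrow (1)$, using one dictionary throughout: for $X \in {\rm MC}(\mathbf{m},\mathbf{n})$ and any $C \in \prod_{i=1}^\ell \mathbb{F}_q^{m_i \times n_i}$, the conditions ``$C \in \mathcal{V}_X$'', ``$X \in {\rm MC}(C)$'' and ``$\pi_X(C) = 0$'' are synonymous. Two small combinatorial observations will carry all of the work. First, if $X \in {\rm MC}(C) \cap {\rm MC}(D)$ then $X \in {\rm MC}(C-D)$, because the nonzero entries of $C-D$ lie in the union of those of $C$ and $D$. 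Second, any multi-cover $X_0$ with $|X_0| \leq d-1$ can be enlarged to some $X \supseteq X_0$ with $|X| = d-1$ by adjoining extra row- or column-indices, and enlarging a multi-cover preserves the covering property.

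For $(1) \Rightarrow (2)$, I would fix $X$ with $|X| = d-1$ and argue by contradiction: if two distinct codewords $C, D \in \mathcal{C}$ both lie in $\mathcal{V}_X$, the first observation yields $X \in {\rm MC}(C-D)$, so ${\rm wt}_{MC}(C-D) \leq |X| = d-1$, contradicting ${\rm d}_{MC}(\mathcal{C}) \geq d$. Together with the fact that $0 \in \mathcal{C} \cap \mathcal{V}_X$ in the linear setting in which the section is working, this forces $|\mathcal{C} \cap \mathcal{V}_X| = 1$.

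For $(2) \Leftrightarrow (3)$, I would rephrase (2) as $\mathcal{C} \cap \mathcal{V}_X = \{0\}$ using linearity and then invoke the equivalence $\pi_X(C) = \pi_X(D) \iff C - D \in \mathcal{V}_X$, which identifies the vanishing of $\mathcal{C} \cap \mathcal{V}_X$ with injectivity of $\pi_X$ restricted to $\mathcal{C}$ — that is, with condition 1 in the definition of a complementary information multi-cover.

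Finally, for $(3) \Rightarrow (1)$, I would contrapose: assuming ${\rm d}_{MC}(\mathcal{C}) < d$, pick distinct $C, D \in \mathcal{C}$ with ${\rm wt}_{MC}(C-D) \leq d-1$, take an optimal $X_0 \in {\rm MC}(C-D)$, and inflate it to some $X$ of size exactly $d-1$ by the second observation. Then $\pi_X(C) = \pi_X(D)$, so $\pi_X|_{\mathcal{C}}$ fails to be injective and $X$ is not a complementary information multi-cover. I do not foresee any real obstacle; the only mildly delicate point is the enlargement step and the minor ``$=1$ versus $\leq 1$'' bookkeeping, which is presumably why the author flags the proof as straightforward.
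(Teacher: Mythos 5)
Your argument is correct, and since the paper omits this proof entirely (declaring it ``straightforward''), your chain $(1)\Rightarrow(2)\Rightarrow(3)\Rightarrow(1)$ via the dictionary $C\in\mathcal{V}_X \iff X\in{\rm MC}(C) \iff \pi_X(C)=0$, the union-of-supports observation, and the enlargement of a minimal cover to size exactly $d-1$ is precisely the intended elementary argument. You are also right to flag that the equality $|\mathcal{C}\cap\mathcal{V}_X|=1$ in item 2 (and its equivalence with injectivity of $\pi_X|_{\mathcal{C}}$) genuinely requires $\mathcal{C}$ to be linear, which is the setting in which the paper applies the lemma.
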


As a consequence, we may characterize linear MMCD codes in terms of complementary information multi-covers, as in the Hamming-metric case \cite[Th. 2.4.3]{pless}.

\begin{proposition} \label{prop MMCD information covers}
Assume that $ m = m_1 = \ldots = m_\ell $ and set $ N = n_1 + n_2 + \cdots + n_\ell $. Given a linear code $ \mathcal{C} \subseteq \prod_{i=1}^\ell \mathbb{F}_q^{m \times n_i} $ whose dimension is a multiple of $ m $, the following are equivalent:
\begin{enumerate}
\item
$ \mathcal{C} $ is MMCD.
\item
Any $ X \in {\rm MC}(\mathbf{m},\mathbf{n}) $ with $ m(N - |X|) = \dim(\mathcal{C}) $ is a complementary information multi-cover of $ \mathcal{C} $.
\end{enumerate}
\end{proposition}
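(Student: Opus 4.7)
The plan is to translate the MMCD condition into a clean numerical equality using the simplification of the Singleton bound when all row-counts are equal, and then feed this directly into Lemma \ref{lemma charact min dist from supports}.

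First I would observe that with $m_1 = \ldots = m_\ell = m$, the Singleton bound of Theorem \ref{th singleton} collapses to $|\mathcal{C}| \leq q^{m(N - d + 1)}$, regardless of the integers $\delta$ and $j$ appearing in the general statement. Indeed, if $d - 1 = \sum_{i=1}^{j-1}n_i + \delta$, then $\sum_{i=j}^\ell m_i n_i - m_j \delta = m\bigl(N - \sum_{i=1}^{j-1}n_i - \delta\bigr) = m(N - d + 1)$. Writing $k = \dim(\mathcal{C})/m$, which is an integer by hypothesis, the MMCD condition therefore becomes $\dim(\mathcal{C}) = m(N - d + 1)$, i.e. $d = N - k + 1$, equivalently $d - 1 = N - k$. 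Simultaneously, the arithmetic condition $m(N - |X|) = \dim(\mathcal{C})$ appearing in item 2 becomes $|X| = N - k$.

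With these reductions the equivalence is a direct consequence of Lemma \ref{lemma charact min dist from supports}. For (1)$\Rightarrow$(2), if $\mathcal{C}$ is MMCD then $d - 1 = N - k$, so any $X$ with $|X| = N - k$ has $|X| = d - 1$, and Lemma \ref{lemma charact min dist from supports} (equivalence of items 1 and 3) guarantees it is a complementary information multi-cover. Conversely, for (2)$\Rightarrow$(1), the hypothesis says every $X \in {\rm MC}(\mathbf{m},\mathbf{n})$ with $|X| = N - k$ is a complementary information multi-cover, so Lemma \ref{lemma charact min dist from supports} gives ${\rm d}_{MC}(\mathcal{C}) \geq N - k + 1$. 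Combined with the Singleton bound $d \leq N - k + 1$ (which follows from $\dim(\mathcal{C}) = mk \leq m(N - d + 1)$), we get equality, whence $\mathcal{C}$ is MMCD.

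There is no substantial obstacle here: the only subtlety is verifying that the simplification of the Singleton exponent genuinely equals $m(N - d + 1)$ for every admissible $d$ when rows are equal, and that the divisibility assumption $m \mid \dim(\mathcal{C})$ is exactly what makes the two numerical conditions $|X| = d - 1$ and $m(N - |X|) = \dim(\mathcal{C})$ coincide when $d = N - k + 1$. Once these bookkeeping points are in place, the proof is essentially a single citation of Lemma \ref{lemma charact min dist from supports} in both directions.
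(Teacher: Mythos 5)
Your proof is correct and follows essentially the same route as the paper: both directions reduce the MMCD condition to $d = N - \dim(\mathcal{C})/m + 1$, identify the condition $m(N-|X|)=\dim(\mathcal{C})$ with $|X|=d-1$, and then invoke Lemma \ref{lemma charact min dist from supports} together with the Singleton bound for the converse. The extra bookkeeping you include (verifying that the Singleton exponent collapses to $m(N-d+1)$ for equal row counts) is a correct, slightly more explicit version of what the paper takes for granted.
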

\begin{proof}
Assume first that Item 1 holds. Then $ d = N - \dim(\mathcal{C})/m + 1 $, hence $ m(N - |X|) = \dim(\mathcal{C}) $ if, and only if, $ |X| = d-1 $, and in that case, $ X $ is a complementary information multi-cover by Lemma \ref{lemma charact min dist from supports}. Assume now that Item 2 holds. Define $ d = N - \dim(\mathcal{C})/m + 1 $, which is an integer since $ \dim(\mathcal{C}) $ is a multiple of $ m $. Now, $ m(N - |X|) = \dim(\mathcal{C}) $ if, and only if, $ |X| = d-1 $. Since all such multi-covers are complementary information multi-covers, then $ {\rm d}_{MC}(\mathcal{C}) \geq d $ by Lemma \ref{lemma charact min dist from supports}, and by Theorem \ref{th singleton}, $ \mathcal{C} $ must be MMCD.
\end{proof}

As noted above, duality does not work as expected for all multi-covers $ X \in {\rm MC}(\mathbf{m}, \mathbf{n}) $. Thus we define a notion of information multi-cover, which is a dual concept of that of complementary information multi-covers. These are the type of multi-covers that characterize when the dual of a linear code is MMCD. Since we are concerned with duality, we only consider linear codes in the remainder of this subsection.

For $ X \in {\rm MC}(\mathbf{m},\mathbf{n}) $, define the projection map
\begin{equation}
\pi^X : \prod_{i=1}^\ell \mathbb{F}_q^{m_i \times n_i} \longrightarrow \mathbb{F}_q^{\sum_{i=1}^\ell (n_i|X_i| + m_i|Y_i| - |X_i| \cdot |Y_i|)}
\label{eq def projection map dual}
\end{equation}
defined by removing all the entries not covered by $ X $.

\begin{definition}
Given a linear code $ \mathcal{C} \subseteq \prod_{i=1}^\ell \mathbb{F}_q^{m_i \times n_i} $, we say that $ X = (X_i,Y_i)_{i=1}^\ell \in {\rm MC}(\mathbf{m},\mathbf{n}) $ is an information multi-cover of $ \mathcal{C} $ if one of the following equivalent conditions hold:
\begin{enumerate}
\item
The restriction of the projection map in (\ref{eq def projection map dual}) to the code $ \mathcal{C} $, that is, $ \pi^X : \mathcal{C} \longrightarrow \mathbb{F}_q^{\sum_{i=1}^\ell (n_i|X_i| + m_i|Y_i| - |X_i| \cdot |Y_i|)} $, is surjective.
\item
$ \dim(\mathcal{C}^X) = \dim(\mathcal{C}) - \sum_{i=1}^\ell (m|X_i| + m|Y_i| - |X_i| \cdot |Y_i|) $.
%\item
%$ \dim(\pi^X(\mathcal{C})) = \sum_{i=1}^\ell (m|X_i| + m|Y_i| - |X_i| \cdot |Y_i|) $.
%$ \mathcal{C} $ has a set of $ \sum_{i=1}^\ell (m|X_i| + m|Y_i| - |X_i| \cdot |Y_i|) $ linearly independent codewords whose image by $ \pi^X $ given in (\ref{eq def projection map dual}) form the canonical basis in $ \mathbb{F}_q^{\sum_{i=1}^\ell (m|X_i| + m|Y_i| - |X_i| \cdot |Y_i|)} $.
\end{enumerate}
\end{definition}

Once again, it can be easily shown that information multi-covers satisfy the following monotonicity property. Again, this monotonicity property will allow us to only consider maximal information multi-covers in the subsequent results.

\begin{lemma}
If $ \mathcal{C} \subseteq \prod_{i=1}^\ell \mathbb{F}_q^{m_i \times n_i} $ is a linear code and $ X = (X_i,Y_i)_{i=1}^\ell \in {\rm MC}(\mathbf{m},\mathbf{n}) $ is an information multi-cover of $ \mathcal{C} $, then so is any $ X^\prime = (X^\prime_i,Y^\prime_i)_{i=1}^\ell \in {\rm MC}(\mathbf{m},\mathbf{n}) $ such that $ X^\prime_i \subseteq X_i $ and $ Y^\prime_i \subseteq Y_i $, for $ i = 1,2, \ldots, \ell $.
\end{lemma}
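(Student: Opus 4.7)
The plan is to mirror the proof of monotonicity for complementary information multi-covers, but now exploiting surjectivity (rather than injectivity) of the relevant projection. The key observation is that shrinking $X$ to $X'$ shrinks the set of covered entries, so that $\pi^{X'}$ arises as a coordinate projection of $\pi^{X}$, and a surjection composed with a coordinate projection onto a subset of coordinates remains a surjection.

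First, I would make precise the containment of covered-entry sets. An entry indexed by $(i,a,b)$ is covered by $X$ exactly when $a \in X_i$ or $b \in Y_i$. Since $X'_i \subseteq X_i$ and $Y'_i \subseteq Y_i$ for each $i$, every entry covered by $X'$ is also covered by $X$. Hence the set of ``covered coordinates'' determined by $X'$ is a subset of that determined by $X$, and by writing the inclusion–exclusion count one sees that the codomain of $\pi^{X'}$ is naturally a coordinate subspace of the codomain of $\pi^{X}$.

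Next, I would introduce the coordinate projection
\[
P : \mathbb{F}_q^{\sum_{i=1}^\ell (n_i|X_i| + m_i|Y_i| - |X_i|\cdot|Y_i|)} \longrightarrow \mathbb{F}_q^{\sum_{i=1}^\ell (n_i|X'_i| + m_i|Y'_i| - |X'_i|\cdot|Y'_i|)}
\]
that forgets the entries which are covered by $X$ but not by $X'$. By construction, $\pi^{X'} = P \circ \pi^{X}$ as maps on $\prod_{i=1}^\ell \mathbb{F}_q^{m_i \times n_i}$, and $P$ is surjective because it is simply a coordinate projection between $\mathbb{F}_q$-vector spaces.

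Finally, I would conclude: by hypothesis, the restriction $\pi^{X}|_{\mathcal{C}}$ is surjective onto its codomain. Composing with the surjection $P$ yields that $\pi^{X'}|_{\mathcal{C}} = P \circ \pi^{X}|_{\mathcal{C}}$ is also surjective onto its codomain, which is the first characterizing condition in the definition of an information multi-cover. Therefore $X'$ is an information multi-cover of $\mathcal{C}$, as required. I do not anticipate a genuine obstacle here: the only point needing slight care is the bookkeeping of the inclusion–exclusion formula $n_i|X_i|+m_i|Y_i|-|X_i|\cdot|Y_i|$ so that the codomain of $\pi^{X'}$ is correctly identified as the image of $P$, but this is a purely combinatorial check.
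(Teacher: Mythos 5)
Your proposal is correct. The paper states this lemma without proof (it only remarks that the property ``can be easily shown''), and your argument --- factoring $\pi^{X'} = P \circ \pi^{X}$ through the coordinate projection $P$ that forgets the entries covered by $X$ but not by $X'$, then noting that a surjection followed by a surjection is surjective --- is precisely the natural argument the author intends, with the inclusion--exclusion count $n_i|X_i|+m_i|Y_i|-|X_i|\cdot|Y_i|$ correctly identifying the codomains.
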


The reason why we use the term information multi-covers is due to the following theorem, which acts as the dual of Proposition \ref{prop MMCD information covers}

\begin{proposition} \label{prop MMCD information covers dual}
Assume that $ m = m_1 = \ldots = m_\ell $ and set $ N = n_1 + n_2 + \cdots + n_\ell $. Given a linear code $ \mathcal{C} \subseteq \prod_{i=1}^\ell \mathbb{F}_q^{m \times n_i} $ whose dimension is a multiple of $ m $, the following are equivalent:
\begin{enumerate}
\item
$ \mathcal{C}^\perp $ is MMCD.
\item
Any $ X \in {\rm MC}(\mathbf{m},\mathbf{n}) $ with $ m|X| = \dim(\mathcal{C}) $ is an information multi-cover of $ \mathcal{C} $.
\end{enumerate}
\end{proposition}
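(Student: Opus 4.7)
The plan is to apply the already-proved Proposition \ref{prop MMCD information covers} to the dual code $\mathcal{C}^\perp$ and translate between ``information multi-cover of $\mathcal{C}$'' and ``complementary information multi-cover of $\mathcal{C}^\perp$'' by means of the duality identity $(\mathcal{C}^X)^\perp = (\mathcal{C}^\perp)_X$ from Proposition \ref{prop puncturing shortening properties}(1). Since $\dim(\mathcal{C})+\dim(\mathcal{C}^\perp)=mN$ and $\dim(\mathcal{C})$ is a multiple of $m$, the dimension of $\mathcal{C}^\perp$ is also a multiple of $m$, so Proposition \ref{prop MMCD information covers} is indeed applicable to $\mathcal{C}^\perp$.

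First I would rewrite the right-hand condition of Proposition \ref{prop MMCD information covers} for $\mathcal{C}^\perp$: it asserts that $\mathcal{C}^\perp$ is MMCD if and only if every $X\in{\rm MC}(\mathbf{m},\mathbf{n})$ satisfying $m(N-|X|)=\dim(\mathcal{C}^\perp)$ is a complementary information multi-cover of $\mathcal{C}^\perp$. Using $\dim(\mathcal{C}^\perp)=mN-\dim(\mathcal{C})$, the equation $m(N-|X|)=\dim(\mathcal{C}^\perp)$ is equivalent to $m|X|=\dim(\mathcal{C})$, so the set of multi-covers appearing in the two statements is the same.

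Next I would show that, for any $X=(X_i,Y_i)_{i=1}^\ell\in{\rm MC}(\mathbf{m},\mathbf{n})$, the two conditions
\[
X \text{ is an information multi-cover of } \mathcal{C}, \qquad X \text{ is a complementary information multi-cover of } \mathcal{C}^\perp
\]
are equivalent. Writing $S=\sum_{i=1}^\ell (n_i|X_i|+m|Y_i|-|X_i|\cdot|Y_i|)$, the ambient space of $\mathcal{C}^X$ has dimension $mN-S$, since each position $(i,a,b)$ with $a\in X_i$ or $b\in Y_i$ is removed. By Proposition \ref{prop puncturing shortening properties}(1) we have $(\mathcal{C}^\perp)_X=(\mathcal{C}^X)^\perp$, hence
\[
\dim((\mathcal{C}^\perp)_X)=\dim((\mathcal{C}^X)^\perp)=(mN-S)-\dim(\mathcal{C}^X).
\]
Therefore $\dim((\mathcal{C}^\perp)_X)=\dim(\mathcal{C}^\perp)=mN-\dim(\mathcal{C})$ holds if and only if $\dim(\mathcal{C}^X)=\dim(\mathcal{C})-S$, which is precisely the defining condition for $X$ to be an information multi-cover of $\mathcal{C}$.

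Combining these two steps yields the desired equivalence between Items 1 and 2. I do not anticipate a real obstacle here: the entire argument is a bookkeeping exercise once the duality identity for puncturing and shortening and the dimension formula for the ambient shortened space are made explicit. The mildly delicate point is simply to verify that the parametric condition $m|X|=\dim(\mathcal{C})$ matches on both sides, which relies on the hypothesis $m_1=\cdots=m_\ell=m$ so that the ``column-block'' Singleton step from Proposition \ref{prop MMCD information covers} transports cleanly to $\mathcal{C}^\perp$.
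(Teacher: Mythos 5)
Your proof is correct, but it is packaged differently from the paper's. The paper argues directly with support spaces: it invokes Lemma \ref{lemma charact min dist from supports} in the form ``$\mathcal{C}^\perp$ is MMCD iff $\dim(\mathcal{C}^\perp\cap\mathcal{V}_X)=0$ for all $X$ with $m|X|=\dim(\mathcal{C})$'', then computes $\dim(\mathcal{C}^\perp\cap\mathcal{V}_X)=\dim(\mathcal{C}\cap\mathcal{V}_X^\perp)-\dim(\mathcal{C})+\dim(\mathcal{V}_X)$ by the dimension formula for sums and intersections, and finally identifies $\pi_X(\mathcal{C}\cap\mathcal{V}_X^\perp)=\mathcal{C}^X$. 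You instead reuse the already-proved primal statement (Proposition \ref{prop MMCD information covers}) applied to $\mathcal{C}^\perp$, and convert ``complementary information multi-cover of $\mathcal{C}^\perp$'' into ``information multi-cover of $\mathcal{C}$'' via the puncturing/shortening duality $(\mathcal{C}^X)^\perp=(\mathcal{C}^\perp)_X$ plus a count of the ambient dimension $mN-S$. The two routes rest on the same underlying duality, but yours is more modular (it leans on Propositions \ref{prop puncturing shortening properties} and \ref{prop MMCD information covers} rather than redoing the linear algebra) and has the virtue of isolating the clean general equivalence ``$X$ is an information multi-cover of $\mathcal{C}$ iff $X$ is a complementary information multi-cover of $\mathcal{C}^\perp$'', which the paper leaves implicit; the paper's version is more self-contained at the level of support spaces. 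One small point: your $S=\sum_{i}(n_i|X_i|+m|Y_i|-|X_i|\cdot|Y_i|)$ is the number of entries covered by $X$ and matches Item 1 (surjectivity of $\pi^X$) of the definition of information multi-cover, whereas Item 2 of that definition as printed has $m|X_i|$ in place of $n_i|X_i|$; the two agree only for square blocks, and your count is the consistent one, so your identification of the ``defining condition'' is the right reading.
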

\begin{proof}
Let $ d^\perp = {\rm d}_{MC}(\mathcal{C}^\perp) $. By definition, $ \mathcal{C}^\perp $ is MMCD if, and only if, $ \dim(\mathcal{C}^\perp) = m (N - d^\perp + 1) $, which is in turn equivalent to $ \dim(\mathcal{C}) = m(d^\perp - 1) $. Hence by Lemma \ref{lemma charact min dist from supports}, $ \mathcal{C}^\perp $ is MMCD if, and only if, for any $ X = (X_i, Y_i)_{i=1}^\ell \in {\rm MC}(\mathbf{m}, \mathbf{n}) $ such that $ m|X| = \dim(\mathcal{C}) $, it holds that $ \dim(\mathcal{C}^\perp \cap \mathcal{V}_X) = 0 $. Now, by linear algebra, we have that
\begin{equation*}
\begin{split}
\dim(\mathcal{C}^\perp \cap \mathcal{V}_X) & = mN - \dim(\mathcal{C} + \mathcal{V}_X^\perp) \\
& = mN - \dim(\mathcal{C}) - \dim(\mathcal{V}_X^\perp) + \dim(\mathcal{C} \cap \mathcal{V}_X^\perp) \\
& = \dim(\mathcal{C} \cap \mathcal{V}_X^\perp)  - \dim(\mathcal{C}) + \sum_{i=1}^\ell (m|X_i| + m|Y_i| - |X_i| \cdot |Y_i|) .
\end{split}
\end{equation*}
Furthermore, note that $ \pi_X(\mathcal{C} \cap \mathcal{V}_X^\perp) = \mathcal{C}^X $. Hence, $ \dim(\mathcal{C}^\perp \cap \mathcal{V}_X) = 0 $ if, and only if,
$$ \dim(\mathcal{C}^X) = \dim(\mathcal{C}) - \sum_{i=1}^\ell (m|X_i| + m|Y_i| - |X_i| \cdot |Y_i|) , $$
which means that $ X $ is an information multi-cover of $ \mathcal{C} $, and we are done.
\end{proof}

Hence we conclude the following.

\begin{corollary}
Assume that $ m = m_1 = \ldots = m_\ell $ and set $ N = n_1 + n_2 + \cdots + n_\ell $. Given a linear code $ \mathcal{C} \subseteq \prod_{i=1}^\ell \mathbb{F}_q^{m \times n_i} $ whose dimension is a multiple of $ m $, the following are equivalent:
\begin{enumerate}
\item
$ \mathcal{C}$ is dually MMCD.
\item
Any $ X \in {\rm MC}(\mathbf{m},\mathbf{n}) $ with $ m|X| = \dim(\mathcal{C}) $ is an information multi-cover of $ \mathcal{C} $ and any $ Y \in {\rm MC}(\mathbf{m},\mathbf{n}) $ with $ m (N-|Y|) = \dim(\mathcal{C}) $ is a complementary information multi-cover of $ \mathcal{C} $.
\end{enumerate}
\end{corollary}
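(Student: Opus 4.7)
The plan is to derive this corollary as a direct consequence of Propositions \ref{prop MMCD information covers} and \ref{prop MMCD information covers dual}, since by definition $\mathcal{C}$ is dually MMCD if and only if both $\mathcal{C}$ and $\mathcal{C}^\perp$ are MMCD, and those two propositions characterize each of these two conditions separately in terms of (complementary) information multi-covers.

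First I would observe that $\dim(\mathcal{C})$ being a multiple of $m$ is needed so that Propositions \ref{prop MMCD information covers} and \ref{prop MMCD information covers dual} both apply to $\mathcal{C}$ (and note that the same hypothesis automatically gives $\dim(\mathcal{C}^\perp) = mN - \dim(\mathcal{C})$ is a multiple of $m$ too). Next, I would unpack the definition of dually MMCD: it asserts that $\mathcal{C}$ is MMCD and $\mathcal{C}^\perp$ is MMCD.

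Then I would apply Proposition \ref{prop MMCD information covers} to $\mathcal{C}$: this says $\mathcal{C}$ is MMCD if and only if every $Y \in {\rm MC}(\mathbf{m},\mathbf{n})$ with $m(N - |Y|) = \dim(\mathcal{C})$ is a complementary information multi-cover of $\mathcal{C}$, which is exactly the second clause of Item 2. Simultaneously I would apply Proposition \ref{prop MMCD information covers dual} to $\mathcal{C}$: this says $\mathcal{C}^\perp$ is MMCD if and only if every $X \in {\rm MC}(\mathbf{m},\mathbf{n})$ with $m|X| = \dim(\mathcal{C})$ is an information multi-cover of $\mathcal{C}$, which is exactly the first clause of Item 2. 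The conjunction of both characterizations then gives the desired equivalence.

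There is no real obstacle here since both halves of the equivalence have already been established in the preceding propositions; the only thing worth verifying carefully is that the indexing constraints $m|X| = \dim(\mathcal{C})$ and $m(N - |Y|) = \dim(\mathcal{C})$ in Item 2 match exactly the hypotheses in Propositions \ref{prop MMCD information covers dual} and \ref{prop MMCD information covers}, respectively, which is immediate. Thus the proof reduces to a one-line citation of both propositions and the definition of dually MMCD codes.
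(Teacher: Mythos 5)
Your proof is correct and is exactly the argument the paper intends: the corollary is stated immediately after Propositions \ref{prop MMCD information covers} and \ref{prop MMCD information covers dual} precisely because it follows by conjoining their two equivalences with the definition of dually MMCD codes. Your check that the constraints $m|X| = \dim(\mathcal{C})$ and $m(N-|Y|) = \dim(\mathcal{C})$ match the hypotheses of the respective propositions is the only detail that needs verifying, and you handle it correctly.
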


\section{Constructions} \label{sec constructions}

In this section, we provide several methods of constructing codes for the multi-cover metric, focusing on linear MMCD codes and dually MMCD codes.

\subsection{Codes in the sum-rank metric} \label{subsec sum-rank}

Similarly to the case of the rank metric and the cover metric \cite{gabidulin-crisscross, roth}, we show in this subsection that sum-rank metric codes may be used as multi-cover metric codes. The sum-rank metric was formally defined in \cite[Sec. III-D]{multishot}, but it was implicitly used earlier in \cite[Sec. III]{spacetime-kumar}. 

\begin{definition} [\textbf{Sum-rank metric \cite{spacetime-kumar, multishot}}]
We define the sum-rank weight of $ C = (C_1, C_2, \ldots, C_\ell) \in \prod_{i=1}^\ell \mathbb{F}_q^{m_i \times n_i} $ as
$$ {\rm wt}_{SR}(C) = \sum_{i=1}^\ell {\rm Rk}(C_i). $$
The sum-rank metric is then defined as $ {\rm d}_{SR} : (\prod_{i=1}^\ell \mathbb{F}_q^{m_i \times n_i})^2 \longrightarrow \mathbb{N} $, where $ {\rm d}_{SR}(C,D) = {\rm wt}_{SR}(C-D) $, for $ C,D \in \prod_{i=1}^\ell \mathbb{F}_q^{m_i \times n_i} $.
\end{definition}

The bound in Theorem \ref{th singleton} is also valid for the sum-rank metric \cite[Th. III.2]{alberto-fundamental}, and a code attaining it is called maximum sum-rank distance (MSRD).

We have the following connections between both metrics. They constitute a trivial extension to $ \ell \geq 1 $ of the corresponding results for the case $ \ell = 1 $, which were first observed in \cite{gabidulin-crisscross, roth}. We note that Item 4 follows from combining Item 3 with the fact that the dual of a linear MSRD code is again MSRD code under the given conditions \cite[Th. VI.1]{alberto-fundamental}.

\begin{proposition} \label{prop multi-cover from sum-rank}
Fix $ C \in \prod_{i=1}^\ell \mathbb{F}_q^{m_i \times n_i} $ and $ \mathcal{C} \subseteq \prod_{i=1}^\ell \mathbb{F}_q^{m_i \times n_i} $. The following hold:
\begin{enumerate}
\item
$ {\rm wt}_{SR}(C) \leq {\rm wt}_{MC} (C) $.
\item
$ {\rm d}_{SR} (\mathcal{C}) \leq {\rm d}_{MC}(\mathcal{C}) $.
\item
If $ \mathcal{C} $ is an MSRD code, then it is also MMCD.
\item
If $ \mathcal{C} $ is a linear MSRD code and $ m_1 = m_2 = \ldots = m_\ell $, then $ \mathcal{C} $ is a dually MMCD code.
\end{enumerate}
\end{proposition}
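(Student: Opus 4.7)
The plan is to prove the four items in order, with each subsequent item following essentially from the previous ones combined with the bounds already established in the paper.

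For Item 1, I would fix any matrix $C_i \in \mathbb{F}_q^{m_i \times n_i}$ and any cover $(X_i, Y_i)$ of $C_i$, and show that $\mathrm{Rk}(C_i) \leq |X_i| + |Y_i|$. The argument is that every nonzero entry of $C_i$ lies either in a row indexed by $X_i$ or a column indexed by $Y_i$. Hence the row space of $C_i$ is spanned by the $|X_i|$ rows indexed by $X_i$ together with the standard vectors $e_j^\intercal$ for $j \in Y_i$, giving $\mathrm{Rk}(C_i) \leq |X_i| + |Y_i|$. Taking the minimum over all covers of $C_i$ gives $\mathrm{Rk}(C_i) \leq \mathrm{wt}_{MC}(C_i)$, and summing over $i = 1, \ldots, \ell$ yields $\mathrm{wt}_{SR}(C) \leq \mathrm{wt}_{MC}(C)$.

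Item 2 is immediate: taking the minimum of $\mathrm{wt}_{SR}(C-D) \leq \mathrm{wt}_{MC}(C-D)$ over distinct $C, D \in \mathcal{C}$ gives $\mathrm{d}_{SR}(\mathcal{C}) \leq \mathrm{d}_{MC}(\mathcal{C})$.

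For Item 3, I would combine Item 2 with the fact (cited from \cite{alberto-fundamental}) that the Singleton bound in Theorem~\ref{th singleton} applies identically to the sum-rank metric. If $\mathcal{C}$ is MSRD, then $|\mathcal{C}|$ equals the right-hand side of (\ref{eq singleton bound}) where $d$ is set to $\mathrm{d}_{SR}(\mathcal{C})$. But that right-hand side is strictly decreasing in $d$, so Item 2 together with Theorem~\ref{th singleton} applied to $\mathcal{C}$ with $d = \mathrm{d}_{MC}(\mathcal{C}) \geq \mathrm{d}_{SR}(\mathcal{C})$ forces $\mathrm{d}_{MC}(\mathcal{C}) = \mathrm{d}_{SR}(\mathcal{C})$ and equality in (\ref{eq singleton bound}) for the multi-cover metric. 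Hence $\mathcal{C}$ is MMCD.

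For Item 4, under the assumption that all $m_i$ are equal, I would invoke the cited result \cite[Th.~VI.1]{alberto-fundamental} stating that the dual of a linear MSRD code is again MSRD. Applying Item 3 to both $\mathcal{C}$ and $\mathcal{C}^\perp$ shows they are both MMCD, so $\mathcal{C}$ is dually MMCD by definition. I do not expect any genuine obstacle; the only mild subtlety is making sure that the Singleton bound invoked in Item 3 has the same form for both metrics (which the paper explicitly notes), so that MSRD-ness transfers directly to MMCD-ness without any recomputation of the bound.
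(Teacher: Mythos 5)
Your proposal is correct and follows exactly the route the paper intends: the paper gives no detailed proof, remarking only that Items 1--3 are the trivial extension to $\ell \geq 1$ of the classical $\ell = 1$ observations of Gabidulin and Roth, and that Item 4 follows from Item 3 together with the MSRD duality theorem of \cite[Th.~VI.1]{alberto-fundamental} --- precisely your argument. Your filled-in details (bounding $\mathrm{Rk}(C_i)$ by $|X_i|+|Y_i|$ via the row space, and using monotonicity of the Singleton bound in $d$ to transfer MSRD-ness to MMCD-ness) are the standard and correct way to make this rigorous.
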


We now show that sum-rank error and erasure correcting algorithms may be used to correct multilayer crisscross errors and erasures. There exist two equivalent formulations of error and erasure correction in the sum-rank metric. We consider both column and row erasures, which was first considered in \cite{sven-generic}. Item 2 is included since it was the formulation used in \cite{hormann, sven-generic}, but the connection with the multi-cover metric is easier using Item 3. The equivalence between the two is proven as in \cite[Prop. 17]{similarities}. 

\begin{proposition} \label{prop sum-rank errors and erasures}
Let $ \mathcal{C} \subseteq \prod_{i=1}^\ell \mathbb{F}_q^{m_i \times n_i} $ be a code, and let $ \rho_R $, $ \rho_C $ and $ t $ be non-negative integers. The following are equivalent:
\begin{enumerate}
\item
$ 2t + \rho_R + \rho_C < {\rm d}_{SR}(\mathcal{C}) $.
\item
For $ \mathcal{V}_R = (\mathcal{V}_{R,i})_{i=1}^\ell $ and $ \mathcal{V}_C = (\mathcal{V}_{C,i})_{i=1}^\ell $, where $ \mathcal{V}_{R,i} \subseteq \mathbb{F}_q^{n_i} $ and $ \mathcal{V}_{C,i} \subseteq \mathbb{F}_q^{m_i} $ are vector subspaces (known to the receiver), for $ i = 1,2, \ldots, \ell $, with $ \sum_{i=1}^\ell \dim(\mathcal{V}_{R,i}) \leq \rho_R $ and $ \sum_{i=1}^\ell \dim(\mathcal{V}_{C,i}) \leq \rho_C $, there exists a decoder $ D_{\mathcal{V}_R, \mathcal{V}_C,t} : \prod_{i=1}^{\ell} \mathbb{F}_q^{m_i \times n_i} \longrightarrow \mathcal{C} $ such that
$$ D_{\mathcal{V}_R, \mathcal{V}_C,t}(C + E + E_R + E_C) = C, $$
for all $ C \in \mathcal{C} $ and all $ E,E_R,E_C \in \prod_{i=1}^{\ell} \mathbb{F}_q^{m_i \times n_i} $ such that $ {\rm wt}_{SR}(E) \leq t $, $ {\rm wt}_{SR}(E_R) \leq \rho_R $ and $ {\rm wt}_{SR}(E_C) \leq \rho_C $, and the row space of the $ i $th component of $ E_R $ and column space of the $ i $th component of $ E_C $ are $ \mathcal{V}_{R,i} $ and $ \mathcal{V}_{C,i} $, respectively.
\item
For partitions $ \rho_R = \sum_{i=1}^\ell \rho_{R,i} $ and $ \rho_C = \sum_{i=1}^\ell \rho_{C,i} $ into non-negative integers, and for block-diagonal matrices $ A = {\rm diag}(A_1, A_2, \ldots, A_\ell) $ and $ B = {\rm diag}(B_1, B_2, \ldots, B_\ell) $, for full-rank matrices $ A_i \in \mathbb{F}_q^{(m_i - \rho_{R,i}) \times m_i} $ and $ B_i \in \mathbb{F}_q^{n_i \times (n_i - \rho_{C,i})} $ (known to the receiver), there exists a decoder $ D_{A,B,t} : \prod_{i=1}^{\ell} \mathbb{F}_q^{(m_i - \rho_{R,i}) \times (n_i - \rho_{C,i})} \longrightarrow \mathcal{C} $ such that
$$ D_{A,B,t}(ACB + E) = C, $$
for all $ C \in \mathcal{C} $ and all $ E \in \prod_{i=1}^{\ell} \mathbb{F}_q^{(m_i - \rho_{R,i}) \times (n_i - \rho_{C,i})} $ such that $ {\rm wt}_{SR}(E) \leq t $.
\end{enumerate}
\end{proposition}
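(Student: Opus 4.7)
The plan is to prove (1) $\Leftrightarrow$ (3) directly, and then obtain (2) $\Leftrightarrow$ (3) by applying the translation between subspace-supported erasures and matrix-multiplier erasures of \cite[Prop. 17]{similarities} block by block, since both formulations decouple across the $\ell$ components.

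For (1) $\Rightarrow$ (3), fix partitions $\rho_{R,i}, \rho_{C,i}$ and any full-rank block-diagonal $A, B$ of the prescribed shapes, and consider the map $\phi : C \mapsto ACB$. By Sylvester's rank inequality applied first to $A_i X_i$ and then to $(A_i X_i) B_i$, we obtain $\mathrm{rk}(A_i X_i B_i) \geq \mathrm{rk}(X_i) - \rho_{R,i} - \rho_{C,i}$. Summing over $i$ yields
$$ \mathrm{wt}_{SR}(A X B) \geq \mathrm{wt}_{SR}(X) - \rho_R - \rho_C $$
for every $X \in \prod_{i=1}^\ell \mathbb{F}_q^{m_i \times n_i}$. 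Applied to $X = C - C'$ with distinct $C, C' \in \mathcal{C}$, this shows $\phi(\mathcal{C})$ has sum-rank minimum distance at least $d_{SR}(\mathcal{C}) - \rho_R - \rho_C > 2t$, and in particular that $\phi|_\mathcal{C}$ is injective. Any sum-rank nearest-neighbour decoder on $\phi(\mathcal{C})$ therefore recovers $\phi(C)$ from $ACB + E$ whenever $\mathrm{wt}_{SR}(E) \leq t$, and $C$ is then obtained by inverting $\phi|_\mathcal{C}$.

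For the contrapositive of (3) $\Rightarrow$ (1), assume $d_{SR}(\mathcal{C}) \leq 2t + \rho_R + \rho_C$ and pick distinct $C_1, C_2 \in \mathcal{C}$ with $\mathrm{wt}_{SR}(C_1 - C_2) \leq 2t + \rho_R + \rho_C$. Setting $r_i = \mathrm{rk}(C_{1,i} - C_{2,i})$ and using a rank-one decomposition of each block, greedily split $r_i = a_i + b_i + c_i$ so that $\sum_i a_i \leq \rho_R$, $\sum_i b_i \leq \rho_C$ and $\sum_i c_i \leq 2t$; this is feasible because $\sum_i r_i \leq 2t + \rho_R + \rho_C$. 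Choose partitions $\rho_{R,i} \geq a_i$ and $\rho_{C,i} \geq b_i$ summing to $\rho_R$ and $\rho_C$ respectively, and pick full-rank $A_i, B_i$ of the prescribed shapes whose right and left null spaces contain the column space of the rank-$a_i$ summand and the row space of the rank-$b_i$ summand of $C_{1,i} - C_{2,i}$. Then $A(C_1 - C_2)B$ is the image of only the rank-$c_i$ remainders, so its sum-rank is at most $\sum_i c_i \leq 2t$; writing this as $E_2 - E_1$ with each $\mathrm{wt}_{SR}(E_j) \leq t$ (a further rank-one split suffices) yields $AC_1 B + E_1 = AC_2 B + E_2$, so no decoder can distinguish $C_1$ from $C_2$.

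The main obstacle is the constructive step in the contrapositive of (3) $\Rightarrow$ (1): the three-way block-wise split must simultaneously respect the total budgets $\rho_R$, $\rho_C$, $2t$ and match the per-block null-space dimensions of the chosen $A_i$, $B_i$. Once the rank-one decomposition of each $C_{1,i} - C_{2,i}$ is fixed, the remaining choices are essentially combinatorial; the forward direction is a short rank-inequality computation, and (2) $\Leftrightarrow$ (3) is a componentwise invocation of the cited single-shot translation.
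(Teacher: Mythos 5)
Your argument is correct, and it is worth noting that it supplies considerably more detail than the paper does: the paper gives no proof environment for this proposition at all, only the remark that ``the equivalence between the two [erasure formulations] is proven as in \cite[Prop. 17]{similarities}'', implicitly treating the error--erasure characterization as standard from the single-block case. Your route is the natural multi-block generalization of that cited argument. The forward direction via Sylvester's rank inequality applied blockwise, giving $ {\rm wt}_{SR}(AXB) \geq {\rm wt}_{SR}(X) - \rho_R - \rho_C $ and hence $ {\rm d}_{SR}(\phi(\mathcal{C})) > 2t $, is clean and correct; the converse via an explicit ambiguous received word is also sound, and the only genuinely new ingredient relative to $ \ell = 1 $ is exactly the step you flag: the greedy three-way distribution of the rank-one summands of $ C_1 - C_2 $ across the budgets $ \rho_R $, $ \rho_C $, $ 2t $, which is feasible precisely because $ \sum_i r_i \leq 2t + \rho_R + \rho_C $, followed by choosing $ A_i $, $ B_i $ whose kernels absorb the designated summands. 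Your blockwise invocation of \cite[Prop. 17]{similarities} for Items 2 and 3 matches what the paper does. The one point you might make explicit is the mild implicit assumption $ \rho_{R,i} \leq m_i $ and $ \rho_{C,i} \leq n_i $ needed for the matrices $ A_i $, $ B_i $ (and for extending $ a_i $, $ b_i $ to a valid partition); this degenerate regime is also glossed over in the paper, so it is not a defect of your proof relative to the source.
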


In particular, we deduce that codes able to correct sum-rank errors and erasures (in rows and columns) may without any change also correct multilayer crisscross errors and erasures as in Definition \ref{def decoders for crisscross}. The proof of the following corollary is straightforward by Item 3 in Proposition \ref{prop sum-rank errors and erasures} and Proposition \ref{prop multi-cover from sum-rank}.

\begin{corollary} \label{cor decoding from SR to MC}
Let $ \mathcal{C} \subseteq \prod_{i=1}^\ell \mathbb{F}_q^{m_i \times n_i} $ be a code, and let $ \rho_R $, $ \rho_C $ and $ t $ be non-negative integers with $ 2t + \rho_R + \rho_C < {\rm d}_{SR}(\mathcal{C}) $. Let $ X = (X_i, Y_i)_{i=1}^\ell \in {\rm MC}(\mathbf{m},\mathbf{n}) $ be such that $ \sum_{i=1}^\ell |X_i| = \rho_R $ and $ \sum_{i=1}^\ell |Y_i| = \rho_C $. Let $ A_i \in \mathbb{F}_q^{(m_i - |X_i|) \times m_i} $ and $ B_i \in \mathbb{F}_q^{n_i \times (n_i - |Y_i|)} $ be the projection matrices onto the coordinates outside of $ X_i $ and $ Y_i $, respectively, for $ i = 1,2, \ldots, \ell $. A sum-rank decoder $ D_{A,B,t} : \prod_{i=1}^{\ell} \mathbb{F}_q^{(m_i - |X_i|) \times (n_i - |Y_i|)} \longrightarrow \mathcal{C} $ as in Item 3 in Proposition \ref{prop sum-rank errors and erasures} may also be used without a change as a multi-cover decoder $ D_{X,t} = D_{A,B,t} : \prod_{i=1}^{\ell} \mathbb{F}_q^{(m_i - |X_i|) \times (n_i - |Y_i|)} \longrightarrow \mathcal{C} $ as in Definition \ref{def decoders for crisscross}.
\end{corollary}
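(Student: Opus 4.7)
The plan is to reduce the statement to a direct application of Item 3 in Proposition \ref{prop sum-rank errors and erasures}, using Item 1 of Proposition \ref{prop multi-cover from sum-rank} to pass from a multi-cover weight bound on the error to a sum-rank weight bound.

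First, I would make explicit the identification $\pi_X(C) = A C B$ where the product is understood componentwise via the block-diagonal matrices $A = {\rm diag}(A_1,\ldots,A_\ell)$ and $B = {\rm diag}(B_1,\ldots,B_\ell)$. Since $A_i \in \mathbb{F}_q^{(m_i-|X_i|) \times m_i}$ is defined as the projection onto the rows outside $X_i$ and $B_i \in \mathbb{F}_q^{n_i \times (n_i-|Y_i|)}$ is the projection onto the columns outside $Y_i$, we have $A_i C_i B_i = \pi_X(C)_i$ for every $C = (C_1,\ldots,C_\ell)$. In particular, $A_i$ and $B_i$ are full-rank matrices, so they are admissible inputs to the sum-rank decoder described in Proposition \ref{prop sum-rank errors and erasures}(3), with the partitions $\rho_{R,i} = |X_i|$ and $\rho_{C,i} = |Y_i|$, whose totals are $\rho_R$ and $\rho_C$ by hypothesis.

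Next, the hypothesis $2t + \rho_R + \rho_C < {\rm d}_{SR}(\mathcal{C})$ is exactly what guarantees, via Proposition \ref{prop sum-rank errors and erasures}(3), the existence of a decoder $D_{A,B,t}$ such that $D_{A,B,t}(A C B + E') = C$ for all $C \in \mathcal{C}$ and all $E' \in \prod_{i=1}^\ell \mathbb{F}_q^{(m_i-|X_i|) \times (n_i-|Y_i|)}$ with ${\rm wt}_{SR}(E') \leq t$. Fix this decoder and define $D_{X,t} = D_{A,B,t}$ as a map into $\mathcal{C}$; composing with the obvious inclusion makes it a map with the codomain required by Definition \ref{def decoders for crisscross} (alternatively, one may view Definition \ref{def decoders for crisscross} as requiring the codeword to be identified from its projection, in which case no modification is needed).

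Finally, given any $E \in \prod_{i=1}^\ell \mathbb{F}_q^{(m_i-|X_i|) \times (n_i-|Y_i|)}$ with ${\rm wt}_{MC}(E) \leq t$, apply Item 1 of Proposition \ref{prop multi-cover from sum-rank} in the reduced space to conclude ${\rm wt}_{SR}(E) \leq {\rm wt}_{MC}(E) \leq t$. Combining this with the identification $\pi_X(C) = A C B$ yields
\[ D_{X,t}(\pi_X(C) + E) = D_{A,B,t}(A C B + E) = C, \]
which is precisely the defining property of a $t$-error and $\rho$-erasure-correcting decoder for $\mathcal{C}$ and $X$ in Definition \ref{def decoders for crisscross}, where $\rho = \rho_R + \rho_C = |X|$. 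There is no genuine obstacle here: the only thing to check carefully is the matching of the erasure model (projection matrices versus removal of rows/columns indexed by $X_i$ and $Y_i$), after which the comparison ${\rm wt}_{SR} \leq {\rm wt}_{MC}$ does all the work.
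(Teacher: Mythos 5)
Your proposal is correct and follows exactly the route the paper indicates (the paper only remarks that the corollary is straightforward from Item 3 of Proposition \ref{prop sum-rank errors and erasures} together with Proposition \ref{prop multi-cover from sum-rank}): you identify $\pi_X$ with the block-diagonal projection $C \mapsto ACB$, take the partitions $\rho_{R,i}=|X_i|$, $\rho_{C,i}=|Y_i|$, and convert the multi-cover bound on the error to a sum-rank bound via $\mathrm{wt}_{SR} \leq \mathrm{wt}_{MC}$. Nothing is missing.
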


\subsection{A nested construction} \label{subsec nested}

In this subsection, we provide a general method to construct codes for the multi-cover metric from other multi-cover metric codes. The idea is to adequately arrange the components of a Cartesian product. Throughout this subsection, we fix positive integers $ m = ur $, $ n = us $ and $ t = u \ell $, and we will assume that $ n \leq m $ (thus $ s \leq r $).

\begin{construction} \label{const nested}
Let $ \mathcal{C} \subseteq (\mathbb{F}_q^{r \times s})^t $ be a code. We define another code $ \varphi( \mathcal{C} ) \subseteq (\mathbb{F}_q^{m \times n})^\ell $ as the image of the linear map $ \varphi : ((\mathbb{F}_q^{r \times s})^t )^u \longrightarrow (\mathbb{F}_q^{m \times n})^\ell $, where $ \varphi \left( C^1, C^2, \ldots, C^u \right) = $
$$
\resizebox{\textwidth}{!}{ $
\left(
\begin{array}{cccc|cccc|c|cccc}
C_1^1 & C_1^2 & \ldots & C_1^u & C_{u+1}^1 & C_{u+1}^2 & \ldots & C_{u+1}^u & \ldots & C_{(\ell-1)u+1}^1 & C_{(\ell-1)u+1}^2 & \ldots & C_{(\ell-1)u+1}^u \\
C_2^u & C_2^1 & \ldots & C_2^{u-1} & C_{u+2}^u & C_{u+2}^1 & \ldots & C_{u+2}^{u-1} & \ldots & C_{(\ell-1)u+2}^u & C_{(\ell-1)u+2}^1 & \ldots & C_{(\ell-1)u+2}^{u-1} \\
\vdots & \vdots & \ddots & \vdots & \vdots & \vdots & \ddots & \vdots & \ddots & \vdots & \vdots & \ddots & \vdots \\
C_u^2 & C_u^3 & \ldots & C_u^1 & C_{2u}^2 & C_{2u}^3 & \ldots & C_{2u}^1 & \ldots & C_t^2 & C_t^3 & \ldots & C_t^1 \\
\end{array}
\right),
$}
$$
for $ C^i = (C_1^i, C_2^i, \ldots, C_t^i) \in (\mathbb{F}_q^{r \times s})^t $, for $ i = 1,2, \ldots, u $.
\end{construction}

We now relate the multi-cover metric parameters of $ \mathcal{C} $ and $ \varphi(\mathcal{C}) $.

\begin{theorem} \label{th nested properties}
Let $ \mathcal{C} \subseteq (\mathbb{F}_q^{r \times s})^t $. The following hold:
\begin{enumerate}
\item
$ {\rm d}_{MC}(\varphi(\mathcal{C})) = {\rm d}_{MC}(\mathcal{C}) $ and $ |\varphi(\mathcal{C})| = |\mathcal{C}|^u $.
\item
$ \varphi(\mathcal{C}) $ is MMCD if, and only if, so is $ \mathcal{C} $.
\item
$ \varphi(\mathcal{C}) $ is linear if, and only if, so is $ \mathcal{C} $, and in that case, $ \dim(\varphi(\mathcal{C})) = u \dim(\mathcal{C}) $ and $ \varphi(\mathcal{C})^\perp = \varphi(\mathcal{C}^\perp) $.
\item
(If $ \mathcal{C} $ is linear) $ \varphi(\mathcal{C}) $ is a dually MMCD code if, and only if, so is $ \mathcal{C} $.
\end{enumerate}
\end{theorem}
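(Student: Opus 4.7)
The plan is to first establish two structural properties of the map $\varphi$ and then deduce all four items from them. The domain $((\mathbb{F}_q^{r \times s})^t)^u$ and codomain $(\mathbb{F}_q^{m \times n})^\ell$ both have dimension $u^2 \ell r s$ over $\mathbb{F}_q$ (using $m = ur$, $n = us$, $t = u\ell$), so it suffices to show $\varphi$ is injective to conclude it is a linear bijection. Injectivity is immediate: every $r \times s$ block of every big matrix carries a distinct entry $C^i_p$ of the input, and the block position reveals which $(i,p)$ it is, so the input can be reconstructed. Moreover, since $\varphi$ merely permutes scalar entries without scaling, it preserves the Frobenius inner product: $\langle \varphi(\mathbf{C}), \varphi(\mathbf{D}) \rangle = \sum_{i=1}^u \langle C^i, D^i \rangle$. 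Item 3 follows at once: $\varphi(\mathcal{C})$ is a subspace iff $\mathcal{C}^u$ is, iff $\mathcal{C}$ is, in which case $\dim \varphi(\mathcal{C}) = u \dim \mathcal{C}$; and the dual relation $\varphi(\mathcal{C})^\perp = \varphi(\mathcal{C}^\perp)$ follows by transporting the duality $(\mathcal{C}^u)^\perp = (\mathcal{C}^\perp)^u$ through the inner-product-preserving bijection $\varphi$.

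For Item 1, I would prove matching upper and lower bounds on $ {\rm d}_{MC}(\varphi(\mathcal{C})) $. For the upper bound, observe that when only one input $C^j$ is nonzero, the nonzero blocks in each big matrix occupy a generalized permutation pattern --- one block per block-row and one per block-column --- living in pairwise disjoint row-ranges and column-ranges of the big matrix. A multi-cover of such a configuration decomposes independently across blocks, yielding $ {\rm wt}_{MC}(\varphi(0,\ldots,C^j,\ldots,0)) = \sum_{p=1}^t {\rm wt}_{MC}(C^j_p) = {\rm wt}_{MC}(C^j) $. Choosing $C^j$ of minimum weight in $\mathcal{C}$ yields $ {\rm d}_{MC}(\varphi(\mathcal{C})) \leq {\rm d}_{MC}(\mathcal{C}) $.

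The lower bound is the main obstacle. Fix a nonzero input $(C^1,\ldots,C^u) \in \mathcal{C}^u$ and a minimum multi-cover $X = (X_k, Y_k)_{k=1}^\ell$ of $\varphi(C^1,\ldots,C^u)$ of size $w$. For each $j$, I would build an induced multi-cover of $C^j$ by restriction: for the block of $C^j$ sitting at block-row $i_1$, block-column $i_2$ of big matrix $k$, take the rows in $X_k$ that lie in block-row $i_1$ (translated to $\{1,\ldots,r\}$) and the columns in $Y_k$ that lie in block-column $i_2$. This restriction is a valid cover of that block by definition of multi-cover. The crucial combinatorial fact is the Latin-square-like layout of the construction: within big matrix $k$, the blocks belonging to the $j$-th input visit each of the $u$ block-rows exactly once and each of the $u$ block-columns exactly once. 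Summing the sizes of the induced covers over the $t$ components of $C^j$ therefore telescopes into $\sum_{k=1}^\ell (|X_k| + |Y_k|) = w$, independently of $j$. So $ {\rm wt}_{MC}(C^j) \leq w $ for every $j$; picking any $j$ with $C^j \neq 0$ (guaranteed by injectivity of $\varphi$) gives $ {\rm d}_{MC}(\mathcal{C}) \leq w = {\rm wt}_{MC}(\varphi(C^1,\ldots,C^u)) $, and minimizing over nonzero inputs completes the proof.

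Items 2 and 4 are then formal. Item 2 follows from Item 1 combined with $|\varphi(\mathcal{C})| = |\mathcal{C}|^u$ (from the bijection $\varphi$) and a direct check that the Singleton bound $q^{m(\ell n - d + 1)}$ for $\varphi(\mathcal{C})$ equals the $u$-th power $q^{ur(u\ell s - d + 1)}$ of the Singleton bound for $\mathcal{C}$, via $m=ur$, $n=us$, $t=u\ell$. Item 4 follows by applying Item 2 to both $\mathcal{C}$ and $\mathcal{C}^\perp$ and invoking $\varphi(\mathcal{C})^\perp = \varphi(\mathcal{C}^\perp)$ from Item 3.
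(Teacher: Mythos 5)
Your proposal is correct and follows essentially the same route as the paper: both directions of the distance equality are obtained by restricting a multi-cover of the big matrices to the blocks via the block-row/block-column partition (using the Latin-square layout so the sizes telescope to $\sum_k(|X_k|+|Y_k|)$) and, conversely, by lifting a cover of a single component code to a cover of $\varphi(D,0,\ldots,0)$, while Items 2--4 follow formally from the entry-permuting, inner-product-preserving bijection and the arithmetic $m=ur$, $n=us$, $t=u\ell$. Your handling of the induced cover's column index (pairing block-row $i_1$ with the block-column actually occupied by that block) is in fact slightly more careful than the paper's notation, which pairs $X_{i,j}$ with $Y_{i,j}$; the size count is unaffected.
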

\begin{proof}
The equality $ |\varphi(\mathcal{C})| = |\mathcal{C}|^u $ and Item 3 are straightforward, since $ \varphi(\mathcal{C}) \cong \mathcal{C}^u $ as vector spaces and the inner product is preserved through a natural isomorphism. We next prove that $ {\rm d}_{MC}(\varphi(\mathcal{C})) = {\rm d}_{MC}(\mathcal{C}) $.

Let $ C = \varphi \left( C^1, C^2, \ldots, C^u \right) $, where $ C^i \in \mathcal{C} $, for $ i = 1,2, \ldots, u $. Let $ (X_i,Y_i)_{i=1}^\ell \in {\rm MC}(C) $. Partition $ X_i = \bigcup_{j=1}^u X_{i,j} $ and $ Y_i = \bigcup_{j=1}^u Y_{i,j} $, where $ X_{i,j} \subseteq [(j-1)r+1, jr] $ and $ Y_{i,j} \subseteq [(j-1)s+1, js] $, for $ j = 1,2, \ldots, u $ and $ i = 1,2, \ldots, \ell $. If we identify $ X_{i,j} $ and $ Y_{i,j} $ in the obvious way with subsets of $ [r] $ and $ [s] $, respectively, then it is easy to see that $ (X_{i,j}, Y_{i,j})_{i=1,j=1}^{\ell, u} \in {\rm MC} (C^k) $, for $ k = 1,2, \ldots, u $. Assume that $ {\rm wt}_{MC}(C) = {\rm d}_{MC}(\varphi(\mathcal{C})) $ and $ (X_i,Y_i)_{i=1}^\ell \in {\rm MC}(C) $ is of minimum size. Therefore, if $ C^1 \neq 0 $,
$$ {\rm d}_{MC} (\varphi(\mathcal{C})) = {\rm wt}_{MC}(\mathcal{C}) = \sum_{i=1}^\ell (|X_i|+|Y_i|) $$
$$ = \sum_{i=1}^\ell \sum_{j=1}^u (|X_{i,j}| + |Y_{i,j}|) \geq {\rm wt}_{MC}(C^1) \geq {\rm d}_{MC}(\mathcal{C}) . $$

Conversely, let $ D \in \mathcal{C} $, and let $ (X^\prime_k, Y^\prime_k)_{k=1}^t \in {\rm MC}(D) $. Define $ X_{i,j} = (j-1)r + X^\prime_{(i-1) \ell +j} \subseteq [(j-1)r+1, jr] $ and $ Y_{i,j} = (j-1)s + Y^\prime_{(i-1) \ell +j} \subseteq [(j-1)s+1, js] $, for $ j = 1,2, \ldots, u $ and $ i = 1,2, \ldots, \ell $. Finally define $ X_i = \bigcup_{j=1}^u X_{i,j} $ and $ Y_i = \bigcup_{j=1}^u Y_{i,j} $, for $ i = 1,2, \ldots, \ell $. Then $ (X_i,Y_i)_{i=1}^\ell \in {\rm MC} (\varphi(D, 0, \ldots, 0 )) $, where $ \varphi(D, 0, \ldots, 0 ) \in \varphi(\mathcal{C}) $. Assume that $ {\rm wt}_{MC}(D) = {\rm d}_{MC}(\mathcal{C}) $ and $ (X^\prime_k, Y^\prime_k)_{k=1}^t \in {\rm MC}(D) $ is of minimum size. Hence, we conclude that 
$$ {\rm d}_{MC}(\mathcal{C}) = {\rm wt}_{MC}(D) = \sum_{k=1}^t (|X^\prime_k|+|Y^\prime_k|) $$
$$ = \sum_{i=1}^\ell (|X_i|+|Y_i|) \geq {\rm wt}_{MC} (\varphi(D, 0, \ldots, 0 )) \geq {\rm d}_{MC}(\varphi(\mathcal{C})). $$

Thus Item 1 is proven. We now prove Item 2. Assume that $ \mathcal{C} $ is MMCD, that is, $ |\mathcal{C}| = q^{r(ts - d + 1)} $, where $ d = {\rm d}_{MC}(\mathcal{C}) = {\rm d}_{MC}(\varphi(\mathcal{C})) $. Since $ m = ur $, $ n =us $ and $ t = u \ell $, we have that
$$ |\varphi(\mathcal{C})| = |\mathcal{C}|^u = q^{ur(ts - d+1)} = q^{m(\ell n - d + 1)}, $$
hence $ \varphi(\mathcal{C}) $ is also MMCD. For the reversed implication, we may show in the same way that if $ |\mathcal{C}| < q^{r(ts - d + 1)} $, then $ |\varphi(\mathcal{C})| < q^{m(\ell n - d + 1)} $. 

Finally, Item 4 follows from Items 2 and 3.
\end{proof}

In this way, we may construct codes in $ (\mathbb{F}_q^{m \times n})^\ell $ for the multi-cover metric from codes in $ (\mathbb{F}_q^{r \times s})^t $ for the refined multi-cover metric. Note that, if we set $ r = s = 1 $, then we may construct multi-cover metric codes in $ (\mathbb{F}_q^{n \times n})^\ell $ from Hamming-metric codes in $ \mathbb{F}_q^t $.

In addition, a decoder for $ \mathcal{C} $ may be directly used to obtain a decoder for $ \varphi(\mathcal{C}) $, as we now show. The proof is straightforward (but cumbersome) using the partitions from the proof of Theorem \ref{th nested properties}, and is left to the reader for brevity.

\begin{proposition} \label{prop decoding nested}
Let $ X = (X_i,Y_i)_{i=1}^\ell \in {\rm MC}(\mathbf{m},\mathbf{n}) $, partition $ X_i = \bigcup_{j=1}^u X_{i,j} $ and $ Y_i = \bigcup_{j=1}^u Y_{i,j} $, where $ X_{i,j} \subseteq [(j-1)r+1, jr] $ and $ Y_{i,j} \subseteq [(j-1)s+1, js] $, for $ j = 1,2, \ldots, u $ and $ i = 1,2, \ldots, \ell $. Next identify $ X_{i,j} $ and $ Y_{i,j} $ in the obvious way with subsets of $ [r] $ and $ [s] $, respectively, and define $ X^\prime = (X_{i,j}, Y_{i,j})_{i=1,j=1}^{\ell, s} $. Set $ \rho = |X| = |X^\prime| $. 

Let $ \mathcal{C} \subseteq (\mathbb{F}_q^{r \times s})^t $. Assume that $ 2t + \rho < d = {\rm d}_{MC}(\varphi(\mathcal{C})) = {\rm d}_{MC}(\mathcal{C}) $. Let $ D_{\mathcal{C},X^\prime} : (\mathbb{F}_q^{r \times s})^t \longrightarrow \mathcal{C} $ be a $ t $-error and $ \rho $-erasure-correcting decoder for $ \mathcal{C} $ and $ X^\prime $. Define $ D_{\varphi(\mathcal{C}),X} : (\mathbb{F}_q^{m \times n})^\ell \longrightarrow \varphi(\mathcal{C}) $ as the decoder that decodes $ \varphi(C^1, C^2, \ldots, C^u) $ by using $ D_{\mathcal{C},X^\prime} $ component-wise on the codewords $ C^k \in \mathcal{C} $, for $ k = 1,2, \ldots, u $. Then $ D_{\varphi(\mathcal{C}), X} $ is a $ t $-error and $ \rho $-erasure-correcting decoder for $ \varphi(\mathcal{C}) $ and $ X $.
\end{proposition}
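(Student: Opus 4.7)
The plan is to realize the decoder $D_{\varphi(\mathcal{C}),X}$ layer by layer, using the Latin-square-like block arrangement of Construction~\ref{const nested} to split the received word into $u$ independent decoding problems for $\mathcal{C}$.

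First, I would unpack the block structure. Each big matrix $i$ is viewed as a $u\times u$ grid of $r\times s$ blocks, with position $(j,k')$ holding one component of the layer $C^{k''}$ for the cyclic index $k''=((k'-j)\bmod u)+1$. The key combinatorial consequence is Latin-square-like: inside every big matrix, each layer $C^k$ occupies exactly one block per block row and one block per block column. Combined with the partitions $X_i=\bigcup_j X_{i,j}$ and $Y_i=\bigcup_j Y_{i,j}$, this lets me rewrite the received word $\pi_X(\varphi(C))+E$ as a concatenation of $u$ layer-wise received words $R^k$, where $R^k$ gathers the $t$ punctured blocks holding the entries of $C^k$.

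Second, I would verify the erasure and error budgets on each layer. For the erasure pattern $X^{(k)}$ induced on $C^k$, its size equals $\sum_{i,j}(|X_{i,j}|+|Y_{i,k'(j,k)}|)$ with the cyclic permutation $k'(j,k)=((k+j-2)\bmod u)+1$; as $j$ runs over $[u]$ the index $k'(j,k)$ also runs over $[u]$, so the inner sum collapses to $|X_i|+|Y_i|$ and the total becomes $|X|=\rho=|X^\prime|$. For the induced error $E^{(k)}$ on $C^k$, starting from a minimum multi-cover of $E$ by rows and columns of the $\ell$ big matrices and observing that each such big-matrix row or column meets exactly one block of layer $k$ in its big matrix, the restriction of that cover to the blocks occupied by $C^k$ is still a multi-cover of $E^{(k)}$, of size at most ${\rm wt}_{MC}(E)\le t$. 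Hence $|X^{(k)}|=\rho$ and ${\rm wt}_{MC}(E^{(k)})\le t$ for every $k$.

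With $2t+\rho<d={\rm d}_{MC}(\mathcal{C})$ and with the equivalent weight $\rho$ on each induced pattern, the characterisation of $t$-error and $\rho$-erasure correction attached to Definition~\ref{def decoders for crisscross} furnishes a decoder for $\mathcal{C}$ tailored to each $X^{(k)}$; this is the role played in the statement by $D_{\mathcal{C},X^\prime}$, with the obvious reindexing across layers. Feeding each $R^k$ into the corresponding decoder returns $C^k$, and $\varphi$ then reassembles $\varphi(C)$. The main obstacle is the combinatorial bookkeeping for the cyclic block arrangement: showing $|X^{(k)}|=\rho$ and ${\rm wt}_{MC}(E^{(k)})\le t$ both require tracking precisely which blocks of each big matrix are occupied by which layer and verifying that sums of the form $\sum_j|Y_{i,k'(j,k)}|$ telescope to $\sum_{k'}|Y_{i,k'}|$ by the Latin-square property, with a symmetric computation for the columns in the error multi-cover. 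Once these identities are in place the conclusion is immediate from the error-and-erasure capability of $\mathcal{C}$.
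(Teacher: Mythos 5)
Your argument is correct and follows exactly the route the paper indicates for this omitted proof, namely reusing the partitions $X_i=\bigcup_j X_{i,j}$, $Y_i=\bigcup_j Y_{i,j}$ from the proof of Theorem~\ref{th nested properties} to split the received word into $u$ layer-wise decoding problems, checking via the cyclic (Latin-square) block arrangement that each induced erasure pattern has size $\rho$ and each induced error has multi-cover weight at most $t$. Your observation that the decoder for layer $k\neq 1$ acts on a permuted pairing of the $X_{i,j}$ with the $Y_{i,j}$ (so that $D_{\mathcal{C},X^\prime}$ must be read with the obvious reindexing) is a fair and accurate reading of the statement's ``component-wise'' phrasing.
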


\subsection{Some explicit codes} \label{subsec explicit codes}

In this subsection, we put together the two methods for constructing multi-cover metric codes from Subsections \ref{subsec sum-rank} and \ref{subsec nested} in order to give explicit codes in the multi-cover metric. We consider positive integers $ n = us $ and $ t = u \ell $, as in the previous subsection (we will consider $ m = n $ and $ r = s $, i.e., square matrices).

We start by providing explicit families of dually MMCD codes for general parameters. We consider linearized Reed-Solomon codes \cite[Def. 31]{linearizedRS} as component codes in Construction \ref{const nested}, for several choices of the integer $ t $. Consider an ordered basis $ \boldsymbol\alpha = ( \alpha_1, \alpha_2, \ldots, \alpha_s ) \in \mathbb{F}_{q^s} $ of $ \mathbb{F}_{q^s} $ over $ \mathbb{F}_q $, and define $ M_{\boldsymbol\alpha} : \mathbb{F}_{q^s}^s \longrightarrow \mathbb{F}_q^{s \times s} $ by 
\begin{equation}
M_{\boldsymbol\alpha} \left( \mathbf{c} \right) = \left( \begin{array}{cccc}
c_{1,1} & c_{1,2} & \ldots & c_{1,s} \\
c_{2,1} & c_{2,2} & \ldots & c_{2,s} \\
\vdots & \vdots & \ddots & \vdots \\
c_{s,1} & c_{s,2} & \ldots & c_{s,s} \\
\end{array} \right) \in \mathbb{F}_q^{s \times s},
\label{eq def matrix representation map}
\end{equation}
where $ \mathbf{c} = \sum_{i=1}^s \alpha_i (c_{i,1}, c_{i,2}, \ldots, c_{i,s}) \in \mathbb{F}_{q^s}^s $, for $ i = 1,2, \ldots, s $. We extend this map to $ t $-tuples of matrices as 
\begin{equation}
M_{t ,\boldsymbol\alpha} \left( \mathbf{c}^{(1)}, \mathbf{c}^{(2)}, \ldots , \mathbf{c}^{(t)} \right) = \left( M_{\boldsymbol\alpha} \left( \mathbf{c}^{(1)} \right), M_{\boldsymbol\alpha} \left( \mathbf{c}^{(2)} \right), \ldots, M_{\boldsymbol\alpha} \left( \mathbf{c}^{(t)} \right) \right) \in (\mathbb{F}_q^{s \times s})^t,
\label{eq def matrix repr map tuples}
\end{equation}
where $ \mathbf{c}^{(i)} \in \mathbb{F}_{q^s}^s $, for $ i = 1,2, \ldots, t $. We will consider $ \mathcal{C} = M_{t ,\boldsymbol\alpha}(\mathcal{C}_{ts,k}) \subseteq (\mathbb{F}_q^{s \times s})^t $, where $ \mathcal{C}_{ts,k} \subseteq \mathbb{F}_{q^s}^{ts} $ is the linearized Reed--Solomon code with generator matrix $ G_{ts,k} = (G_1 | G_2 | \ldots | G_t) \in \mathbb{F}_{q^s}^{k \times (ts)} $, where 
\begin{equation*}
G_i = \left( \begin{array}{cccc}
\beta_1 & \beta_2 & \ldots & \beta_s \\
\beta_1^q \gamma^{i-1} & \beta_2^q \gamma^{i-1} & \ldots & \beta_s^q \gamma^{i-1} \\
\beta_1^{q^2} \gamma^{(i-1)\frac{q^2-1}{q-1}} & \beta_2^{q^2} \gamma^{(i-1)\frac{q^2-1}{q-1}} & \ldots & \beta_s^{q^2} \gamma^{(i-1)\frac{q^2-1}{q-1}} \\
\vdots & \vdots & \ddots & \vdots \\
\beta_1^{q^{k-1}} \gamma^{(i-1)\frac{q^{k-1}-1}{q-1}} & \beta_2^{q^{k-1}} \gamma^{(i-1)\frac{q^{k-1}-1}{q-1}} & \ldots & \beta_s^{q^{k-1}} \gamma^{(i-1)\frac{q^{k-1}-1}{q-1}}
\end{array} \right) \in \mathbb{F}_{q^s}^{k \times s},
\end{equation*}
for $ i = 1,2, \ldots, t $, where $ \beta_1, \beta_2, \ldots, \beta_s \in \mathbb{F}_{q^s} $ form a basis of $ \mathbb{F}_{q^s} $ over $ \mathbb{F}_q $ and $ \gamma \in \mathbb{F}_{q^s}^* $ is a primitive element, i.e., $ \mathbb{F}_{q^s}^* = \{ 1, \gamma, \gamma^2, \ldots, \gamma^{q^s-2} \} $, which always exists by \cite[Th. 2.8]{lidl}. 

If $ q > t $, then the linear code $ \mathcal{C} \subseteq (\mathbb{F}_q^{s \times s})^t $ above is MSRD \cite[Th. 4]{linearizedRS}. Hence we deduce the following result.

\begin{theorem} \label{th MMCD code from linearized RS}
Assume that $ q > t $. For any $ k = 1,2, \ldots, st = \ell n $, let $ \mathcal{C} \subseteq (\mathbb{F}_q^{s \times s})^t $ be as above. Then the code $ \varphi(\mathcal{C}) \subseteq (\mathbb{F}_q^{n \times n})^\ell $ obtained from $ \mathcal{C} $ as in Construction \ref{const nested} is a dually MMCD code of dimension $ \dim(\varphi(\mathcal{C})) = nk $. Furthermore, it can correct any $ t $ errors and $ \rho $ erasures for the multi-cover metric, as in Definition \ref{def decoders for crisscross}, where $ 2t + \rho \leq \ell n - k $, with a complexity of $ \mathcal{O}(t \ell n^2) $ sums and products over the finite field of size $ q^{\ell n / t} = \mathcal{O}(t)^{\ell n / t} $.
\end{theorem}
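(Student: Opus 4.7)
The plan is to bootstrap from three ingredients already in hand: (i) the component code $ \mathcal{C} $ is a linear MSRD code, since linearized Reed--Solomon codes are MSRD whenever $ q > t $ by \cite[Th. 4]{linearizedRS} and the matrix representation $ M_{t,\boldsymbol\alpha} $ is rank-preserving; (ii) MSRD implies MMCD, and since the matrices are square (i.e., $ m_i = n_i = s $), MSRD in fact implies dually MMCD by Item 4 of Proposition \ref{prop multi-cover from sum-rank}; (iii) the nested construction $ \varphi $ preserves the dually MMCD property and the dimension by Theorem \ref{th nested properties}. Efficient decoding is then inherited from any sum-rank error-and-erasure decoder for linearized Reed--Solomon codes via Corollary \ref{cor decoding from SR to MC} and Proposition \ref{prop decoding nested}.

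First I would verify the structural claims. Using the basis $ \boldsymbol\alpha $ to identify $ \mathbb{F}_{q^s}^s $ with $ \mathbb{F}_q^{s \times s} $ provides an $ \mathbb{F}_q $-linear isomorphism for which the $ \mathbb{F}_q $-rank of $ M_{\boldsymbol\alpha}(\mathbf{c}) $ equals the $ \mathbb{F}_{q^s} $-dimension of the $ \mathbb{F}_q $-span of the entries of $ \mathbf{c} $. Hence $ \mathcal{C} = M_{t,\boldsymbol\alpha}(\mathcal{C}_{ts,k}) \subseteq (\mathbb{F}_q^{s\times s})^t $ is an $ \mathbb{F}_q $-linear MSRD code of minimum sum-rank distance $ ts - k + 1 $ and $ \mathbb{F}_q $-dimension $ sk $. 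Item 4 of Proposition \ref{prop multi-cover from sum-rank} makes $ \mathcal{C} $ dually MMCD, and Items 3--4 of Theorem \ref{th nested properties} give that $ \varphi(\mathcal{C}) \subseteq (\mathbb{F}_q^{n\times n})^\ell $ is dually MMCD with $ \dim(\varphi(\mathcal{C})) = u\dim(\mathcal{C}) = u \cdot sk = nk $, as stated.

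For the decoding assertion, given any erasure multi-cover $ X \in {\rm MC}(\mathbf{m},\mathbf{n}) $ of size $ \rho $ together with at most $ t $ multilayer crisscross errors, I would refine $ X $ into $ X^\prime $ of equal size as described in Proposition \ref{prop decoding nested}. The bound $ 2t + \rho \leq \ell n - k = ts - k < {\rm d}_{SR}(\mathcal{C}) $ then permits a sum-rank error-and-erasure decoder for $ \mathcal{C} $ tolerating the erasure pattern $ X^\prime $ and up to $ t $ additional sum-rank errors (Item 3 of Proposition \ref{prop sum-rank errors and erasures}); Corollary \ref{cor decoding from SR to MC} shows that the same map decodes the corresponding multi-cover error/erasure model, and Proposition \ref{prop decoding nested} lifts it to $ \varphi(\mathcal{C}) $ by running the component decoder $ u $ times in parallel.

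The main obstacle is tracking the complexity honestly. Any quadratic-complexity sum-rank error-and-erasure decoder for linearized Reed--Solomon codes of length $ ts $ over $ \mathbb{F}_{q^s} $ (several are available in the literature) costs $ \mathcal{O}((ts)^2) = \mathcal{O}(\ell^2 n^2) $ operations over $ \mathbb{F}_{q^s} $; the $ u = n/s = t/\ell $ parallel invocations needed for $ \varphi(\mathcal{C}) $ multiply this to $ \mathcal{O}(u \ell^2 n^2) = \mathcal{O}(t \ell n^2) $ operations over $ \mathbb{F}_{q^s} $, and the identity $ s = \ell n / t $ yields field size $ q^s = q^{\ell n / t} = \mathcal{O}(t)^{\ell n / t} $ once $ q $ is chosen as the smallest prime power exceeding $ t $. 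Carrying out this bookkeeping and citing an appropriate decoder from the linearized Reed--Solomon literature is the only substantive step.
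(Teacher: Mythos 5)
Your proposal is correct and follows essentially the same route as the paper: MSRD via \cite[Th. 4]{linearizedRS}, then dually MMCD through Proposition \ref{prop multi-cover from sum-rank} and Theorem \ref{th nested properties}, and decoding lifted through Corollary \ref{cor decoding from SR to MC} and Proposition \ref{prop decoding nested} with the same $\mathcal{O}(s^2t^2)$-per-component, $u$-invocation complexity count. The only difference is that the paper pins down the specific error-erasure decoder of \cite{hormann} where you leave the citation generic, and you additionally spell out the dimension computation $\dim(\varphi(\mathcal{C})) = u \cdot sk = nk$, which the paper leaves implicit.
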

\begin{proof}
Since $ \mathcal{C} $ is MSRD \cite[Th. 4]{linearizedRS}, we have that $ \varphi(\mathcal{C}) $ is dually MMCD by Proposition \ref{prop multi-cover from sum-rank} and Theorem \ref{th nested properties}. 

Second, $ \mathcal{C} $ can correct $ t $ errors and $ \rho $ erasures (in rows or columns) for the sum-rank metric, as in Item 2 in Proposition \ref{prop sum-rank errors and erasures}, with a complexity of $ \mathcal{O}(s^2t^2) $ sums and products over the finite field $ \mathbb{F}_{q^s} $, by the algorithm in \cite{hormann}. By Corollary \ref{cor decoding from SR to MC} and Proposition \ref{prop decoding nested}, we may decode $ \varphi(\mathcal{C}) $ by using such a decoder $ u $ times. Hence the number of operations over $ \mathbb{F}_{q^s} $ is $ \mathcal{O}(rs^2t^2) = \mathcal{O}(t \ell n^2) $. Since we may choose $ q = \mathcal{O}(t) $, then $ q^s = q^{\ell n / t} = \mathcal{O}(t)^{\ell n / t} $, and we are done.
\end{proof}

Assume that a multiplication in $ \mathbb{F}_{2^b} $ costs $ \mathcal{O}(b^2) $ operations in $ \mathbb{F}_2 $. Then if $ q $ is even, the dually MMCD code in Theorem \ref{th MMCD code from linearized RS} can be decoded with 
$$ \mathcal{O} \left( t^{-1}\log_2(t+1)^2 \ell^3 n^4 \right) $$
operations over $ \mathbb{F}_2 $. This complexity is smaller for larger values of $ t $. However, the alphabet size for the multi-cover metric needs to satisfy $ q > t $. Thus we arrive at an \textit{alphabet-complexity trade-off}: Codes for larger $ t $ are faster to decode but require larger alphabets (i.e., can be applied to a smaller set of alphabets), whereas codes for smaller $ t $ are less fast but can be used for a wider range of alphabets. 

\begin{remark}
Note that, for $ t = 1 $, Theorem \ref{th MMCD code from linearized RS} corresponds to using Gabidulin codes as MMCD codes via Construction \ref{const nested}, since in this case linearized Reed-Solomon codes recover Gabidulin codes \cite{linearizedRS}. This is the only choice of $ t $ for the alphabet $ \mathbb{F}_2 $ but is also the least computationally efficient choice. Further setting $ \ell = 1 $, we obtain the classical rank-metric construction for the cover metric \cite{gabidulin-crisscross, roth}. On the other end, if we set $ t = \ell n $ (i.e., $ s = 1 $), then Theorem \ref{th MMCD code from linearized RS} corresponds to using classical Reed-Solomon codes as MMCD codes via Construction \ref{const nested}, since in this case linearized Reed-Solomon codes recover classical Reed-Solomon codes \cite{linearizedRS}. Further setting $ \ell = 1 $, we also recover the construction of cover-metric codes based on Reed-Solomon codes from \cite{roth}.
\end{remark}

\begin{remark}
There exist other families of MSRD codes for further parameter regimes \cite{generalMSRD}. In particular, those from \cite{generalMSRD} attain smaller field sizes than linearized Reed-Solomon codes for several parameter regimes, and have therefore a better potential of having a faster decoder. However such MSRD codes do not work for square matrices $ m = n $. All of these MSRD codes may be turned into MMCD codes as in Theorem \ref{th MMCD code from linearized RS}. We leave the details to the reader for brevity.
\end{remark}

Finally, we describe a family of codes in the multi-cover metric obtained from sum-rank BCH codes \cite{SR-BCH}, which contain the best codes in the sum-rank metric for the binary field $ \mathbb{F}_2 $ and $ 2 \times 2 $ matrices (see \cite[App.]{SR-BCH}). We will use the simplified bound on their parameters from \cite[Th. 9]{SR-BCH}, and we will not explicitly describe the codes for brevity.

\begin{proposition} \label{prop SR-BCH codes}
Let $ n = us $, $ t = u \ell $ and $ 0 \leq \delta \leq \ell n $ be positive integers. Assume that $ t $ and $ s $ are coprime, $ t $ and $ q $ are coprime, and $ t $ divides $ q - 1 $. Let $ q_0 $ and $ r $ be positive integers such that $ q = q_0^r $. Let $ x^t - 1 = m_1(x) m_2(x) \cdots m_v(x) $ be the irreducible decomposition of $ x^t-1 $ in $ \mathbb{F}_{q_0^s} $. Then there exists a linear code $ \mathcal{C} \subseteq (\mathbb{F}_{q_0}^{n \times n})^\ell $ such that $ {\rm d}_{MC}(\mathcal{C}) \geq \delta $ and
\begin{equation}
\dim(\mathcal{C}) \geq \ell n^2 - n \sum_{i=1}^v \min \left\lbrace s d_i , rk_i \right\rbrace ,
\label{eq lower bound dim SR BCH}
\end{equation}
where $ d_i = \deg(m_i(x)) $ and $ k_i = | \{ j \in \mathbb{N} \mid 0 \leq j \leq \delta - 2, m_i(a^{b+j}) = 0 \} | $, for $ i = 1,2, \ldots, v $, and where $ a \in \mathbb{F}_q $ is a primitive $ t $-th root of unity, which always exists \cite[Sec. 2.4]{lidl} \cite[Page 105]{pless}.
\end{proposition}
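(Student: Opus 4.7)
The plan is to obtain $\mathcal{C}$ by applying the nested construction of Subsection \ref{subsec nested} (Construction \ref{const nested}, with the identification $r=s$ so that $m=n$) to a sum-rank BCH code taken from \cite[Th. 9]{SR-BCH}, and then to transfer sum-rank parameters to multi-cover parameters via Proposition \ref{prop multi-cover from sum-rank} and Theorem \ref{th nested properties}. The role of the coprimality hypotheses ($\gcd(t,s) = \gcd(t,q) = 1$) and of the divisibility $t \mid q-1$ is precisely to guarantee the existence of the primitive $t$-th root of unity $a \in \mathbb{F}_q$ and to put the sum-rank BCH defining set in the regime covered by \cite[Th. 9]{SR-BCH}.

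Concretely, I would first invoke \cite[Th. 9]{SR-BCH} with block length $t$, matrix size $s \times s$, and base field $\mathbb{F}_{q_0}$ (so that the roots of unity live in $\mathbb{F}_{q_0^{rs}} = \mathbb{F}_{q^s}$), obtaining a linear sum-rank BCH code $\mathcal{C}_0 \subseteq (\mathbb{F}_{q_0}^{s \times s})^t$ with minimum sum-rank distance $d_{SR}(\mathcal{C}_0) \geq \delta$ and $\mathbb{F}_{q_0}$-dimension
$$\dim(\mathcal{C}_0) \geq s^2 t - s \sum_{i=1}^v \min \{ s d_i, r k_i \}.$$
I would then define $\mathcal{C} = \varphi(\mathcal{C}_0) \subseteq (\mathbb{F}_{q_0}^{n \times n})^\ell$ via Construction \ref{const nested}. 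Theorem \ref{th nested properties}(3) yields that $\mathcal{C}$ is linear with $\dim(\mathcal{C}) = u \dim(\mathcal{C}_0)$. Using $us = n$ and $st = \ell n$ (so $us^2 t = sn \cdot t = n \cdot st = \ell n^2$, and $us = n$), this expands into
$$\dim(\mathcal{C}) \geq u\Big( s^2 t - s \sum_{i=1}^v \min\{ s d_i, r k_i \} \Big) = \ell n^2 - n \sum_{i=1}^v \min\{ s d_i, r k_i \},$$
which matches (\ref{eq lower bound dim SR BCH}).

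For the distance, Proposition \ref{prop multi-cover from sum-rank}(2) gives $d_{MC}(\mathcal{C}_0) \geq d_{SR}(\mathcal{C}_0) \geq \delta$, and Theorem \ref{th nested properties}(1) preserves the minimum multi-cover distance under $\varphi$, so $d_{MC}(\mathcal{C}) = d_{MC}(\mathcal{C}_0) \geq \delta$. The only real obstacle is parameter bookkeeping: one must carefully verify that the conventions for $m_i$, $d_i$, $k_i$ in \cite[Th. 9]{SR-BCH} (in particular the roles of $s = [\mathbb{F}_{q^s}:\mathbb{F}_q]$ and $r = [\mathbb{F}_q : \mathbb{F}_{q_0}]$ inside the $\min\{sd_i, rk_i\}$ term arising from the subfield-subcode dimension estimate) line up with those in the statement, and that the square-matrix specialization $r=s$ of Construction \ref{const nested} is indeed admissible under the standing assumption $n \leq m$. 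Once these identifications are made, the result is immediate from the cited theorems.
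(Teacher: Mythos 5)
Your proposal is correct and follows essentially the same route as the paper: invoke \cite[Th. 9]{SR-BCH} for a sum-rank BCH code, pass to its $ s \times s $ matrix representation, apply Construction \ref{const nested}, and transfer distance and dimension via Proposition \ref{prop multi-cover from sum-rank} and Theorem \ref{th nested properties}. The only cosmetic difference is that the paper makes the matrix representation map $ M_{t,\boldsymbol\alpha} $ explicit before applying $ \varphi $, while you fold that step into the statement of the component code; your dimension bookkeeping ($ u s^2 t = \ell n^2 $, $ us = n $) matches the paper's exactly.
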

\begin{proof}
With the assumptions in the proposition, it holds by \cite[Th. 9]{SR-BCH} that there exists an $ \mathbb{F}_{q_0^s} $-linear code $ \mathcal{D} \subseteq \mathbb{F}_{q_0^s}^{ts} $ with ${\rm d}_{SR}(\mathcal{D}) \geq \delta $ and 
$$ \dim_{\mathbb{F}_{q_0^s}}(\mathcal{D}) \geq ts - \sum_{i=1}^v \min \left\lbrace s d_i , rk_i \right\rbrace . $$
Now, define $ \mathcal{C} = \varphi(M_{t,\boldsymbol\alpha}(\mathcal{D})) \subseteq (\mathbb{F}_{q_0}^{n \times n})^\ell $, with $ \varphi $ as in Construction \ref{const nested} and $ M_{t,\boldsymbol\alpha} $ as in (\ref{eq def matrix repr map tuples}) for an ordered basis $ \boldsymbol\alpha \in \mathbb{F}_{q_0^s}^s $ of $ \mathbb{F}_{q_0^s} $ over $ \mathbb{F}_{q_0} $. By Proposition \ref{prop multi-cover from sum-rank} and Theorem \ref{th nested properties}, we have that 
$$ {\rm d}_{MC}(\mathcal{C}) = {\rm d}_{MC}(M_{t,\boldsymbol\alpha}(\mathcal{D})) \geq {\rm d}_{SR}(M_{t,\boldsymbol\alpha}(\mathcal{D})) = {\rm d}_{SR}(\mathcal{D}) \geq \delta . $$
Finally, by Theorem \ref{th nested properties}, we have that 
%\begin{equation*}
%\begin{split}
%\dim(\mathcal{C}) & = u \dim(M_{t,\boldsymbol\alpha}(\mathcal{D})) \\
% & = us \dim_{\mathbb{F}_{q_0^s}}(\mathcal{D}) \\
% & \geq us \left( ts - \sum_{i=1}^v \min \left\lbrace s d_i , rk_i \right\rbrace \right) \\
% & = \ell n^2 - \sum_{i=1}^v n \min \left\lbrace n d_i , rk_i \right\rbrace ,
%\end{split}
%\end{equation*}
$$ \dim(\mathcal{C}) = u \dim(M_{t,\boldsymbol\alpha}(\mathcal{D})) = us \dim_{\mathbb{F}_{q_0^s}}(\mathcal{D}) $$  
$$ \geq us \left( ts - \sum_{i=1}^v \min \left\lbrace s d_i , rk_i \right\rbrace \right) = \ell n^2 - n \sum_{i=1}^v \min \left\lbrace s d_i , rk_i \right\rbrace , $$
and the result follows.
\end{proof}

The codes in Proposition \ref{prop SR-BCH codes} have an advantage over general subfield subcodes of linear MMCD codes, due to the following. One may consider Delsarte's lower bound on the dimension of a subfield subcode \cite{delsarte}. More concretely, let $ \mathcal{D} \subseteq (\mathbb{F}_q^{n \times n})^\ell $ be a linear MMCD code of dimension $ n (\delta - 1) $, $ q = q_0^r $ and $ \mathcal{C} = \mathcal{D}^\perp \cap (\mathbb{F}_{q_0}^{n \times n})^\ell $, for positive integers $ \delta $ and $ r $. Then $ {\rm d}_{MD}(\mathcal{C}) \geq \delta $ and Delsarte's bound \cite{delsarte} states that
\begin{equation}
\dim(\mathcal{C}) \geq \ell n^2 - nr (\delta - 1).
\label{eq lower bound delsarte}
\end{equation}
However, as shown in \cite[Sec. VII-C]{SR-BCH}, the lower bound (\ref{eq lower bound dim SR BCH}) is tighter than (\ref{eq lower bound delsarte}) in all cases. See also \cite[App.]{SR-BCH} for numerical tables.

Finally, as shown in \cite[Sec. VII-D]{SR-BCH}, a sum-rank BCH code (used in Proposition \ref{prop SR-BCH codes}) may be decoded for the sum-rank metric by decoding the corresponding linearized Reed--Solomon code. Since these latter codes may correct sum-rank errors and erasures (by rows and columns) by \cite{hormann}, then the same holds for the codes in Proposition \ref{prop SR-BCH codes}, by Corollary \ref{cor decoding from SR to MC} and Proposition \ref{prop decoding nested}.

\section{Conclusion and open problems} \label{sec conclusion}

In this work, we considered multilayer crisscross error and erasure correction, which affect entire rows and columns of matrices accross a list of matrices. We introduced the multi-cover metric to measure such errors and erasures. We derived bounds for this metric, including a Singleton-like bound, and introduced codes attaining it, called maximum multi-cover distance (MMCD) codes. We study their duality, puncturing and shortening properties, and then we explored several constructions of dually MMCD codes, together with their decoding. As open problems, it would be interesting to generalize to the multi-cover metric the following works in the classical cover metric:
\begin{enumerate}
\item
In this work, we considered error-free worst-case deterministic decoding. Probabilistic decoding in the cover metric was considered in \cite{roth-probabilistic}, where a low-redundancy probabilistic decoder allowing a small probability of error was presented.
\item
List decoding for the cover metric was first studied in \cite{wachter-crisscross}, where a Johnson bound was derived and an efficient list-decoding algorithm (up to the bound) was presented.
\item
Crisscross insertions and deletions were studied recently in \cite{bitar-crisscross, welter-crisscross}, where several code constructions are given.
\item
Codes with local crisscross erasure correcting properties were studied in \cite{kadhe-crisscross, liu-crisscross}. The work \cite{kadhe-crisscross} focuses on codes with locality for the rank metric, allowing to locally correct full columns, whereas a more general case is considered in \cite{liu-crisscross}.
\end{enumerate}

\section*{Acknowledgement}

The author gratefully acknowledges the support from a Mar{\'i}a Zambrano contract by the University of Valladolid, Spain (Contract no. E-47-2022-0001486).

%\umberto{Identificador contrato: E-47-2022-0001486}

%\footnotesize
 
%\nocite*{}
\bibliographystyle{plain}
%\bibliography{crisscross} 

\begin{thebibliography}{10}

\bibitem{bitar-crisscross}
R.~Bitar, L.~Welter, I.~Smagloy, A.~Wachter-Zeh, and E.~Yaakobi.
\newblock Criss-cross insertion and deletion correcting codes.
\newblock {\em IEEE Trans.\ Info.\ Theory}, 67(12):7999--8015, 2021.

\bibitem{alberto-fundamental}
E.~Byrne, H.~Gluesing-Luerssen, and A.~Ravagnani.
\newblock Fundamental properties of sum-rank-metric codes.
\newblock {\em IEEE Trans.\ Info.\ Theory}, 67(10):6456--6475, 2021.

\bibitem{delsarte}
Ph. Delsarte.
\newblock On subfield subcodes of modified {R}eed-{S}olomon codes.
\newblock {\em IEEE Trans.\ Info.\ Theory}, 21(5):575--576, September 1975.

\bibitem{forney}
G.~D. Forney~Jr.
\newblock Dimension/length profiles and trellis complexity of linear block
  codes.
\newblock {\em IEEE Trans.\ Info.\ Theory}, 40(6):1741--1752, 1994.

\bibitem{gabidulin-crisscross}
E.~M. Gabidulin.
\newblock Optimal array error-correcting codes.
\newblock {\em Prob.\ Info.\ Transmission}, 21(2):102--106, 1985.

\bibitem{gabidulin-survey}
E.~M. Gabidulin.
\newblock A brief survey of metrics in coding theory.
\newblock {\em Mathematics of Distances and Applications}, 66:66--84, 2012.

\bibitem{gabidulin-lattice}
E.~M. Gabidulin and V.~I. Korzhik.
\newblock Codes correcting lattice-pattern errors.
\newblock {\em Zzvestiya VUZ. Radioelektronika}, 4(6):7, 1972.

\bibitem{hormann}
F.~H{\"o}rmann, H.~Bartz, and S.~Puchinger.
\newblock Error-erasure decoding of linearized {R}eed-{S}olomon codes in the
  sum-rank metric.
\newblock {P}reprint: arXiv:2202.06758.

\bibitem{pless}
W.~C. Huffman and V.~Pless.
\newblock {\em Fundamentals of error-correcting codes}.
\newblock Cambridge University Press, Cambridge, 2003.

\bibitem{kabore}
A.~W. Kabore, V.~Meghdadi, J.-P. Cances, P.~Gaborit, and O.~Ruatta.
\newblock Performance of {G}abidulin codes for narrowband {PLC} smart grid
  networks.
\newblock In {\em 2015 IEEE Int. Symp. Power Line Comm. App. (ISPLC)}, pages
  262--267. IEEE, 2015.

\bibitem{kadhe-crisscross}
S.~Kadhe, S.~El~Rouayheb, I.~Duursma, and A.~Sprintson.
\newblock Codes with locality in the rank and subspace metrics.
\newblock {\em IEEE Trans.\ Info.\ Theory}, 65(9):5454--5468, 2019.

\bibitem{lidl}
R.~Lidl and H.~Niederreiter.
\newblock {\em Finite Fields}, volume~20 of {\em Encyclopedia of Mathematics
  and its Applications}.
\newblock Addison-Wesley, Amsterdam, 1983.

\bibitem{liu-crisscross}
H.~Liu, L.~Holzbaur, and A.~Wachter-Zeh.
\newblock Locality in crisscross error correction.
\newblock 2018.
\newblock {P}reprint: arXiv:1806.07496.

\bibitem{spacetime-kumar}
H.-F. Lu and P.~V. Kumar.
\newblock A unified construction of space--time codes with optimal
  rate-diversity tradeoff.
\newblock {\em IEEE Trans.\ Info.\ Theory}, 51(5):1709--1730, May 2005.

\bibitem{generalMSRD}
U.~Mart{\'i}nez-Pe{\~n}as.
\newblock A general family of {MSRD} codes and {PMDS} codes with smaller field
  sizes from extended {M}oore matrices.
\newblock {P}reprint: arXiv:2011.14109.

\bibitem{similarities}
U.~Mart{\'i}nez-Pe{\~n}as.
\newblock On the similarities between generalized rank and {H}amming weights
  and their applications to network coding.
\newblock {\em IEEE Trans.\ Info.\ Theory}, 62(7):4081--4095, 2016.

\bibitem{linearizedRS}
U.~Mart{\'i}nez-Pe{\~n}as.
\newblock Skew and linearized {R}eed--{S}olomon codes and maximum sum rank
  distance codes over any division ring.
\newblock {\em J.\ Algebra}, 504:587--612, 2018.

\bibitem{SR-BCH}
U.~Mart{\'i}nez-Pe{\~n}as.
\newblock Sum-rank {BCH} codes and cyclic-skew-cyclic codes.
\newblock {\em IEEE Trans.\ Info.\ Theory}, 67(8):5149--5167, 2021.

\bibitem{multishot}
R.~W. N{\'o}brega and B.~F. Uch{\^o}a-Filho.
\newblock Multishot codes for network coding using rank-metric codes.
\newblock In {\em Proc. Third IEEE Int.\ Workshop Wireless Network Coding},
  pages 1--6, 2010.

\bibitem{crisscross-patent}
E.~Ordentlich, R.~M. Roth, and G.~Seroussi.
\newblock Memory controller using crisscross error-correcting codes, June~30
  2015.
\newblock US Patent 9,070,436.

\bibitem{sven-generic}
S.~Puchinger, J.~Renner, and J.~Rosenkilde.
\newblock Generic decoding in the sum-rank metric.
\newblock In {\em Proc.\ IEEE Int.\ Symp. Info.\ Theory}, pages 54--59.

\bibitem{roth}
R.~M. Roth.
\newblock Maximum-rank array codes and their application to crisscross error
  correction.
\newblock {\em IEEE Trans.\ Info.\ Theory}, 37(2):328--336, March 1991.

\bibitem{roth-probabilistic}
R.~M. Roth.
\newblock Probabilistic crisscross error correction.
\newblock {\em IEEE Trans.\ Info.\ Theory}, 43(5):1425--1438, 1997.

\bibitem{wachter-crisscross}
A.~Wachter-Zeh.
\newblock List decoding of crisscross errors.
\newblock {\em IEEE Trans.\ Info.\ Theory}, 63(1):142--149, 2016.

\bibitem{welter-crisscross}
L.~Welter, R.~Bitar, A.~Wachter-Zeh, and E.~Yaakobi.
\newblock Multiple criss-cross insertion and deletion correcting codes.
\newblock {\em IEEE Trans.\ Info.\ Theory}, 2022.

\bibitem{yazbek}
A.~K. Yazbek, I.~El~Qachchach, J.-P. Cances, and V.~Meghdadi.
\newblock Low rank parity check codes and their application in power line
  communications smart grid networks.
\newblock {\em Int. J. Commun. Systems}, 30(12):e3256, 2017.

\end{thebibliography}

\appendix

\section*{Appendix: Proofs for Subsection \ref{subsec size ball}} \label{app size ball}

\begin{proof}[Proof of Lemma \ref{lemma size full sphere}]
The equality $ S_1^{m,1} = q^m - 1 $ is trivial. Now, let $ \mathcal{C} \subseteq \mathbb{F}_q^{m \times n} $ be the set of matrices with no zero columns, and let $ \mathcal{R} \subseteq \mathbb{F}_q^{m \times n} $ be the set of matrices with no zero rows. Then $ S_n^{m,n} = | \mathcal{C} \cup \mathcal{R} | $. Hence, we only need to prove that
$$ |(\mathcal{C} \cup \mathcal{R})^c| = | \mathcal{C}^c \cap \mathcal{R}^c | = \sum_{i=1}^m \sum_{j=1}^n \binom{m}{i} \binom{n}{j} S_{\min \{ m-i, n-j \}}^{m-i, n-j}, $$
where $ \mathcal{A}^c $ denotes the complement of a subset $ \mathcal{A} \subseteq \mathbb{F}_q^{m \times n} $ in $ \mathbb{F}_q^{m \times n} $. Note that $ |\mathcal{C}^c \cap \mathcal{R}^c| $ is the number of matrices in $ \mathbb{F}_q^{m \times n} $ with at least one zero row and at least one zero column. For $ i = 1,2, \ldots, m $ and $ j = 1,2, \ldots, n $, the number of matrices in $ \mathbb{F}_q^{m \times n} $ with exactly $ i $ zero rows and exactly $ j $ zero columns is
$$ \binom{m}{i} \binom{n}{j} S_{\min \{ m-i, n-j \}}^{m-i, n-j}. $$
Hence the recursive formula follows.

Finally, the inequality $ S_n^{m,n} \leq q^{mn} $ is trivial, and the inequality $ (q-1)^{mn} \leq S_n^{m,n} $ follows from the fact that any matrix in $ \mathbb{F}_q^{m \times n} $ with no zero entries has cover weight $ n $.
\end{proof}

\begin{proof} [Proof of Theorem \ref{th size ball}]
Let $ (X,Y) \in {\rm MC}(m,n) $ with $ |X| + |Y| = r $. Set $ s = |X| $, thus $ r-s = |Y| $. The number of matrices having $ (X,Y) $ as a minimal cover (thus of cover weight $ r $) is exactly $ (q^{m-s}-1)^{r-s}(q^{n-r+s}-1)^s q^{s(r-s)} $. To see this, consider Fig. \ref{fig ball size upper bound}.

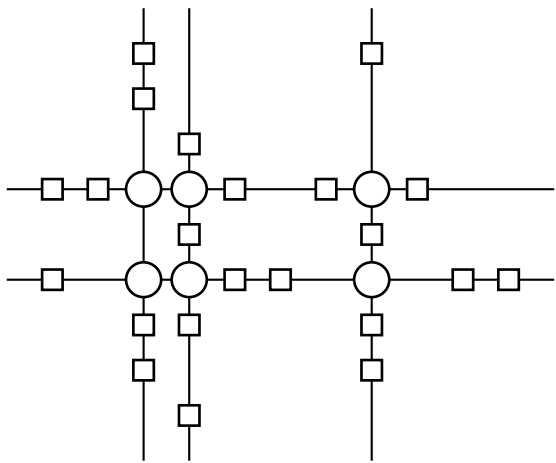
\begin{figure} [!h]
\begin{center}
%\resizebox{\columnwidth}{!}{%
\begin{tabular}{c@{\extracolsep{1cm}}c}
	\begin{tikzpicture}[line width=1pt, scale=0.6]
		\tikzstyle{every node}=[inner sep=0pt, minimum width=4.5pt]
		%\path[fill=gray] (2,0) -- (0,2) -- (-2,0) -- cycle;
		
		%\draw (3,5) node (v1) [draw, circle, fill=gray] {};

		\draw[thick] (3,0) -- (3,10);
		\draw[thick] (4,0) -- (4,10);
		\draw[thick] (8,0) -- (8,10);
		
		\draw[thick] (0,4) -- (12,4);
		\draw[thick] (0,6) -- (12,6);
		
		\draw (3,2) node () [draw, rectangle, fill=white, minimum size=0.7em] {};
		\draw (3,3) node () [draw, rectangle, fill=white, minimum size=0.7em] {};
		\draw (3,8) node () [draw, rectangle, fill=white, minimum size=0.7em] {};
		\draw (3,9) node () [draw, rectangle, fill=white, minimum size=0.7em] {};
		
		\draw (4,1) node () [draw, rectangle, fill=white, minimum size=0.7em] {};
		\draw (4,3) node () [draw, rectangle, fill=white, minimum size=0.7em] {};
		\draw (4,5) node () [draw, rectangle, fill=white, minimum size=0.7em] {};
		\draw (4,7) node () [draw, rectangle, fill=white, minimum size=0.7em] {};
		
		\draw (8,2) node () [draw, rectangle, fill=white, minimum size=0.7em] {};
		\draw (8,3) node () [draw, rectangle, fill=white, minimum size=0.7em] {};
		\draw (8,5) node () [draw, rectangle, fill=white, minimum size=0.7em] {};
		\draw (8,9) node () [draw, rectangle, fill=white, minimum size=0.7em] {};
		
		\draw (1,4) node () [draw, rectangle, fill=white, minimum size=0.7em] {};
		\draw (5,4) node () [draw, rectangle, fill=white, minimum size=0.7em] {};
		\draw (6,4) node () [draw, rectangle, fill=white, minimum size=0.7em] {};
		\draw (10,4) node () [draw, rectangle, fill=white, minimum size=0.7em] {};
		\draw (11,4) node () [draw, rectangle, fill=white, minimum size=0.7em] {};
		
		\draw (1,6) node () [draw, rectangle, fill=white, minimum size=0.7em] {};
		\draw (2,6) node () [draw, rectangle, fill=white, minimum size=0.7em] {};
		\draw (5,6) node () [draw, rectangle, fill=white, minimum size=0.7em] {};
		\draw (7,6) node () [draw, rectangle, fill=white, minimum size=0.7em] {};
		\draw (9,6) node () [draw, rectangle, fill=white, minimum size=0.7em] {};
		
		\draw (3,4) node () [draw, circle, fill=white, minimum size=1.2em] {};
		\draw (3,6) node () [draw, circle, fill=white, minimum size=1.2em] {};
		\draw (4,4) node () [draw, circle, fill=white, minimum size=1.2em] {};
		\draw (4,6) node () [draw, circle, fill=white, minimum size=1.2em] {};
		\draw (8,4) node () [draw, circle, fill=white, minimum size=1.2em] {};
		\draw (8,6) node () [draw, circle, fill=white, minimum size=1.2em] {};
		
	\end{tikzpicture}
\end{tabular}
%}%
\end{center}
\caption{Patter of positions with non-zero entries for matrices having $ (X,Y) $ as a minimal cover. Here, $ s = |X| $ is the number of horizontal lines and $ r-s = |Y| $ is the number of vertical lines. The number of circles is thus $ s(r-s) $.}
\label{fig ball size upper bound}
\end{figure}

In this figure, horizontal lines correspond to the rows indexed by $ X $ and vertical lines correspond to columns indexed by $ Y $. A matrix in $ \mathbb{F}_q^{m \times n} $ having $ (X,Y) $ as a minimal cover would have zeros everywhere except in the lines shown in Fig. \ref{fig ball size upper bound}. Furthermore, in the positions depicted with circles, it may have any value from $ \mathbb{F}_q $. Since there are $ s(r-s) $ such circles, there are $ q^{s(r-s)} $ possibilities for such entries. On the other hand, when removing the positions depicted by circles from a given line, the remaining entries must form a nonzero vector (whose nonzero entries are depicted with squares). Since there are $ s $ horizontal lines and $ r-s $ vertical lines, this means that there are $ (q^{n-r+s}-1)^s $ possibilities for the horizontal lines and $ (q^{m-s}-1)^{r-s} $ possibilities for the vertical lines. In total, we have exactly $ (q^{m-s}-1)^{r-s}(q^{n-r+s}-1)^s q^{s(r-s)} $ matrices in $ \mathbb{F}_q^{m \times n} $ with $ (X,Y) $ as a minimal cover.

If we add such numbers, running over all cover patterns of cover weight $ r $, we obtain an upper bound on $ S_r^{m,n} $,
$$ {\rm UB}_r = \sum_{s=0}^r \binom{m}{s} \binom{n}{r-s} (q^{m-s} - 1)^{r-s} (q^{n-r+s}-1)^s q^{s(r-s)}. $$
In order to find the exact value of $ S_r^{m,n} $, we need to subtract from $ {\rm UB}_r $ all the matrices that we have double-counted when considering two of their minimal covers. This double-counting excess is the number $ {\rm DC}_r = {\rm UB}_r - S^{m,n}_r $.

Let $ (X,Y) , (X^\prime, Y^\prime) \in {\rm MC}(m,n) $ be such that $ (X,Y) \neq (X^\prime,Y^\prime) $ and $ |X| + |Y| = |X^\prime| + |Y^\prime| = r $. Let $ \omega = |X \cap X^\prime | + |Y \cap Y^\prime | $ be the number of lines that $ (X,Y) $ and $ ( X^\prime,Y^\prime) $ have in common. Clearly, $ 0 \leq \omega \leq r-1 $ since $ (X,Y) \neq (X^\prime,Y^\prime) $. Let $ u = | X\cap X^\prime | $ be the number of common rows, thus $ \omega - u = |Y \cap Y^\prime | $ is the number of common columns. In Fig. \ref{fig ball size double counting}, we represent the two covers $ (X,Y) $ and $ (X^\prime,Y^\prime) $.

\begin{figure} [!h]
\begin{center}
%\resizebox{\columnwidth}{!}{%
\begin{tabular}{c@{\extracolsep{1cm}}c}
	\begin{tikzpicture}[line width=1pt, scale=0.6]
		\tikzstyle{every node}=[inner sep=0pt, minimum width=4.5pt]
		%\path[fill=gray] (2,0) -- (0,2) -- (-2,0) -- cycle;
		
		%\draw (3,5) node (v1) [draw, circle, fill=gray] {};

		\draw[thick] (2,0) -- (2,10);
		\draw[thick] (3,0) -- (3,10);
		\draw[thick, dashed] (6,0) -- (6,10);
		\draw[thick, dotted] (7,0) -- (7,10);
		\draw[thick, dashed] (9,0) -- (9,10);
		\draw[thick, dotted] (10,0) -- (10,10);
		
		\draw[thick, dashed] (0,9) -- (12,9);
		\draw[thick] (0,8) -- (12,8);
		\draw[thick] (0,7) -- (12,7);
		\draw[thick, dotted] (0,5) -- (12,5);
		\draw[thick, dashed] (0,3) -- (12,3);
		\draw[thick, dotted] (0,2) -- (12,2);
		
		\draw (2,8) node () [draw, cross out, fill=white, minimum size=0.7em] {};
		\draw (3,8) node () [draw, cross out, fill=white, minimum size=0.7em] {};
		\draw (2,7) node () [draw, cross out, fill=white, minimum size=0.7em] {};
		\draw (3,7) node () [draw, cross out, fill=white, minimum size=0.7em] {};
		
		\draw (2,9) node () [draw, circle, fill=white, minimum size=0.7em] {};
		\draw (3,9) node () [draw, circle, fill=white, minimum size=0.7em] {};
		\draw (2,5) node () [draw, circle, fill=white, minimum size=0.7em] {};
		\draw (3,5) node () [draw, circle, fill=white, minimum size=0.7em] {};
		\draw (2,3) node () [draw, circle, fill=white, minimum size=0.7em] {};
		\draw (3,3) node () [draw, circle, fill=white, minimum size=0.7em] {};
		\draw (2,2) node () [draw, circle, fill=white, minimum size=0.7em] {};
		\draw (3,2) node () [draw, circle, fill=white, minimum size=0.7em] {};
		
		\draw (6,8) node () [draw, circle, fill=white, minimum size=0.7em] {};
		\draw (6,7) node () [draw, circle, fill=white, minimum size=0.7em] {};
		\draw (7,8) node () [draw, circle, fill=white, minimum size=0.7em] {};
		\draw (7,7) node () [draw, circle, fill=white, minimum size=0.7em] {};
		\draw (9,8) node () [draw, circle, fill=white, minimum size=0.7em] {};
		\draw (9,7) node () [draw, circle, fill=white, minimum size=0.7em] {};
		\draw (10,8) node () [draw, circle, fill=white, minimum size=0.7em] {};
		\draw (10,7) node () [draw, circle, fill=white, minimum size=0.7em] {};
		
		\draw (6,2) node () [draw, diamond, fill=white, minimum size=0.9em] {};
		\draw (9,2) node () [draw, diamond, fill=white, minimum size=0.9em] {};
		\draw (6,5) node () [draw, diamond, fill=white, minimum size=0.9em] {};
		\draw (9,5) node () [draw, diamond, fill=white, minimum size=0.9em] {};
		
		\draw (7,3) node () [draw, rectangle, fill=white, minimum size=0.7em] {};
		\draw (10,3) node () [draw, rectangle, fill=white, minimum size=0.7em] {};
		\draw (7,9) node () [draw, rectangle, fill=white, minimum size=0.7em] {};
		\draw (10,9) node () [draw, rectangle, fill=white, minimum size=0.7em] {};
		
	\end{tikzpicture}
\end{tabular}
%}%
\end{center}
\caption{Patter of positions with non-zero entries for matrices having both $ (X,Y) $ and $ (X^\prime,Y^\prime) $ as minimal covers. Here, $ u = |X \cap X^\prime| $ and $ \omega - u = |Y \cap Y^\prime| $ are the numbers of solid horizontal and vertical lines, respectively, $ s = |X| - |X\cap X^\prime| $ and $ r-\omega-s = |Y| - |Y\cap Y^\prime| $ are the numbers of dashed horizontal and vertical lines, respectively, and $ t = |X^\prime| - |X\cap X^\prime| $ and $ r-\omega-t = |Y^\prime| - |Y\cap Y^\prime| $ are the numbers of dotted horizontal and vertical lines, respectively.}
\label{fig ball size double counting}
\end{figure}
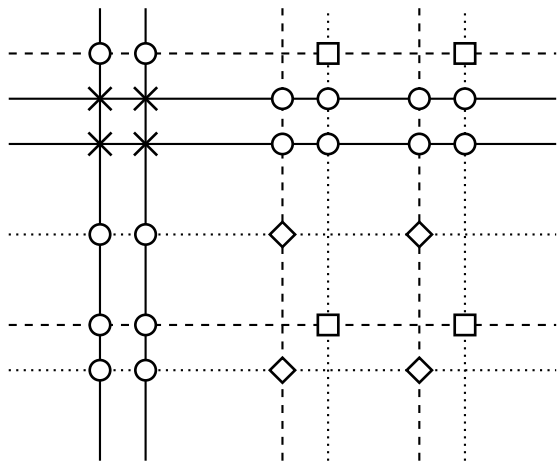

In this figure, horizontal lines, solid and dashed, correspond to the rows indexed by $ X $; horizontal lines, solid and dotted, correspond to rows indexed by $ X^\prime $; vertical lines, solid and dashed, correspond to columns indexed by $ Y $; and vertical lines, solid and dotted, correspond to columns indexed by $ Y^\prime $. In other words, solid lines belong to both $ (X,Y) $ and $ (X^\prime, Y^\prime) $.

Consider matrices in $ \mathbb{F}_q^{m \times n} $ having both $ (X,Y) $ and $ (X^\prime, Y^\prime) $ as minimal covers. First, such matrices would have zeros everywhere except in the solid lines and in the intersections between a dashed line and a dotted line, depicted with diamonds and squares in Fig. \ref{fig ball size double counting}. We now check how many possibilities there are for such entries for such matrices, thus counting the number of such matrices.

Such matrices may have any values in $ \mathbb{F}_q $ in the positions depicted with crosses (they do not determine whether a solid line is redundant or not). Since there are $ u(\omega - u) $ such crosses, there are $ q^{u(\omega - u)} $ possibilities for such entries. 

Next, the $ t(r-\omega-s) $ entries depicted with diamonds must form a matrix in $ \mathbb{F}_q^{t \times (r-\omega - s)} $ of maximum cover weight: If one row or column in such a $ t \times (r-\omega-s) $ matrix is zero, then the corresponding (dashed or dotted) line would be redundant. Thus there are $ S^{t, r-\omega-s}_{\min \{ t, r-\omega-s \}} $ possibilities for such entries. Analogously, there are $ S^{s, r-\omega-t}_{\min \{ s, r-\omega-t \}} $ possibilities for the $ s (r-\omega-t) $ entries depicted with squares. 

Finally, we need to check which values in the entries not depicted with crosses make a solid line non-redundant for both $ (X,Y) $ and $ (X^\prime,Y^\prime) $. Consider a vertical line. If the matrix contains nonzero entries in the indices not depicted with crosses or circles in Fig. \ref{fig ball size upper bound}, then the line is non-redundant for both $ (X,Y) $ and $ (X^\prime,Y^\prime) $. There are $ q^{t+s} (q^{m-u-s-t}-1) $ possibilities for such a case (the entries in the $ t+s $ depicted with circles may have any values, and the entries depicted with crosses are not considered in this paragraph). The other option is that all the entries in the indices not depicted with crosses or circles in Fig. \ref{fig ball size upper bound} are zero. In that case, the entries indexed by circles and part of a dotted line in Fig. \ref{fig ball size upper bound} must form a nonzero vector, since otherwise the solid line is redundant for the cover $ (X,Y) $. Similarly, the entries indexed by circles and part of a dashed line in Fig. \ref{fig ball size upper bound} must form a nonzero vector. There are $ (q^s-1)(q^t-1) $ possibilities for such a case. Thus, there are
$$ q^{t+s} (q^{m-u-s-t}-1) + (q^s-1)(q^t-1) $$
possibilities for a given vertical line to not be redundant for both $ (X,Y) $ and $ (X^\prime, Y^\prime) $. Similarly, there are 
$$ q^{(r-\omega-t)+(r-\omega-s)}(q^{n-(\omega-u)-(r-\omega-t)-(r-\omega-s)} - 1) + (q^{r-\omega-s}-1)(q^{r-\omega-t} - 1) $$
possibilities for a given horizontal line to not be redundant for both $ (X,Y) $ and $ (X^\prime,Y^\prime) $. Since the redundancy of a line (horizontal or vertical) is independent of any other line (horizontal or vertical), and there are $ \omega - u $ vertical lines and $ u $ horizontal lines, then there are
\begin{equation*}
\begin{split}
& \left( q^{t+s}(q^{m-u-s-t} - 1) + (q^s-1) (q^t-1) \right)^{\omega - u} \\
& \cdot \left( q^{(r-\omega-t)+(r-\omega-s)}(q^{n-(\omega-u)-(r-\omega-t)-(r-\omega-s)} - 1) + (q^{r-\omega-s}-1)(q^{r-\omega-t} - 1) \right)^u 
\end{split}
\end{equation*}
possibilities for all solid lines not to be redundant.

We have exactly counted the matrices in $ \mathbb{F}_q^{m \times n} $ that have both $ (X,Y) $ and $ (X^\prime,Y^\prime) $ as minimal covers. However, some of them may have other covers as minimal covers. Hence, the number of matrices that we have double-counted in $ {\rm UB}_r $, which is exactly $ {\rm DC}_r = {\rm UB}_r - S^{m,n}_r \geq 0 $, satisfies the upper bound in the theorem. Consider now the matrices that have both $ (X,Y) $ and $ (X^\prime, Y^\prime) $ as minimal covers, but all of whose entries depicted with diamonds and squares are nonzero. Clearly, those matrices cannot have a third cover of cover weight $ r $ as a minimal cover. Since these are only some of the double-counted matrices, our double-counting excess $ {\rm DC}_r $ satisfies the lower bound in the theorem.
%Note that such matrices cannot have any other cover of weight $ r $. To see this, we only need to consider two cases. First, consider a third minimal cover $ (X^{\prime \prime}, Y^{\prime \prime}) $ of weight $ r $ that contains a new line (horizontal or vertical) not included in $ (X,Y) $ or $ (X^\prime,Y^\prime) $ (i.e., not depicted with any line of any type in Fig. \ref{fig ball size upper bound}). Just by looking at the pattern of zeros of the matrix (Fig. \ref{fig ball size upper bound}), we see that such a new line must be redundant and the cover $ (X^{\prime \prime}, Y^{\prime \prime}) $ cannot be minimal. Second, the cover $ (X^{\prime \prime}, Y^{\prime \prime}) $ only contains lines depicted in Fig. \ref{fig ball size upper bound}. Since this new cover is different from $ (X,Y) $ and $ (X^\prime,Y^\prime) $, then it must contain one dashed horizontal line and one dotted vertical line (or vice versa). 
\end{proof}

\end{document}